%% file: AL-path.tex
\documentclass[a4paper,USenglish,cleveref,numberwithinsect,autoref,thm-restate]{lipics-v2021}

\input{mypackages.tex}
\input{preamble.tex}

\title{On the Parameterized Tractability of Packing Vertex-Disjoint {\sf A}-Paths with Length Constraints\thanks{A preliminary version of this paper has appeared in proceedings of MFCS-2024 \cite{BandopadhyayBMS24}}} 

\titlerunning{Parameterized Tractability of {\sf A}-Path Packing} 

\author{Susobhan Bandopadhyay}{Tata Institute of Fundamental Research, Mumbai, India}{susobhan.bandopadhyay@tifr.res.in}{https://orcid.org/0000-0003-1073-2718}{}
 \author{Aritra Banik}{National Institute of Science, Education and Research, An OCC of Homi Bhabha National Institute, Bhubaneswar 752050, Odisha, India}{aritra@niser.ac.in}{}{}
 \author{Diptapriyo Majumdar}{Indraprastha Institute of Information Technology Delhi, New Delhi, India}{diptapriyo@iiitd.ac.in}
 {https://orcid.org/0000-0003-2677-4648}{Supported by Science and Engineering Research Board (SERB) grant SRG/2023/001592.}
 \author{Abhishek Sahu}{National Institute of Science, Education and Research, An OCC of Homi Bhabha National Institute, Bhubaneswar 752050, Odisha, India}{abhisheksahu@niser.ac.in }{}{}

\authorrunning{Bandopadhyay et al.}

\Copyright{Bandopadhyay et al.}

\ccsdesc[500]{Theory of computation~Fixed parameter tractability}

\keywords{Parameterized complexity, $(A,\ell)$-Path Packing, Kernelization, Randomized-Exponential Time Hypothesis, Graph Classes, New Expansion Lemma} 

\relatedversion{} 

 \acknowledgements{We express our sincere gratitude to Professor Karthik C. S. and Professor Saket Saurabh for their invaluable suggestions and enlightening discussions.}

\nolinenumbers 

\EventEditors{John Q. Open and Joan R. Access}
\EventNoEds{2}
\EventLongTitle{42nd Conference on Very Important Topics (CVIT 2016)}
\EventShortTitle{CVIT 2016}
\EventAcronym{CVIT}
\EventYear{2016}
\EventDate{December 24--27, 2016}
\EventLocation{Little Whinging, United Kingdom}
\EventLogo{}
\SeriesVolume{42}
\ArticleNo{23}

\begin{document}

\maketitle



\input{abstract.tex}

\newpage

\tableofcontents

\newpage
\input{intro-new}
 \input{prelims}
\input{hardness}


\input{CVD+A}
\input{CVD+l-Aritra}

\input{vertex-cover}

\input{conclusion}

\input{bbl-references}

\end{document}

%% file: mypackages.tex
\usepackage{graphicx}
\usepackage{tcolorbox}
\usepackage[ruled,vlined,linesnumbered]{algorithm2e}
\usepackage{pdfpages}
\usepackage{amsfonts}
\usepackage{mathtools}
\usepackage{mathrsfs}
\usepackage{xspace}
\usepackage{multicol}
\usepackage{amsmath}
\usepackage{amsthm}
\usepackage{comment}
\usepackage{xcolor}
\usepackage{pdfpages}
\usepackage{amssymb}
\usepackage{array}
\usepackage{comment}
\usepackage{calc}
\usepackage{thm-restate}
\usepackage{cleveref}
\usepackage{thmtools}
\newtheorem{redrule}{\bf Reduction Rule}[section]

\usepackage{dsfont}

\usepackage{tikz,lipsum,lmodern}

\tcbuselibrary{breakable}
\tcbuselibrary{skins}
\PassOptionsToPackage{dvipsnames}{xcolor}

\usepackage{tcolorbox}
\usepackage{framed}
\usepackage{xspace}
\usepackage{optidef}
\usepackage{todonotes}

\tcbuselibrary{breakable}
\tcbuselibrary{skins}
\usepackage{hyperref}
\usepackage{dsfont}
\usepackage{tikz,lipsum,lmodern}
\usepackage{graphicx}
\usepackage{pdfpages}
\usepackage{calc}
\usepackage{mathrsfs}
\usepackage{array}
\usepackage{libertine}
\usepackage{cases}
\usepackage{mathtools}
\usepackage{mdframed}

%% file: preamble.tex
\newcommand{\tw}{{\normalfont \textsf{tw}}}

\newcommand{\vc}{\normalfont \textsf{vc}}
\newcommand{\cvd}{\normalfont \textsf{cvd}}

\newcommand{\dtp}{\normalfont \textsf{dtp}}

\newcommand{\ccP}{{\mathcal{P}}}

\newcommand{\X}{{\mathcal{X}}}

\newcommand{\cR}{\mathcal{R}}

\newcommand{\cQ}{\mathcal{Q}}

\newcommand{\WOH}{\textsf{W[1]}-hard\xspace}

\newcommand{\pnph}{Para-\textsf{NP}-hard\xspace}

\newcommand{\woh}{\textsf{W[1]}\textrm{-hard}\xspace}
\newcommand{\woc}{\textsf{W[1]}\textrm{-complete}\xspace}

\newcommand{\PNPH}{\textsf{Para-NP}\textrm{-hard}\xspace}

\newcommand{\AAA}{{\mathcal A}}

\newcommand{\JJ}{{\mathcal J}}

\newcommand{\MM}{{\mathcal M}}

\newcommand{\shortversion}[1]{}

\newcommand{\OO}{{\mathcal O}}

\newcommand{\rETH}{r\textsf{ETH}\xspace}

\newcommand{\nn}{{\mathbb N}}
\newcommand{\rr}{{\mathbb R}}

\newcolumntype{L}[1]{>{\raggedright\let\newline\\\arraybackslash\hspace{0pt}}m{#1}}
\newcolumntype{C}[1]{>{\centering\let\newline\\\arraybackslash\hspace{0pt}}m{#1}}
\newcolumntype{R}[1]{>{\raggedleft\let\newline\\\arraybackslash\hspace{0pt}}m{#1}}

\numberwithin{equation}{section}

\newcommand{\M}{\mathcal{M}}
\newcommand{\F}{\mathcal{F}}

\newcommand{\I}{\mathcal{I}}
\newcommand{\J}{\mathcal{J}}

\renewcommand{\P}{\mathcal{P}}

	\Crefname{observation}{Observation}{Observations}
	\Crefname{claim}{Claim}{Claims}
	\Crefname{subsection}{Subsection}{Subsections}
	\Crefname{figure}{Figure}{Figures}
	
	\newtheoremstyle{mystyle}
	{\topsep}
	{\topsep}
	{\itshape}
	{}
	{\itshape}
	{.}
	{ }
	{\thmname{#1}\thmnumber{ #2}\thmnote{ (#3)}}%

	\theoremstyle{mystyle}

	\Crefname{myclaim}{Claim}{Claims}

	\definecolor{mygray}{gray}{0.6}
	\definecolor{lightblue}{RGB}{245, 251, 255}

	\newcommand{\ALP}{{\sc $(A, \ell)$-Path Packing}\xspace}
	\newcommand{\alpp}{{\sf ALPP}\xspace}

	\newcommand{\fpt}{\textsf{FPT}\xspace}
	\newcommand{\yes}{{\sf YES}\xspace}
	
	\newcommand{\Pe}{\mathcal{P}\xspace}
	\newcommand{\Ce}{\mathcal{C}\xspace}

	\newcommand{\Pb}{\mathbb{P}}
	\newcommand{\Qq}{\mathcal{Q}}
	\newcommand{\PM}{\mathcal{P}^{\textsc{M}}}
	\newcommand{\PQ}{\mathcal{P}^{\textsc{Q}}}

        \newcommand{\pw}{{\sf pw}\xspace}

	\newcommand{\defproblem}[3]{
		
		\begin{tcolorbox}[colback=gray!5!white,colframe=gray!75!black]
  \vspace{-5pt}
				\begin{tabular*}{\textwidth}{@{\extracolsep{\fill}}lr} #1   \\ \end{tabular*}
				{\bf{Input:}} #2  \\
				{\bf{Question:}} #3
    \vspace{-5pt}
			\end{tcolorbox}
		}

\newenvironment{tightcenter}
{\parskip=0pt\par\nopagebreak\centering}
{\par\noindent\ignorespacesafterend}

\usepackage{tikz}
\usetikzlibrary{calc}
\usepackage{xargs}
\usepackage{xifthen}
\usepackage{framed}

\usepackage{ctable}

\newlength{\RoundedBoxWidth}
\newsavebox{\GrayRoundedBox}
\newenvironment{GrayBox}[1]%
{\setlength{\RoundedBoxWidth}{\textwidth-4.5ex}
	\def\boxheading{#1}
	\begin{lrbox}{\GrayRoundedBox}
		\begin{minipage}{\RoundedBoxWidth}%
		}{%
		\end{minipage}
	\end{lrbox}%
	\begin{tightcenter}%
		\begin{tikzpicture}%
			\node(Text)[draw=black!60,fill=white,rounded corners,%
			inner sep=2ex,text width=\RoundedBoxWidth]%
			{\usebox{\GrayRoundedBox}};
			\coordinate(x) at (current bounding box.north west);
			\node [draw=white,rectangle,inner sep=3pt,anchor=north west,fill=white] 
			at ($(x)+(6pt,.75em)$) {\boxheading};
		\end{tikzpicture}
	\end{tightcenter}\vspace{0pt}%
	\ignorespacesafterend
}    

\newenvironment{problem-box}[2][]{\noindent\ignorespaces%
	\FrameSep=6pt%
	\parindent=0pt%
	\vspace*{-.5em}
	\ifthenelse{\isempty{#1}}{%
		\begin{GrayBox}{\textsc{#2}}%
		}{%
			}

			\begin{tabular*}{\textwidth}{@{\hspace{.1em}} >{\itshape} p{1.2cm} p{0.85\textwidth} @{}}%
			}{
			\end{tabular*}%
		\end{GrayBox}%
		\vspace*{-.5em}
		\ignorespacesafterend
	} 
	
	\newtheorem*{theorem*}{Theorem}

%% file: abstract.tex
\begin{abstract}
Given an undirected graph $G$ and a set $A \subseteq V(G)$, an $A$-path is a path in $G$ that starts and ends at two distinct vertices of $A$ with intermediate vertices in $V(G)\setminus A$. An $A$-path is called an $(A,\ell)$-path if the length of the path is exactly $\ell$.
In the {\ALP} problem (\alpp), we seek to determine whether there exist $k$ vertex-disjoint $(A, \ell)$-paths in $G$ or not.
The problem is already known to be fixed-parameter tractable (FPT) when parameterized by $k+\ell$ using color coding while it remains {\PNPH} when parameterized by $k$ (\textsc{Hamiltonian Path}) or $\ell$ (\textsc{$P_3$-Partition}) alone.
Therefore, a logical direction to pursue this problem is to examine it in relation to structural parameters.
Belmonte et al. [Algorithmica 2022] initiated a study along these lines and proved that {\alpp} parameterized by ${\pw}+|A|$ is {\WOH} where ${\pw}$ is the pathwidth of the input graph. 
In this paper, we strengthen their result and prove that it is unlikely that {\alpp} is fixed-parameter tractable even with respect to a bigger parameter $(|A|+\dtp)$ where {\dtp} denotes the distance from the input graph and a path graph (distance to path).
 Following this, we consider the parameters ${\cvd}+|A|$ and ${\cvd} + \ell$ and provide FPT algorithms with respect to these parameters.
Finally, when parameterized by ${\vc}$, the vertex cover number of the input graph, we provide an $\OO({\vc}^2)$ vertex kernel for {\ALP}. 
\end{abstract}

%% file: intro-new.tex
\section{Introduction}
\label{sec:intro}
{ 
{\sc Disjoint Path} problems form a fundamental class within algorithmic graph theory. These well-studied problems seek the largest collection of vertex-disjoint (or edge-disjoint) paths that satisfy specific additional constraints. Notably, in the absence of such constraints, the problem reduces to the classical {\sc Maximum Matching} problem.
One of the most well-studied variants for disjoint path problems is the {\sc Disjoint $s$-$t$ Path} where given a graph $G$ and two vertices $s$ and $t$, the objective is to find the maximum number of internally vertex-disjoint paths between $s$ and $t$.
This problem is polynomial-time solvable by reducing to {\sc Max-Flow} problem using Menger's Theorem.

Another classical version of the disjoint path problem is {\sc Mader's $\mathcal{S}$-Path}. 
For a graph $G$ and a set $\mathcal{S}$ of disjoint subsets of $V(G)$, an $\mathcal{S}$-Path is a path between two vertices in different members of $\mathcal{S}$. 
Given $G$ and $\mathcal{S}$, the objective of Mader's $\mathcal{S}$-Path problem is to find the maximum number of vertex-disjoint $\mathcal{S}$-paths. 
It is solvable in polynomial time, as demonstrated by Chudnovsky, Cunningham, and Geelen~\cite{Chudnovsky2008AnAF}. One closely related and well-known variant of Mader's $\mathcal{S}$-Path problem is the {\sc $A$-Path Packing} problem~\cite{BelmonteHKKKKLO22, BruhnU22,ChudnovskyGGGLS06,Pap08,HiraiP14}. 
Given a graph $G$ and a subset of vertices $A$, an {\em $A$-path} is a path in $G$ that starts and ends at two distinct vertices of $A$, and the internal vertices of the path are from $V(G) \setminus A$. 
The {\sc $A$-Path Packing} problem aims to find the maximum number of vertex disjoint $A$-paths in $G$. {\sc $A$-Path Packing} problem can be modeled as an $\mathcal{S}$-path problem, where for every vertex $v$, we create a set $\{v\}$ in $\mathcal{S}$.
Consequently, the {\sc $A$-Path Packing} problem becomes polynomial-time solvable.

Recently, Golovach and Thilikos~\cite{MengerTheoremLength} have explored an interesting variant of the classical $s$-$t$ path problem known as \textsc{Bounded $s$-$t$ Path} problem by introducing additional constraints on path lengths. 
In this variant, given a graph $G$, two distinct vertices $s$ and $t$, and an integer $\ell$, one seeks to find the maximum number of vertex disjoint paths between $s$ and $t$ of length at most $\ell$.
Surprisingly, the problem becomes hard with this added constraint in contrast to the classical $s$-$t$ path problem.
In a similar line of study, Belmonte et al.~\cite{BelmonteHKKKKLO22} considered the following variant of the {\sc $A$-Path Packing} problem.

\defproblem{{\ALP} (\alpp)}{An undirected graph $G = (V, E)$, $ A \subseteq V(G)$ and integers $k$ and $\ell$.}{Are there $k$ vertex-disjoint $A$-paths each of length $\ell$ in $G$?}

This version of {\sc $A$-Path Packing} problem is also proved to be intractable~\cite{BelmonteHKKKKLO22}.
While considering this problem in the parameterized framework, the two most natural parameters for {\alpp} are the solution size $k$ and the length constraint $\ell$. 
While parameterized by the combined parameter of $k+\ell$, the problem admits an easy {\fpt} algorithm via {\em color-coding}, parameterized by the individual parameters  $k$ and $\ell$, the problem becomes {\pnph} due to reductions from {\sc Hamiltonian Path} for $k$ and {\sc $P_3$-Packing} for $\ell$. 

Although it may appear that one has exhausted the possibilities for exploration of the problem within the parameterized framework, another set of parameters, known as \emph{structural parameters}, emerges, allowing for further investigation.
Belmonte et al.~\cite{BelmonteHKKKKLO22} initiated this line of study by considering the size of set $A$ $(|A|)$ as a parameter. They proved that the problem is {\woh} parameterized by the combined parameter, pathwidth of the graph and the size of $A$, i.e. ${\pw} +|A|$.
This {\woh} result translates to the {\woh} result when parameterized by $\tw+|A|$, i.e. the combined parameter treewidth of the input graph, and $|A|$.
 
The intractability result for the parameter ${\tw}+|A|$ refutes the possibility of getting {\fpt} algorithms for many well-known structural parameters.
Nonetheless, one of the objectives of structural parameterization is to delimit the border of the tractability of the problem, i.e., determining the smallest parameter for which the problem becomes {\fpt} or the largest parameters that make the problem {\sf W}-hard. 
Therefore one natural direction is to study {\alpp} with respect to parameters that are either larger than or incomparable to ${\tw} + |A|$. 

\subparagraph*{Our Contribution:}
As our first result, we improve upon the hardness result of Belmonte et al. \cite{BelmonteHKKKKLO22} by showing hardness for a much larger parameter, ${\dtp} + |A|$.
Here $\dtp$ denotes the distance from the input graph to a path graph (formal definitions of all the parameters can be found in \Cref{sec:prelim}).
In particular, our algorithmic lower bound result is a W-hardness result that proves something stronger under {\rETH} (Conjecture \ref{conj:rETH}) which we state below.

\begin{restatable}{theorem}{finalresult}
\label{thm:ALP-hardness-result}
Unless Conjecture \ref{conj:rETH} fails, there is no randomized algorithm for {\ALP} that runs in $f({\dtp} +|A|)\cdot n^{o(\sqrt{{\dtp} +|A|})}$-time and correctly decides with probability at least $2/3$ when ${\dtp}$ is the distance to a path graph.
\end{restatable}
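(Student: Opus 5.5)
We prove the bound by a parameterized reduction from a problem whose hardness under \rETH is already known with a square-root lower bound. The natural source is \textsc{Unary Bin Packing}: given $n$ items with sizes $s_1,\dots,s_n$ in unary and $b$ bins of capacity $B$, decide whether the items can be packed. Under \rETH it has no $f(b)\cdot n^{o(b/\log b)}$ algorithm (Jansen et al.). A cleaner source for a $\sqrt{\cdot}$ bound is \textsc{Unary Exact Set Cover}-type or \textsc{$k$-Sum}-type problems. I would first try the closely related \textsc{Unary Bin Packing} with parameter $b$. The goal is an \alpp instance with $\dtp+|A| = \OO(b^2)$, or more generally $\OO(p^2)$ for source parameter $p$. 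That quadratic blow-up is exactly what turns an $n^{o(p)}$ lower bound into $n^{o(\sqrt{\dtp+|A|})}$. The randomized aspect is inherited from the source lower bound.

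The construction uses one long path $P$, the ``spine'', which carries all item information. Items are encoded as consecutive segments of $P$ whose lengths are proportional to the sizes $s_i$ (scaled by a large factor), separated by short connector gadgets. The set $A$ together with a few extra vertices forms a deletion set $D$ with $G-D$ a path. Concretely, each bin $j$ gets $\OO(1)$ terminals in $A$. Pairs of bins, or bins together with ``slots'', share $\OO(1)$ auxiliary non-$A$ vertices adjacent to chosen positions on $P$; these let a path enter and leave the spine at boundaries between items. Since $D$ contains only $\OO(b^2)$ vertices, and the paper allows $A$-vertices to be modifications outside the path, $\dtp+|A| = \OO(b^2)$. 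The length $\ell$ and $k$ are set so that a solution must route, for each bin, an $(A,\ell)$-path that collects a set of item segments summing exactly to $B$ (times the scaling). Dummy filler segments make every length come out exactly $\ell$. Forcing $k$ to equal the number of terminal pairs ensures every terminal is used.

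Correctness proceeds in two directions. For the forward direction, a packing yields disjoint paths: each bin's path traverses exactly the spine segments of its items and hops between them via the auxiliary vertices. For the backward direction, the scaling and disjointness force each path to consist of whole item segments. A path cannot stop mid-segment, because the non-$A$ internal spine vertices have degree $2$ and the entry/exit points are only at boundaries. Large scaling plus padding forces the collected sizes to sum to exactly $B$.

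The main obstacle is the ordering issue. Because the spine is a single path, a bin's items need not be contiguous, so a path must jump between distant spine segments. It can do so only through the bounded-size set $D$, and each vertex of $D$ can be used once. This is why $\Theta(b^2)$ auxiliary vertices are needed rather than $\OO(b)$. I would handle it by processing items in order along the spine, with a ``switch'' vertex for each ordered pair of bins (current owner $\to$ next owner). Each path segment between consecutive switches is then a contiguous spine run. Since the number of switches can be large, this has to be reconciled with one-use vertices. The remedy is to reduce first to a variant in which each bin's items are \emph{ordered blocks}, for instance via a multicolored/grid-tiling-style source problem with a $\sqrt{\cdot}$ lower bound (\textsc{Grid Tiling} with $p^2$ cells, lower bound $n^{o(p)}$, hence $n^{o(\sqrt{p^2})}$). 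This is the step I would invest most effort in verifying.
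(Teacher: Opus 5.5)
\textbf{There is a genuine gap: the proposal is a plan, not yet a proof.} The reduction is the entire content of the theorem, and it is left unconstructed; the concrete design you do describe cannot work.

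\emph{The spine design cannot encode bin packing.} A path can move from one contiguous run of the spine to a different one only through a vertex of the deletion set $D$, and each such vertex is used at most once. So all $k$ paths together consist of only $\OO(|D|)$ maximal spine runs. A bin whose items sit at arbitrary positions of a spine fixed in advance can therefore never be collected, whatever scaling you use. Your remedy (``first reduce to a variant with ordered blocks'', e.g.\ \textsc{Grid Tiling}) is only named, never built.

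\emph{The source is quantitatively too weak.} \textsc{Unary Bin Packing} only excludes $f(b)\,n^{o(b/\log b)}$. A quadratic parameter blow-up would then give at best $n^{o(\sqrt{p}/\log p)}$, weaker than the claimed bound.

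\emph{Spurious paths are not excluded.} ``Large scaling plus padding'' does not address this. Since $|D|$ is bounded by the parameter, each modulator vertex must be adjacent to many spine positions. A path can therefore enter and leave the spine at \emph{unintended} positions. You need an argument that no such spurious combination of runs has total length exactly $\ell$, and nothing in the plan provides one.

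\textbf{How the paper avoids these problems.} It reduces from \textsc{$k$-Independent Set} on $2$-interval graphs. That problem already excludes $f(k)\,n^{o(\sqrt{k})}$, because the Clique reduction of Fellows et al.\ has $k'=k+3\binom{k}{2}$. So the square root arises upstream, and the reduction to \alpp only needs to be linear, with $|V_M|=4k$.

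Each intended path is $a_i\cdot\lambda[L(I_j^a),R(I_j^a)]\cdot b_i\cdot\lambda[L_j,R_j]\cdot c_i\cdot\lambda[L(I_j^b),R(I_j^b)]\cdot d_i$. That is four modulator vertices and three spine runs. The private block $C_j$ has $8nN-n_j$ points, and consecutive blocks are separated by $\ell$ points. This forces every $(A,\ell)$-path to contain exactly one whole block, which identifies the intended $2$-interval.

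Spurious entry and exit points are eliminated by randomness. The number of points at each interval endpoint is drawn uniformly from $[N]$. The Separation Lemma then ensures that, w.h.p., all fewer than $n^4$ candidate $2$-intervals have distinct point counts. Hence the exact length $\ell=8nN+4$ forces the two outer runs to be exactly $I_j^a$ and $I_j^b$. Disjointness on the spine then yields $k$ pairwise disjoint $2$-intervals. This randomized reduction is why the statement is made under \rETH.

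\textbf{What your route would need.} First, a source carrying an $n^{o(p)}$ lower bound whose solutions decompose into only $\OO(p^2)$ contiguous spine runs. Second, an analogous distinct-lengths mechanism, random or otherwise, that gives every unintended run combination the wrong length.
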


We present a randomized reduction from a well-known {\woh} problem, which establishes the hardness of {\ALP} under the assumption of randomized Exponential Time Hypothesis (\rETH). 
The randomized reduction technique employed in our proof is highly adaptable and can be utilized to demonstrate the hardness of various analogous problems. 
We use the following lemma to prove our hardness result, which can be of independent interest.

\begin{restatable}[\rm Separation Lemma]{lemma}{generalizedisolation}
\label{lemma:generalized-isolation}
Let $(X,\mathcal{F})$ be a set system where $\mathcal{F}$ is a family of subsets of $X$. For an arbitrary assignment of weights $w:X \mapsto [M]$, let $w(S)=\sum_{x\in S} w(x)$ denote the weight of the subset $S\subseteq X$. For any random assignment of weights to the elements of $X$ independently and uniformly from $[M]$, with probability at least $1-\frac{{|\mathcal{F}| \choose 2}}{M}$, each set $S\in \mathcal{F}$ has a unique weight.
\end{restatable}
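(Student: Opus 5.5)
Our plan is a union bound over pairs of distinct sets, where each pair is handled by the principle of deferred decisions. The event that some two sets share a weight is the union, over all unordered pairs $\{S,T\}$ of distinct members of $\mathcal{F}$, of the events $E_{S,T} = \{w(S)=w(T)\}$. There are ${|\mathcal{F}| \choose 2}$ such pairs. So it suffices to prove
\[
\Pr[E_{S,T}] \le \tfrac{1}{M}
\]
for every fixed pair $S\neq T$. The union bound then gives failure probability at most ${|\mathcal{F}| \choose 2}/M$, which is the claimed success probability.

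To bound $\Pr[E_{S,T}]$, we use that $S\neq T$, so their symmetric difference is nonempty. By symmetry, assume there is an element $x\in S\setminus T$; otherwise swap the roles of $S$ and $T$. The condition $w(S)=w(T)$ can be rewritten as
\[
w(x) = \sum_{y\in T} w(y) - \sum_{y\in S\setminus\{x\}} w(y).
\]
The right-hand side depends only on the weights of elements other than $x$, because $x\notin T$.

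We then condition on an arbitrary fixing of all weights $w(y)$ with $y\neq x$. The right-hand side becomes a fixed integer $c$. Since $w(x)$ is independent of the other weights and uniform on $[M]$, we get $\Pr[w(x)=c]\le 1/M$: the probability is $1/M$ if $c\in[M]$ and $0$ otherwise. Averaging over all fixings of the other weights gives $\Pr[E_{S,T}]\le 1/M$, and the union bound completes the proof.

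There is no substantial obstacle. The one point needing care is the choice of $x$ from the symmetric difference, so that $x$ appears on only one side of the equation and its coefficient is nonzero. This is exactly why the statement needs the sets to be distinct and needs nothing else, such as bounded size. The argument uses only independence and the uniform distribution, as in the Isolation Lemma, but with a pairwise union bound in place of the minimum-weight argument.
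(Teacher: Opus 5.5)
Your proof is correct and follows the paper's argument. The paper likewise fixes an element $y_1$ of the symmetric difference, notes that $w(S_1)-w(S_2)+w(y_1)$ is independent of $w(y_1)$ so that a collision has probability at most $1/M$, and finishes with a union bound over the ${|\mathcal{F}| \choose 2}$ pairs. The only difference is cosmetic: the paper treats nested pairs separately (their weights differ with probability $1$ because weights are positive), while your choice of $x$ from either side of the symmetric difference covers that case without a separate argument.
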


In addition to refining the boundaries of hardness, we have also considered the problem with respect to the parameter of $\cvd$ denoting the cluster vertex deletion number in combination with the natural parameters $|A|$ and $\ell$. 
The incomparability of $\cvd$ with $\tw$ and $\pw$ makes it an intriguing parameter to explore.
We have proved that {\alpp} is fixed-parameter tractable with respect to the parameters ${\cvd} + |A|$ as well as ${\cvd} +\ell$.
Formally, we prove the following two fixed-parameter tractability results.

\begin{restatable}{theorem}{cvdplusA}
\label{thm:ALP-cvd-plus-A}
{\ALP} is {\fpt} when parameterized by ${\cvd} + |A|$.
\end{restatable}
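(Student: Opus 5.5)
Let $S$ be a cluster vertex deletion set of size ${\cvd}$; it can be computed in $\OO(1.92^{{\cvd}}\cdot n^{\OO(1)})$ time, and $G-S$ is a disjoint union of cliques. The plan has two steps. First we guess how a solution meets $S\cup A$. Then we show that the remaining task, routing path segments through the clique components, reduces to a bounded number of independent choices per clique. Once the guess is fixed, each clique only needs a bounded amount of information about itself.

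\textbf{Step 1: guessing the skeleton.} Any solution has at most $|A|/2$ paths, since each $(A,\ell)$-path uses two vertices of $A$. So if $k>|A|/2$ we reject, and otherwise $k\le |A|/2$. Every vertex of $S\setminus A$ lies on some solution path or on none. We guess:
\begin{itemize}
\item the set $S'\subseteq S$ of vertices of $S$ that are used;
\item for each path, its two endpoints in $A$;
\item the order in which the path visits its vertices of $S'$.
\end{itemize}
This gives at most $(|A|+{\cvd})^{\OO(|A|+{\cvd})}$ guesses. Fixing a guess splits every path into a bounded number of segments. Each segment is either a single edge between consecutive skeleton vertices (endpoints in $A$ or vertices of $S'$), or a subpath whose internal vertices all lie in $V(G)\setminus(S\cup A)$, that is, inside cliques. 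The number of segments is $\OO(|A|+{\cvd})$.

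Because consecutive internal vertices are adjacent and lie outside $S$, they are in the same clique. Hence every nonempty segment interior lies entirely in one clique $C$. Moreover, a segment with $t\ge 1$ internal vertices and skeleton ends $x,y$ exists in $C$ if and only if $C\setminus A$ contains distinct vertices $u\in N(x)$ and $v\in N(y)$ (with $u=v$ allowed when $t=1$), together with $t-2$ further unused vertices of $C\setminus A$. This works because $C$ is a clique, so the order of the internal vertices is free.

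\textbf{Step 2: handling lengths and the cliques.} For each segment we also guess whether it is nonempty. The total length of each path must be exactly $\ell$. This turns into the constraint that the number of interior vertices, summed over the segments of that path, equals a fixed value. Because clique interiors can be padded arbitrarily, we do not guess individual lengths, which could be large. Instead we assign segments to cliques and then check feasibility.

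Each clique $C$ gets a \emph{type} according to $N(v)\cap S$ for $v\in C\setminus A$. There are at most $2^{{\cvd}}$ neighborhood classes, and we record the counts in each class capped at $\OO(|A|+{\cvd})$, together with the size $|C\setminus A|$. A segment needs only two endpoint vertices with prescribed $S$-neighborhoods and then any number of filler vertices. So the capped counts decide whether a given set of segments with given endpoint requirements fits into $C$, provided the total filler demand is at most $|C\setminus A|$ minus the number of endpoint vertices used.

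To solve the whole instance we formulate an ILP. It uses integer variables for the lengths of the segments and assignment variables that place segments into cliques of each capped type; the number of capped types is bounded in $|A|+{\cvd}$. Only cliques of large size matter beyond their type; we can assume it suffices to use, for each assignment pattern, the largest cliques of a given type, which is a greedy exchange argument. The length equations and the capacity constraints then form an ILP whose number of variables is bounded by a function of $|A|+{\cvd}$. Lenstra's algorithm solves it in FPT time.

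\textbf{Expected main obstacle.} The hard part is Step 2, namely correctly handling the interaction of capacities when several segments share one clique. Each segment's interior lies entirely in one clique, but different segments, possibly from different paths, may share a clique, so their fillers compete for capacity. The reduction to finitely many clique types must also preserve the option of picking the largest cliques of a type. I would try the exchange argument first: among cliques with the same capped neighborhood profile, a segment set feasible in a smaller clique stays feasible in a larger one. After sorting, the ILP only has to decide how many segment-groups of each shape go to each type, and the capacity check then becomes a Hall-type/greedy condition that can be verified.
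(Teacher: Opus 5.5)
Your Step 1 is essentially the paper's guessing phase. The paper puts $A$ into the modulator, $M=A\cup S$, so the cliques of $G-M$ contain no $A$-vertices, and it guesses the ordered intersection of every solution path with $M$. Step 2, however, has a genuine gap, exactly where you flagged it.

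\textbf{The gap.} Several segments may share one clique, and then the clique's size bounds the \emph{sum} of their (unknown) lengths. At the same time, different groups of segments must occupy \emph{different} cliques. An ILP whose assignment variables only record how many segments, or segment-groups, go to each capped type cannot express this. Which capacity bounds which sum depends on the concrete clique each group sits in. Since the group demands are ILP variables, there is nothing to sort, and a ``Hall-type/greedy condition'' on variable demands is not a linear constraint.

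\textbf{The missing idea.} Both choices must become bounded combinatorial guesses before the ILP is written. First, guess the partition of the $\OO(|A|+|S|)$ segments into groups, one group per clique used (the paper's $\mathcal{X}_{\mathbb{P}}$). Second, commit to a distinct concrete clique for each group. The paper does the second step by colour coding. It colours the cliques with $|\mathcal{X}_{\mathbb{P}}|$ colours and keeps, in colour class $i$, a largest clique $Q_i$ that is feasible for group $\mathcal{P}_i$, i.e.\ it has distinct vertices adjacent to the required modulator vertices. Distinct colours then give distinct cliques, and a larger feasible clique can host the group whenever the solution's clique could. The ILP has one length equation per path and one constraint $\sum_{P_{j,j+1}\in\mathcal{P}_i}x_{j,j+1}\le|Q_i|$ per group. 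Derandomisation uses a universal family.

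\textbf{Completing your type-based route.} Your exchange idea can replace colour coding deterministically, but only with the same extra guessing:
\begin{enumerate}
\item Put $A$ into the modulator, so that no clique is forced by a guessed endpoint.
\item Guess the group partition and a capped type for each group.
\item For each type $\tau$, take the $r_\tau$ largest cliques of that type and guess the bijection between the $r_\tau$ groups of type $\tau$ and these cliques.
\end{enumerate}
The exchange works because the $j$-th largest of these cliques is at least as large as the $j$-th largest among any $r_\tau$ distinct cliques of type $\tau$. Hence a solution can be moved onto them, and the capacities become constants in linear constraints.

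\textbf{A further guess you need.} For every segment you must also guess whether it has $0$, $1$, or at least $2$ interior vertices (the paper's function $h$), and constrain its variable accordingly. For $t=1$ the single interior vertex must be a common neighbour, $u=v\in N(x)\cap N(y)$; this is required, not merely ``allowed''. Without this guess, the ILP could set a segment's length to $1$ when its feasibility was certified only by two distinct endpoint vertices.
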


\begin{restatable}{theorem}{cvdlShortALPPresult}
\label{thm:cvd+l-Short-ALPP-result}
{\ALP} is {\fpt} when parameterized by ${\cvd} + \ell$.
\end{restatable}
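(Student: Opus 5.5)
We prove the statement for parameter ${\cvd}+\ell$. First compute a cluster vertex deletion set $S$ with $|S| \le {\cvd}$, using the standard $\OO^*(2^{{\cvd}})$-type FPT algorithm: branch on induced $P_3$'s, which gives $\OO^*(3^{{\cvd}})$ time. Then $G-S$ is a disjoint union of cliques $C_1,\dots,C_r$.

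\textbf{Paths avoiding $S$.} An $(A,\ell)$-path that avoids $S$ lies entirely inside one clique $C_i$. Its endpoints are two vertices of $A\cap C_i$ and its internal vertices are $\ell-1$ vertices of $C_i\setminus A$. So the maximum number of disjoint such paths inside $C_i$ is
\[
\min\bigl(\lfloor |A\cap C_i|/2\rfloor,\ \lfloor |C_i\setminus A|/(\ell-1)\rfloor\bigr),
\]
handling the case $\ell=1$ separately. These paths need no further structure, because inside a clique only the counts of $A$-vertices and non-$A$-vertices matter.

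\textbf{Paths meeting $S$.} Every other path contains a vertex of $S$. Since the paths are vertex-disjoint, at most $|S|$ of them meet $S$. Each such path has at most $\ell+1$ vertices and passes through $S$ in a pattern: a sequence of $S$-vertices interleaved with segments in cliques. Each segment is a subpath inside a single clique, because no edges join different cliques. A segment is described by:
\begin{itemize}
\item its length,
\item whether its endpoints are in $A$ (only the endpoints of the whole path may be in $A$), and
\item the $S$-neighbours at its two ends.
\end{itemize}
The guesses are:
\begin{itemize}
\item the number $p\le |S|$ of paths meeting $S$;
\item for each path, its ordered sequence of $S$-vertices and its segment structure (lengths summing correctly, which $S$-vertices are adjacent to which, and which segments end in $A$).
\end{itemize}
The number of guesses is bounded by $f(|S|,\ell)$. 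Fix one guess.

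\textbf{Realizing a guess.} We must assign each segment to a clique and to concrete vertices so that:
\begin{itemize}
\item the attachment vertices of each segment are adjacent to the prescribed $S$-vertices;
\item all chosen vertices are disjoint; and
\item the remaining cliques, after these vertices are removed, still yield enough $S$-free paths to reach $k$ in total.
\end{itemize}

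\textbf{Reducing to bounded size.} Classify each clique vertex by its type $(N(v)\cap S,\ [v\in A])$. There are at most $2^{|S|+1}$ types. Within a clique, vertices of the same type are interchangeable for segments. For the inner-clique count, only the numbers of $A$-vertices and non-$A$-vertices used matter. Each segment uses at most $\ell+1$ vertices, and there are at most $|S|\ell$ segments in total. So the only information that matters about a clique is its vector of type counts, capped at $(\ell+1)|S|$, together with its exact totals $|A\cap C_i|$ and $|C_i\setminus A|$.

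\textbf{Main obstacle.} Choosing which cliques host the segments is the hard step. I would first try the following. Group cliques by their capped type vector; the number of classes is bounded by a function of $|S|$ and $\ell$. Within a class, the cliques differ only in their totals. Among equivalent cliques, the best choice of host is one whose loss in inner-clique packing value is smallest. Since at most $|S|\ell$ segments are placed, it suffices to keep, in each class, the $|S|\ell$ cliques with the smallest marginal loss for each possible usage pattern, and to treat the rest only through their unmodified inner value. This is the exchange argument to verify, and I would prove it by swapping a used clique for an unused kept one of equal or smaller loss. After this step, brute force over assignments of segments to the kept cliques, with a bounded number of candidates, followed by the local count formula decides the fixed guess.

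An alternative route, if the exchange argument resists, is to encode the choice as an ILP whose number of variables is bounded in $|S|$ and $\ell$ (one variable per class and usage pattern) and solve it with Lenstra's algorithm. The nonlinear floor terms complicate this route, so the exchange approach is preferred.

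Overall running time: $f({\cvd},\ell)\cdot n^{\OO(1)}$.
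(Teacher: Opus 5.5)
Your proposal is correct, but it takes a genuinely different route from the paper.

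\textbf{The paper's route.} The paper proves the theorem by kernelization. It marks $\mathcal{O}(\ell m)$ vertices per modulator vertex and per pair of modulator vertices in every clique. An exchange argument then shows that some solution routes all modulator-meeting paths through marked vertices only. Next, reduction rules peel off clique-internal paths built from unmarked vertices and delete surplus unmarked vertices, which bounds every clique by $\mathcal{O}(\ell^2m^2+\ell m^3)$. Finally, cliques are grouped by their exact vector of type counts, and a clique is deleted from any class with more than $\ell m$ members, lowering $k$ by that clique's internal packing value.

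\textbf{Your route.} You never bound clique sizes. Instead you evaluate each clique's internal contribution exactly via $\min(\lfloor |A\cap C_i|/2\rfloor,\lfloor |C_i\setminus A|/(\ell-1)\rfloor)$. You then guess the bounded skeleton of the at most $|S|$ paths through $S$. What remains is a bounded-size assignment of segment groups to host cliques.

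\textbf{The step you flag as the main obstacle is sound.} Suppose a group with usage pattern $U$ sits on a clique $C$ that is not among the $|S|\ell$ kept least-loss cliques of its class for $U$. There are at most $|S|\ell$ segments, so at most $|S|\ell-1$ other groups occupy cliques, and some kept clique $C'$ is free. $C'$ can host $U$ because it has the same capped type vector. Moving $U$ there changes the total loss by $\mathrm{loss}(C',U)-\mathrm{loss}(C,U)\le 0$. Doing this once per group terminates. This is the clique-level analogue of the paper's final reduction rule. You could even drop the classes and, for each guessed grouping, solve a min-cost bipartite matching between the groups and the feasible cliques. Either way, the ILP fallback is unnecessary.

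\textbf{What each approach buys.} The paper's route produces an equivalent instance whose size is bounded by a function of $\ell$ and the modulator size, i.e.\ a kernel. That is stronger than membership in FPT and fits the paper's kernelization theme, but it needs a more delicate marking analysis. In particular, one must check that the marking stays valid as vertices and cliques are deleted. Your route gives only an FPT algorithm, but it is shorter and rests on a single transparent swap argument.

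One harmless convention difference: the paper measures length in vertices, so an in-clique path uses $\ell-2$ non-$A$ vertices, whereas you count edges.
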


Finally, for {\ALP}, we design a quadratic vertex kernel for a larger parameter vertex cover number ${\vc}$ (larger than each of ${\cvd}, {\pw}$ and ${\tw}$) using new $q$-expansion lemma and carefully problem specific approach.
 Formally, we prove the following result.
 
\begin{restatable}{theorem}{vcResult}
\label{thm:vc-ALPP-result}
{\ALP} parameterized by ${\vc}$, the vertex cover number admits a kernel with	$\OO({\vc}^2)$ vertices.
\end{restatable}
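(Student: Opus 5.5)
Let $X$ be a vertex cover of size ${\vc}$, computed by a 2-approximation; this only affects constants, so we treat $|X| = O({\vc})$. The set $I = V(G)\setminus X$ is independent.

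\emph{Reduction to $\ell \le 2|X|+1$.} Every path in $G$ alternates between $X$ and $I$ except along edges inside $X$. Hence any path has at most $|X|+1$ vertices of $I$ and length at most $2|X|$. If $\ell > 2|X|$, we return a trivial \textsc{No}-instance, or a trivial \textsc{Yes}-instance if $k=0$. Likewise, since the paths are vertex-disjoint and each contains a vertex of $X$ (for $\ell\ge 2$; the case $\ell=1$ is a matching on $A$-$A$ edges, solvable in polynomial time), if $k > |X|$ we answer \textsc{No}. From now on $k\le |X|$ and $\ell\le 2|X|$.

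\emph{Kernelization.} Partition $I$ into $I\cap A$ and $I\setminus A$. A vertex $v\in I\setminus A$ can only be an internal vertex of a solution path, with both neighbours on the path in $X$. A vertex $u\in I\cap A$ is an endpoint, with exactly one neighbour on its path, which lies in $X$. The plan is to use the $q$-expansion lemma (the ``new'' version mentioned in the introduction) twice.

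For $I\setminus A$: build an auxiliary bipartite graph $H_1$ with one side $\binom{X}{2}$, the pairs $\{x,y\}$ with $x,y\in X$. A vertex $v\in I\setminus A$ is adjacent in $H_1$ to the pairs contained in $N(v)$. Instead of all $O(|X|^2)$ pairs, I would mark, for each pair $\{x,y\}$, up to $|X|+1$ common neighbours in $I\setminus A$. A solution uses at most $|X|$ vertices of $I$ as internal vertices. If the path uses $v$ between $x$ and $y$ and $v$ is unmarked, then some marked common neighbour of $x,y$ is unused and can replace $v$, without changing the length or the $A$-path property. Marking this way gives $O(|X|^3)$ vertices, which is too many. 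To reach $O({\vc}^2)$ I would instead apply the $q$-expansion lemma with $q=2$ on the bipartite graph between $X$ and $I\setminus A$, iteratively. Whenever $|I\setminus A| \ge 2|X|$ (times a constant), we get a $q$-expansion of $Y\subseteq X$ into $Z\subseteq I\setminus A$ with $N(Z)\subseteq Y$. We then argue that some vertex of $Z$ is irrelevant and delete it.

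For $I\cap A$: each such $u$ only needs to supply one edge into $X$. A solution uses at most $2|X|$ of them, two per path. For each $x\in X$, keep $\min(\deg, |X|+2)$ neighbours in $I\cap A$; this gives $O(|X|^2)$ vertices. Replacement works as above: if a path ends at an unkept $u$ attached to $x$, then $x$ has at least $|X|+2$ kept neighbours, few of which are used, so we swap in an unused one. This bound is already quadratic.

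\emph{Main obstacle.} The difficulty is getting $I\setminus A$ down to $O(|X|^2)$ rather than $O(|X|^3)$. The naive pair-marking gives a cubic bound, and the plain expansion argument fails. The problem is that the expansion guarantees each $y\in Y$ has $q$ private neighbours, whereas an internal vertex needs a common neighbour of a specific pair $x,y$. My attempt would be an exchange argument for paths through $Y$: a path visiting some $y\in Y$ via $Z$-vertices can be rerouted using the private neighbours of the relevant $Y$-vertices. This rerouting must preserve the path structure, so each $y$ needs two $Z$-neighbours on its path. That suggests using $q$ around $2$ to $3$, together with a careful bound of the form $|I\setminus A| \le q|X|$ per reduction step. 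If the pure expansion exchange is not sound, I would fall back on combining degree truncation with a pair-based marking restricted to pairs that are ``heavy''.
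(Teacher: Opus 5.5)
Your proposal has a genuine gap, and it sits exactly where you locate the main obstacle: nothing in it bounds $|I\setminus A|$ by $O(|X|^2)$. The expansion of $Y\subseteq X$ into $Z\subseteq I\setminus A$ cannot work, for the reason you give yourself. If a solution path contains $x\,v\,y$ with $v\in I\setminus A$, any replacement for $v$ must be a \emph{common} neighbour of $x$ and $y$. Private neighbours of single vertices of $X$ give no such guarantee. The sketched rerouting and the ``heavy pairs'' fallback are not arguments, so as written you only have the cubic marking bound.

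The missing idea is to apply the new $q$-expansion lemma to the pair graph $H_1$ you already defined. Its left side is $\binom{X}{2}$, its right side is $I\setminus A$, and $\{x,y\}$ is adjacent to $v$ iff $\{x,y\}\subseteq N(v)$. Two facts let a constant number of vertices per pair suffice, instead of your $|X|+1$.

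\begin{itemize}
\item Every $v\in I\setminus A$ on a solution path is internal, with both path-neighbours $x,y\in X$, so it \emph{occupies} the edge $(\{x,y\},v)$. In a vertex-disjoint packing, occupied edges share no endpoints. A pair occupied twice would either close a cycle $x\,v\,y\,v'$ inside one path or put $x$ on two paths.
\item If $|I\setminus A|>2\binom{|X|}{2}$, the lemma with $q=2$ returns $\widehat L\subseteq\binom{X}{2}$ and $\widehat R\subseteq I\setminus A$. It also gives a $2$-expansion of $\widehat L$ onto $\widehat R$ with $N(\widehat R)\subseteq\widehat L$, and a vertex $v\in\widehat R$ not saturated by it.
\end{itemize}

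Now take any solution. Every vertex of $\widehat R$ it uses is used through a pair of $\widehat L$, as an internal vertex. Remove all these vertices from their paths. For each such occupied pair, reinsert one of its expansion partners between $x$ and $y$. The partners are pairwise distinct, adjacent to both vertices of their pair, and unused, since all of $\widehat R$ was just freed. So lengths, the $A$-path property and disjointness are preserved, and the new solution avoids $v$. Hence $v$ may be deleted, giving $|I\setminus A|\le 2\binom{|X|}{2}$. This is exactly the paper's second reduction rule.

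Your degree truncation for $I\cap A$ is a valid quadratic alternative to the paper's expansion between $X\setminus A$ and $I\cap A$, which even gives a linear bound. However, your count needs a fix. Besides the endpoint $u$ being replaced, a solution may use up to $2|X|-1$ other vertices of $I\cap A$. So you should keep $2|X|$ neighbours per $x$. Alternatively, use that for paths with at least four vertices each vertex of $X$ is adjacent on its path to at most one endpoint, which makes $|X|+2$ enough.
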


 
\begin{figure}[h!]
	\centering
\includegraphics[width=.7\textwidth]{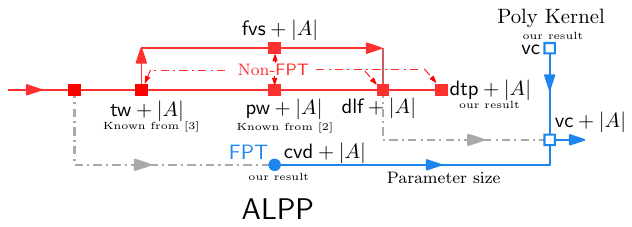}
	\caption{Structural Parameterizations of {\alpp}.  The arrow represents the hierarchy of different structural parameters, while the dashed line represents the parameters that have yet to be explored in the context of our problems.  }
	\label{fig:hierarchy-of-parameters}
\end{figure}

We organize our paper as follows.
In Section \ref{sec:prelim}, we introduce the basic definitions of graph theory and parameterized complexity.
Next, in Section \ref{sec:hardness-result}, we prove our main result where we prove Theorem \ref{theorem:rETHIS}.
After that, we prove Theorems \ref{thm:ALP-cvd-plus-A} and Theorem \ref{thm:cvd+l-Short-ALPP-result} in Section \ref{sec:cvd-with-A} and Section \ref{sec:cvd+l} respectively.
Finally, in Section \ref{sec:ALPP-vc}, we prove Theorem \ref{thm:vc-ALPP-result}. 

%% file: prelims.tex
\section{Preliminaries and notations}
\label{sec:prelim}

\subparagraph*{Sets, numbers and graph theory:}
We use $\nn$ to denote the set of all natural numbers and $[r]$ to denote the set $\{1,\ldots,r\}$ for every $r \in \nn$.
Given a finite set $S$ and $r \in \nn$, we use ${{S}\choose{r}}$ and ${{S}\choose{\leq r}}$ to denote the collection of subsets of $S$ with exactly $r$ elements and at most $r$ elements respectively.
We use standard graph theoretic notations from the book by Diestel \cite{Diestel-Book}.
For a graph $G$, a set of vertices $W \subseteq V(G)$ is said to be {\em independent} if for any pair of vertices $u, v \in W$, $uv \notin E(G)$.
Equivalently, a set $W \subseteq V(G)$ is independent if every pair of vertices of $W$ is nonadjacent to each other.
For any two vertices $x,y$ and a path $P$, we denote $V[x,y]$ as the number of vertices in the subpath between $x$ and $y$ in $P$. And for a path $P$ we denote the set of vertices in $P$ by $V(P)$. Further, for a collection $\Pe$ of paths, $V(\Pe)=\{\bigcup_{P_i\in \Pe} V(P_i)\}$. 
In a graph $G$, let $P_i(v_1,v_2,\cdots, v_j)$ be a path and $X\subseteq V$ be a set. An {\em ordered intersection} of $P_i$ with $X$, denoted as $X_i=(v_a, v_b, \cdots, v_p)$, is defined as $V(P_i)\cap X=X_i$, where the ordering of the vertices in $X_i$ is the same as that in $P_i$.
Additionally, we define ${X_1, X_2, \cdots, X_x}$ as an {\em ordered partition} of $X$ if each $X_i$ is an ordered set and $\bigcup_{i=1}^{x}X_i=X$.
Given a path $P_i(v_1,v_2,\cdots, v_j)$, we denote $P_i(v_1,v_2,\cdots, v_{j-1})$ as $P_i\setminus \{v_{j}\})$.
{Given two vertex-disjoint paths $P_1$ and $P_2$ such that the last vertex of $P_1$ is adjacent to the first vertex of $P_2$. 
We use $P_1 \cdot P_2$ to denote the {\em concatenation} of path $P_1$ with the path $P_2$.
Informally speaking, in $P_1 \cdot P_2$, the path starts with the vertices of $P_1$, followed by the vertices of $P_2$.}
A graph $G$ is {\em bipartite} if its vertex set $V(G)$ can be partitioned into $A \uplus B$ such that for every $uv \in E(G)$, $u \in A$ if and only if $v \in B$.
We often use $(A, B)$ to denote the {\em bipartition} of $G$.
A {\em $2$-interval} $\mathcal{I}_i$ is a disjoint pair of intervals $\{I_i^a,I_i^b\}$ on a real line.
We say that a pair of $2$-intervals, $\mathcal{I}_i$ and $\mathcal{I}_j$ {\em intersect} if they have at least one point in common, that is $\{I_i^a\cup I_i^b\}\cap\{I_j^a\cup I_j^b\}\neq \emptyset$.
Conversely, if two $2$-intervals do not intersect, they are called {\em disjoint}.

A {\em $2$-interval representation} of a graph $G$ is a set of two intervals $\J$ such that there is a one to one correspondence between $\J$ and $V(G)$ such that there exists an edge between $u$ and $v$ if and only if the $2$-intervals corresponding to $u$ and $v$ intersect. 
A graph is a {\em $2$-interval graph} if there is a {\em $2$-interval representation} for $G$.

\subparagraph*{New Expansion Lemma:}
Let $q$ be a positive integer and $H$ be a bipartite graph with bipartition $(A, B)$.
For $\widehat A \subseteq A$ and $\widehat B \subseteq B$, a set $M \subseteq E(H)$ is a {\em $q$-expansion} of $\widehat A$ onto $\widehat B$ if 
\begin{enumerate}[(i)]
	\item every edge of $M$ has one endpoint in $\widehat A$ and the other endpoint in $\widehat B$,
	\item every vertex of $\widehat A$ is incident to exactly $q$ edges of $M$, and
	\item exactly $q|\widehat A|$ vertices of $\widehat B$ are incident to some edges of $M$.
\end{enumerate}

A vertex of $\widehat A \cup \widehat B$ is {\em saturated} by $M$ if it is an endpoint of some edge of $M$, otherwise it is {\em unsaturated}.
By definition, all vertices of $\widehat A$ are saturated by $M$, but $\widehat B$ may contain some unsaturated vertices.
Observe that a $1$-expansion of $\widehat A$ onto $\widehat B$ is a matching of $\widehat A$ onto $\widehat B$ that saturates all vertices of $\widehat A$.
We use the following proposition to prove one of our results.

\begin{proposition}[New $q$-Expansion Lemma, \cite{Babu0R22,FominLLSTZ19,JacobMR23}]
\label{lemma:new-expansion-lemma}
Let $H = (A \uplus B, E)$ be a bipartite graph and $q > 0$ be a positive integer. 
Then there exists a polynomial-time algorithm that computes $\widehat A \subseteq A$, $\widehat B \subseteq B$, and $M \subseteq E(H)$ such that 
\begin{itemize}
	\item $M$ is a $q$-expansion from $\widehat A$ onto $\widehat B$,
	\item $N_G(\widehat B) \subseteq \widehat A$, and
	\item $|B \setminus \widehat B| \leq q|A \setminus \widehat A|$.
\end{itemize}
\end{proposition}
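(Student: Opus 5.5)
The plan is the standard argument that turns $q$-expansions into matchings: blow up each vertex of $A$ into $q$ copies, take a maximum matching, and read off $\widehat A$ and $\widehat B$ from the set of vertices reachable by alternating paths that start at unmatched vertices of $B$. The neighbourhood condition in the statement should be read in $H$, i.e.\ $N_H(\widehat B)\subseteq \widehat A$.

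\textbf{Construction.} Build the bipartite graph $H_q$ with sides $A'$ and $B$, where $A'=\{a^1,\dots,a^q : a\in A\}$ and $a^i b\in E(H_q)$ if and only if $ab\in E(H)$. Compute a maximum matching $M'$ of $H_q$ in polynomial time. Let $U\subseteq B$ be the set of $B$-vertices left unmatched by $M'$.

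Let $R$ be the set of vertices reachable from $U$ by $M'$-alternating paths, with the following step rules:
\begin{itemize}
\item from a vertex of $B$, move along an edge not in $M'$;
\item from a vertex of $A'$, move along its matching edge.
\end{itemize}
Set $\widehat A=\{a\in A: a^i\in R \text{ for some } i\}$, $\widehat B=R\cap B$, and let $M$ be the projection of $M'\cap E(H_q[R])$ to $H$, i.e.\ replace each edge $a^ib$ by $ab$.

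\textbf{Key steps.}
\begin{enumerate}
\item Every vertex of $R\cap A'$ is matched, since otherwise we would have an augmenting path, contradicting maximality of $M'$. The mate of such a vertex also lies in $R\cap B$.
\item $R$ is closed under taking copies. If $a^i\in R$, it was reached from some $b\in R\cap B$ through a non-matching edge, and $b$ is adjacent to every copy $a^j$. For each $j$, either $a^jb\notin M'$, so $a^j$ is reached from $b$ directly, or $a^jb\in M'$. In the latter case $b\notin U$, so $b$ must have been entered through its matching edge, which means $a^j\in R$ already. Hence all $q$ copies of every $a\in\widehat A$ lie in $R$.
\item Conditions (i)--(iii) of a $q$-expansion hold. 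By steps 1 and 2, each $a\in\widehat A$ receives exactly $q$ edges of $M$, going to $q$ distinct vertices of $\widehat B$. Distinct copies have distinct mates, so exactly $q|\widehat A|$ vertices of $\widehat B$ are saturated, namely all of them except those in $U$.
\item $N_H(\widehat B)\subseteq\widehat A$. For $b\in\widehat B$, every non-matching edge out of $b$ leads into $R$ by the definition of $R$. The matching edge of $b$, if it exists, is the one through which $b$ was entered, so its other end is in $R$ too. Therefore every neighbour $a$ of $b$ in $H$ has a copy in $R$, i.e.\ $a\in\widehat A$.
\item The size bound. Since $U\subseteq\widehat B$, every $b\in B\setminus\widehat B$ is matched, say to $a^i$. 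If $a^i$ were in $R$, then $b$ would be reached through the matching edge, contradicting $b\notin\widehat B$. So $a^i\notin R$, and by step 2, $a\notin\widehat A$. Distinct such $b$ use distinct copies, so $|B\setminus\widehat B|\le q|A\setminus\widehat A|$.
\end{enumerate}

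\textbf{Main obstacle.} The only delicate point is step 2, that reachability respects copies. It relies on all copies of $a$ having identical neighbourhoods and on the fact that a matched vertex of $B$ can only be entered via its matching edge. Everything else is routine bookkeeping, and the whole procedure clearly runs in polynomial time.
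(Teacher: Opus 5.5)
Your proof is correct. The paper gives no proof of its own; it imports the proposition from the cited works. Your argument is the standard way this lemma is established: take $q$ copies of each vertex of $A$, a maximum matching $M'$ of $H_q$, and the set $R$ reachable by alternating paths from the unmatched vertices of $B$. This amounts to taking the K\"onig cover $(R\cap A')\cup(B\setminus R)$ of $H_q$.

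One point to add is that constructing $H_q$ takes time polynomial in $|V(H)|+q$, not in the bit-length of $q$. This is harmless: for $q>|B|$ you can output $\widehat A=\emptyset$, $M=\emptyset$ and $\widehat B$ equal to the isolated vertices of $B$, and the paper only ever uses $q=2$.
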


Note that we do not need a precondition that $|B| > q|A|$.
Moreover, it is possible that $\widehat A, \widehat B =  \emptyset$.
But, if $|B| > q|A|$, then it follows due to the third property of above proposition that $\widehat B \neq \emptyset$ and $|\widehat B| > q|\widehat A|$ even if $\widehat A = \emptyset$.
The idea is to delete some vertex $u$ from $\hat B$ that is not spanned by $M$.

\subparagraph*{Graph parameters:} Given a graph $G$, the {\em distance to path graph}, denoted by ${\dtp}(G)$ is the minimum number of vertices to be deleted from $G$ to get a path.
The {\em distance to cluster graph}, denoted by ${\cvd}(G)$ is the minimum number of vertices to be deleted to get a disjoint union of cliques.
A set of vertices $S \subseteq V(G)$ is called the {\em vertex cover} of $G$ if for every $uv \in E(G)$, $u \in S$ or $v \in S$.
The {\em vertex cover number} of $G$, denoted by ${\vc}(G)$ is the size of a vertex cover of $G$ with minimum number of vertices.
For each of the parameters, when the graph is clear from the context, we drop the mentioning of $G$.

\subparagraph*{Parameterized complexity and kernelization:}
 A {\em parameterized problem} is a language $L \subseteq \Sigma^* \times \nn$
 where $\Sigma$ is a finite alphabet.
 The input instance to a parameterized problem is $(x, k) \in \Sigma^* \times \nn$.
 This $k$ is called the parameter.
 
 \begin{definition}[Fixed-Parameter Tractability]
 \label{defn:FPT}
 A parameterized problem $L$ is said to be {\em fixed-parameter tractable} (or {\em FPT}) if given an input instance $(x, k) \in  \Sigma^* \times \nn$ to a parameterized problem $L$, there exists an algorithm $\AAA$ that runs in $f(k)\cdot |x|^{\OO(1)}$ time and correctly decides $L$.
 \end{definition}

 The algorithm $\AAA$ illustrated in the above definition running in $f(k)\cdot |x|^{\OO(1)}$-time is called a {\em fixed-parameter algorithm} (or an {\em FPT algorithm}).
A {\em kernelization} of a parameterized problem is a polynomial-time preprocessing algorithm that outputs an equivalent instance of a problem of `small size'.
The formal definition is as follows.

\begin{definition}[Kernelization]
\label{defn:kernel}
A parameterized problem $L \subseteq \Sigma^* \times \nn$ is said to admit a {\em kernelization} if given $(I, k)$, there is a preprocessing algorithm that runs in polynomial-time and outputs an equivalent instance $(I', k')$ such that $|I'| + k' \leq g(k)$ for some computable function $g: \nn \rightarrow \nn$.
\end{definition}

The output instance $(I', k')$ satisfying $|I'| + k' \leq g(k)$ is called the {\em kernel} and the function $g(\cdot)$ is the {\em size} of the kernel.
If the size of the kernel is a polynomial function, then $L \subseteq \Sigma^* \times \nn$ is said to admit a {\em polynomial kernel}.
The process of transforming the input instance into a (polynomial) kernel is commonly referred to as \textit{(polynomial) kernelization}.
It is well-known that a decidable parameterized problem is FPT if and only if admits a kernelization \cite{CyganFKLMPPS15,DowneyF13}.

Since one of our hardness result uses the following conjecture {\rETH} (Randomized Exponential-Time Hypothesis) introduced by Dell et al. \cite{DellHMTW14}, we formally state it below.

\begin{conjecture}[{\rETH}]
\label{conj:rETH}
There exists a constant $c > 0$, such that no randomized algorithm can solve {\sc $3$-SAT} in time $\OO^*(2^{cn})$ \footnotetext{$\OO^*$ notation suppresses the polynomial factors.} with a (two-sided) error probability of at most $\frac{1}{3}$, where $n$ represents the number of variables in the $3$-SAT instance.
\end{conjecture}

%% file: hardness.tex

\section{Algorithmic lower bound based on {\rETH} when parameterized by ${\dtp} + |A|$}
\label{sec:hardness-result}

This section establishes that {\alpp} is unlikely to be fixed-parameter tractable with respect to the combined parameter $({\dtp}+|A|)$ under standard complexity theoretic assumptions. We begin by presenting some key ideas that will be instrumental in our subsequent hardness reduction.

 
\subsection{Essential results}
We show the hardness for {\alpp} under the assumption that Conjecture \ref{conj:rETH} holds.
In the realm of parameterized complexity, {\rETH} has been widely employed to prove hardness for many well-known parameterized problems.
The input instance to the $k$-\textsc{Clique} is an undirected graph $G$ and the question is whether $G$ has a set of at least $k$ vertices that induces a complete subgraph.
The following proposition about $k$-\textsc{Clique} can be derived from  Theorem 4.1 in \cite{ChalermsookCKLM20} when $\varepsilon = 1/m$ in Conjecture 2.5 of \cite{ChalermsookCKLM20}.
 
\begin{restatable}{proposition}{rrethkclique}
\label{prop:r-ETH-k-CLIQUE}
Unless Conjecture \ref{conj:rETH} fails, there is no randomized algorithm that decides $k$-\textsc{Clique} in time $f(k)\cdot n^{o(k)}$ correctly with probability at least $2/3$.
\end{restatable}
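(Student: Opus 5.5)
The plan is to derive the proposition as a direct specialization of the gap-ETH-style hardness machinery of Chalermsook et al.~\cite{ChalermsookCKLM20}, rather than reprove anything from scratch. Theorem 4.1 of \cite{ChalermsookCKLM20} states, under their Conjecture 2.5 (a randomized, gap version of ETH parameterized by $\varepsilon$), that no randomized $f(k)\cdot n^{o(k)}$-time algorithm can distinguish $k$-\textsc{Clique} instances with a clique of size $k$ from instances whose largest clique is below some threshold depending on $\varepsilon$. An exact decision algorithm for $k$-\textsc{Clique} would in particular solve this promise problem. So it suffices to identify a choice of the gap parameter under which their hypothesis collapses to Conjecture~\ref{conj:rETH}, and then read off the exact-decision statement.

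\textbf{Step 1: collapsing the gap.} I set $\varepsilon = 1/m$, where $m$ is the number of clauses of the $3$-SAT instance. Distinguishing satisfiable formulas from those in which every assignment violates at least an $\varepsilon$-fraction of clauses then means distinguishing satisfiable formulas from those violating at least one clause, i.e., ordinary satisfiability. With the same two-sided error bound $1/3$ and running time $\OO^*(2^{cn})$, Conjecture 2.5 of \cite{ChalermsookCKLM20} becomes exactly Conjecture~\ref{conj:rETH}.

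\textbf{Step 2: checking the dependence on $\varepsilon$.} This is the main obstacle. I must verify that the proof of Theorem 4.1 in \cite{ChalermsookCKLM20} survives when $\varepsilon$ depends on the instance. The reduction from $3$-SAT to $k$-\textsc{Clique} there groups variables or clauses into $k$ blocks, so the produced graph has $n' = 2^{\OO(n/k)}$ vertices. The gap or soundness only affects the size of the clique in the \textsc{NO} case, not the size of the constructed graph. With $\varepsilon = 1/m$ we only need perfect completeness, i.e., a clique of size $k$ iff the formula is satisfiable, which the block construction gives without amplification. Any $\varepsilon$-dependent factors such as $\mathrm{poly}(1/\varepsilon)$ are then polynomial in $m$. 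These are absorbed into the $\OO^*$ and do not affect the exponent $2^{o(n)}$.

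\textbf{Step 3: the contradiction.} Suppose a randomized algorithm decides $k$-\textsc{Clique} in time $f(k)\cdot n'^{o(k)}$ with success probability at least $2/3$. Choose $k$ as a slowly growing function of $n$ so that $f(k)\le 2^{o(n)}$. Then the running time on the reduced instance is $f(k)\cdot 2^{\OO(n/k)\cdot o(k)} = 2^{o(n)}$. Combined with the reduction, this yields a randomized $2^{o(n)}$-time $3$-SAT algorithm with error at most $1/3$, contradicting Conjecture~\ref{conj:rETH}.
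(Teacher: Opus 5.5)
Your proposal follows the paper's route. The paper obtains the proposition in one line by invoking Theorem 4.1 of \cite{ChalermsookCKLM20} with $\varepsilon = 1/m$ in their Conjecture 2.5, which is exactly your Step 1, and your Steps 2--3 spell out why an instance-dependent $\varepsilon$ is harmless when both clique thresholds equal $k$. One detail to tighten: grouping clauses into $k$ blocks gives $k\cdot 2^{\OO(m/k)}$ vertices, so $n' = 2^{\OO(n/k)}$ needs $m = \OO(n)$; that comes either from a bounded-occurrence form of the hypothesis or, starting from Conjecture~\ref{conj:rETH} itself, from the Sparsification Lemma together with amplifying the success probability of the randomized algorithm.
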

	
Similarly, given a graph $G$, the {\sc $k$-Independent Set} problem asks whether there exists an independent set in $G$ having (at least) $k$ vertices.
We establish the intractability result for {\alpp} parameterized by $(|A|+\dtp(G))$ through a `parameter preserving reduction' from the {\sc $k$-Independent Set} problem on a $2$-interval graph, which in turn was shown to be $\woc$ following a reduction from the $k$-{\sc Clique} problem by Fellows et al. \cite{FellowsHRV09}.
Observe that given a $2$-interval graph $G$ and a $2$-interval representation $\J$ of $G$ a $k$-independent set $W$ for $G$ corresponds to a set of pairwise disjoint $2$-intervals in $\J$.
Fellows et al. \cite{FellowsHRV09} presented a parameterized reduction from an arbitrary instance $(G,k)$ of the $k$-\textsc{Clique} problem to an instance $(\J,k')$ of the {\sc $k'$-Independent Set} problem on a $2$-interval graph such that there exists a $k$ size clique in $G$ if and only if there exists a $k'=k+3{k\choose 2}$ sized independent set in $\J$.
Thus, we have the following proposition.

\begin{restatable}{proposition}{rethIS}\label{theorem:rETHIS}
Unless {\rETH} fails, there is no randomized algorithm that decides {\sc $k$-Independent Set} on a $2$-interval graph in $f(k)\cdot n^{o(\sqrt{k})}$-time correctly with probability at least $2/3$.
\end{restatable}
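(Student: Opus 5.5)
The plan is a proof by contradiction: combine \Cref{prop:r-ETH-k-CLIQUE} with the reduction of Fellows et al.~\cite{FellowsHRV09}, paying attention to how the parameter grows. Assume there is a randomized algorithm $\mathcal{B}$ that decides {\sc $k$-Independent Set} on $2$-interval graphs in time $f(k)\cdot n^{o(\sqrt{k})}$, with success probability at least $2/3$. We will turn $\mathcal{B}$ into a randomized algorithm for $k$-\textsc{Clique} running in time $f'(k)\cdot n^{o(k)}$, which \Cref{prop:r-ETH-k-CLIQUE} rules out unless {\rETH} fails.

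Start from an instance $(G,k)$ of $k$-\textsc{Clique} with $n=|V(G)|$. Apply the reduction of Fellows et al. It runs in polynomial time and outputs a $2$-interval representation $\J$ of a graph $H$ together with $k'=k+3{k\choose 2}$, such that $G$ has a $k$-clique if and only if $H$ has an independent set of size $k'$. Two facts need to be checked from that construction.

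\begin{itemize}
\item The size bound $|V(H)|\le n^{c}$ holds for some constant $c$. In the construction, the $2$-intervals come from vertices and edges of $G$ together with gadgets of polynomial size, so $|\J|=\OO(n^{2})$ or similar.
\item The parameter satisfies $k'=\Theta(k^2)$, and hence $\sqrt{k'}=\Theta(k)$.
\end{itemize}

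Now run $\mathcal{B}$ on $(H,k')$ and output its answer. Correctness is preserved exactly, because the reduction is deterministic and answer-preserving, so the success probability stays at least $2/3$. The running time is
\[
f(k')\cdot |V(H)|^{o(\sqrt{k'})} = f(\OO(k^2))\cdot n^{c\cdot o(k)} = f'(k)\cdot n^{o(k)} .
\]

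The main obstacle is the careful handling of the $o(\cdot)$ term. We must make sure that $o(\sqrt{k'})$, evaluated at $k'=\Theta(k^2)$ and multiplied by the constant $c$, is still $o(k)$. The standard way to do this is to write the exponent as $\sqrt{k'}/g(k')$ for some unbounded nondecreasing function $g$. Then the exponent becomes $c\,\Theta(k)/g(\Theta(k^2))$, and since $g(\Theta(k^2))\to\infty$ this is $k/g'(k)$ for an unbounded $g'$. One may also assume without loss of generality that $f$ is computable and monotone, so that $f'$ is a well-defined computable function of $k$ alone.

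With both points in place, the hypothetical algorithm $\mathcal{B}$ would decide $k$-\textsc{Clique} in time $f'(k)\cdot n^{o(k)}$ with probability at least $2/3$. This contradicts \Cref{prop:r-ETH-k-CLIQUE}, and the proposition follows.
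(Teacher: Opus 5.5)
Your proposal is correct and follows the same route as the paper: compose the Fellows et al.\ reduction (polynomial output size, $k' = k + 3\binom{k}{2} = \Theta(k^2)$) with \Cref{prop:r-ETH-k-CLIQUE}. The paper states this in a single line. Your explicit bookkeeping of the $o(\sqrt{k'})$ exponent fills in the routine details. One small addition: the polynomial running time of the reduction itself also belongs in the total time bound, and it is likewise absorbed into $f'(k)\cdot n^{o(k)}$.
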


Next, we present a lemma, which we will use later on. We believe that this can be of independent interest and applicable to various other problem domains. This lemma is similar to the well-known isolation lemma~\cite{Mulmuley1987}. Recall that $[r]$ denotes the set $\{1,\ldots,r\}$ where $r\in \mathbb{N}^+$.


{\generalizedisolation*}

\begin{proof}
    Let $w$ be a random assignment of weights to the elements of $X$ independently and uniformly from $[M]$ and $S_1$ and $S_2$ be two arbitrary sets in $\mathcal{F}$. Our objective is to find the probability of the event ``$w(S_1)=w(S_2)$''. Observe that if $S_1\setminus S_2 =\emptyset $ or $S_2\setminus S_1=\emptyset$, then $\mathds{P} \bigl(w(S_1)=w(S_2)\bigr)=0$. Let $S_1\setminus S_2=\{x_1,x_2, \cdots, x_a\}$ and $S_2\setminus S_1=\{y_1,y_2,\cdots y_b\}$. 
    We define a random variable $W_{12}$ as follows.
    \begin{align*}
         &W_{12}=\{w(x_1)+\cdots + w(x_a)\}-\{w(y_2)+\cdots + w(y_b)\}
         =w(S_1)-w(S_2)+w(y_1)
    \end{align*}
    From the law of total probability, we have the following.
    \begin{align*}
         &\mathds{P} \bigl(w(S_1)=w(S_2)\bigr)= \sum \mathds{P}\bigl(w(S_1)=w(S_2)|W_{12}=z\bigr)\cdot \mathds{P}(W_{12}=z) & \\
         &= \sum_{z\in [M]}\mathds{P}\bigl(w(y_1)=z\bigr)\cdot \mathds{P}(W_{12}=z) & & \llap{\text{ if 
         $z\notin [M]$ then $\mathds{P}\bigl(w(y_1)=z\bigr)=0$
         }}\\
         &= \sum_{z\in [M]} \frac{1}{M} \mathds{P}(W_{12}=z)= \frac{1}{M} \sum_{z\in [M]} \mathds{P}(W_{12}=z)\leq\frac{1}{M}
    \end{align*}
    \noindent
    Using Boole's inequality, we can prove that none of the two sets in $\mathcal{F}$ are of equal weight with probability at least $1-\frac{{|\mathcal{F}| \choose 2}}{M}$. Thus, the claim holds.  
\end{proof}

\subsection{Hardness proof}
We are ready to present a randomized reduction from {\sc $k$-Independent Set} in $2$-interval graphs to the {\alpp}.
Throughout this section, we denote the family of path graphs by $\Gamma$.
Let $(G_{\J},k)$ be an instance of $k$-{\sc Independent Set} problem in $2$-interval graph where the set of $2$-intervals  representing $G_{\J}$ be $\J$. 
Now, we present a randomized construction from $(G_{\J},k)$ to an instance $(G,A,M,\ell,k)$ of {\alpp}.
In this reduction, $H= G - M$ is in $\Gamma$ and $|M|= 4k$.
We assume that we are given $\J$. The construction of $G$ is done in two phases.
In the first phase, we generate a set of points $P$ on the real line $\rr$.
In the second phase, we construct the graph $G$.
Observe that the points in $P$ naturally induce a path graph $H$ which is defined as follows.
Corresponding to each point in $P$, we define a vertex in $V(H)$, and there is an edge between two vertices if the points corresponding to them are adjacent in $\mathbb{R}$. We additionally add $4k$ vertices.
Before detailing our construction, let us establish a few notations and assumptions that can be accommodated without changing the combinatorial structure  of the problem. 
Let $\J=\{\mathcal{I}_1,\mathcal{I}_2,\cdots,\mathcal{I}_n\}$ where each $2$-interval $\I_j$ is a collection of two intervals $I_j^a$ and $I_j^b$. We presume that all the intervals in $\J$ are inside the interval $[0,1]$ in the real line. 
Furthermore, we assume that all the endpoints of the intervals are distinct, and the distance between any two consecutive endpoints is at least $2\varepsilon$, where $\varepsilon$ is an arbitrarily small constant.
We use $L(I)$ and $R(I)$ to denote the left and right endpoints of an interval $I$, respectively.

\begin{tcolorbox}[bicolor,
  colback=cyan!5,colframe=cyan!5,boxrule=0pt,frame hidden]
\subparagraph*{Construction Phase $1$.}
We place a set of points  $P$ on $\rr$ as follows. For each interval $I_j^c$, where $c\in\{a,b\}$ and $j\in [n]$, we generate two random numbers $n_r(L(I_j^c))$ and $n_r(R(I_j^c))$ between $1$ and $N$. We will decide on the value of $N$ at a later stage. 
\vspace{3pt}

Let $P(L(I_j^c))$ be the set of $n_r(L(I_j^c))$ equally spaced points in the interval $[L(I_j^c),L(I_j^c)+\varepsilon]$.
The first point of $P(L(I_j^c))$ coincides with $L(I_j^c)$, and the last point coincides with $L(I_j^c)+\varepsilon$.
Let  $P_L=\bigcup_{c \in \{a, b\}, j \in [n]} P(L(I_j^c))$.
Similarly, let $P(R(I_j^c))$ be the set of $n_r(R(I_j^c))$ equally spaced points in the interval $[R(I_j^c)-\varepsilon,R(I_j^c)]$.
The first point of $P(R(I_j^c))$ coincides with $R(I_j^c)-\varepsilon$ and the last point coincides with $R(I_j^c)$.
Let $P_R = \bigcup_{c\in\{a,b\}, j \in [n]} P(R(I_j^c))$.
An illustration of this process can be seen in Figure \ref{fig:al3}.
Observe that the cardinality of $P_L \cup P_R$ is at most $4nN$.

\vspace{3pt}
Consider ${\mathcal{I}_j=(I_j^a,I_j^b)}$ be a $2$-interval, and  $n_j$ be the total number of points from $P_L\cup P_R$ that are inside $\mathcal{I}_j$.
Let $\overline{n_j}= 8nN - n_j$ for all $j \in [n]$. Define $C_j$ as the collection of $\overline{n_j}$ points evenly distributed within the interval $[2j, 2j+1]$ and let $P_C=\cup_{j \in [n]} C_j$.
To ease the notations, we denote $L_j=2j$, $R_j=2j+1$ and $I_j$ as the interval $(L_j,R_j)$. 
Observe that the total number of points inside $I_j^a,I_j^b$ and $I_j$ is $8nN$.
\vspace{3pt}

We add a large number of points (exactly $8nN+4$ many) between $R_j$ and $L_{j+1}$ for $ j \in [n]$ and denote all these points by $P_X$. 
Let $P=P_L \cup P_R\cup P_C \cup P_X$. 
\end{tcolorbox}

\vspace{-10pt}
\begin{figure}[h!]
		\centering
		\includegraphics[width=1\textwidth]{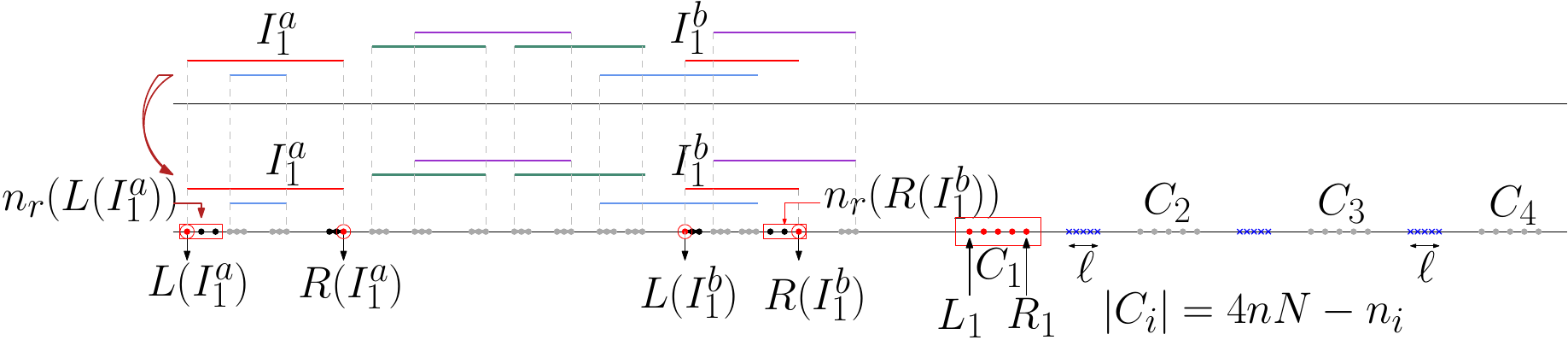}
		\caption{Illustration of Construction Phase 1. Note that $|C_i| = \overline{n_i} = 8nN - n_i$.}
		\label{fig:al3}
	\end{figure}

\begin{tcolorbox}[bicolor, colback=cyan!5,colframe=cyan!5,boxrule=0pt,frame hidden]

\subparagraph{Construction Phase $2$.}
Consider the set of points $P$ in $\rr$. Observe that there is a natural ordering among the points in $P$. We define adjacency based on this ordering.
Consider the path graph $G_P$ induced by $P$. Specifically, we introduce a vertex for each point in $P$ and add an edge between two vertices if and only if the points corresponding to them are adjacent. With slight abuse of notation, we denote the vertex corresponding to a point $p$ by $p$. For any interval $I$, let $V(I)$ denote the number of points/vertices within the interval $I$.
For two points $a$ and $b$ in $P$, $\lambda[a,b]$ denotes the path from $a$ to $b$ in $G_P$. 
\vspace{3pt}

We construct $G$, by setting $V(G)=V(G_P)\cup V_M$ where $V_M=\{a_i,b_i,c_i,d_i|~ i\in [k]\}$.
And, $E(G)=E(G_P) \cup E_a \cup E_b\cup E_c\cup E_d$, where 
$E_a=\{(a_i,L(I_j^a))\mid j\in [n] \textrm{ and } i\in [k]\}$, 
$E_b=\{(b_i,R(I_j^a)), (b_i,L_j)\mid j\in [n] \textrm{ and } i\in [k]\}$, 
$E_c=\{(c_i,L(I_j^b)), (c_i,R_j)\mid j\in [n] \textrm{ and } i\in [k]\}$,
$E_d=\{(d_i,R(I_j^b))\mid j\in [n] \textrm{ and } i\in [k]\}$.

Informally, the edges are defined as follows. Each $a_i\in V_M$ is adjacent to $L(I_j^a)$ for all $j\in [n]$ (see Figure \ref{fig:al5}). Similarly each $b_i\in V_M$ is adjacent to $R(I_j^a)$ for all $j\in [n]$. Additionally, each $b_i$ is adjacent to every other $L_j$ for all $j\in [n]$. Each $c_i$ is adjacent to $L(I_j^b)$ and $R_j$, and each $d_i$ is adjacent to $R(I_j^b)$.

We denote $A=\{a_i,d_i|~i\in [k]\}$ and set $\ell=8nN+4$.
\noindent 	
\end{tcolorbox}

\begin{figure}[h!]
		\centering
		\includegraphics[width=0.8\textwidth]{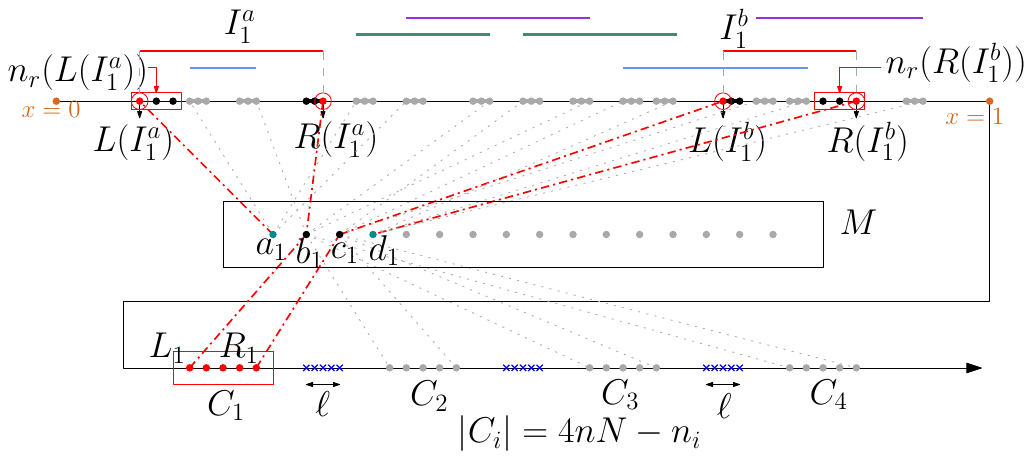}
		\caption{Illustration of construction of {\ALP}. Note that $|C_i| = \overline{n_i} = 8nN - n_i$.}
		\label{fig:al5}
	\end{figure}

Next, we demonstrate that if there exists a $k$-size independent set in $G_{\J}$, then with probability $1$ there exist $k$ many vertex-disjoint $(A, \ell)$-paths in $G$ where $\ell=8nN + 4$ (Lemma~\ref{lemma:easypart}). Following this, we show that if there exist $k$ many vertex-disjoint $(A, \ell)$-paths in $G$ where $\ell=8nN + 4$, then with \emph{high} probability there is a $k$-size independent set in $G_{\J}$.


%
%

\begin{restatable}{lemma}{easypart}
\label{lemma:easypart}
If there are $k$ disjoint $2$-intervals in $\J$, then there are $k$ vertex-disjoint $(A,\ell)$-paths in $G$.
\end{restatable}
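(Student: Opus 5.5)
The plan is to build the $k$ paths explicitly, one for each $2$-interval of the given disjoint family. Let $\mathcal{I}_{j_1},\ldots,\mathcal{I}_{j_k}$ be pairwise disjoint $2$-intervals in $\J$. I assume without loss of generality that $I_j^a$ lies to the left of $I_j^b$; the construction's use of $a$ and $b$ is symmetric up to this naming. For the $i$-th chosen $2$-interval $\mathcal{I}_{j}$ with $j=j_i$, I define
\[
Q_i \;=\; a_i \cdot \lambda[L(I_j^a),R(I_j^a)] \cdot b_i \cdot \lambda[L_j,R_j] \cdot c_i \cdot \lambda[L(I_j^b),R(I_j^b)] \cdot d_i .
\]

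First I check that $Q_i$ is a path in $G$. The endpoints $L(I_j^c)$ and $R(I_j^c)$ are themselves points of $P$, being the first point of $P(L(I_j^c))$ and the last point of $P(R(I_j^c))$. Similarly, $L_j$ and $R_j$ are the extreme points of $C_j$. Hence each $\lambda[\cdot,\cdot]$ is a subpath of $G_P$. The connecting edges all lie in $E_a\cup E_b\cup E_c\cup E_d$:
\begin{itemize}
\item $a_iL(I_j^a)\in E_a$;
\item $R(I_j^a)b_i\in E_b$ and $b_iL_j\in E_b$;
\item $R_jc_i\in E_c$ and $c_iL(I_j^b)\in E_c$;
\item $R(I_j^b)d_i\in E_d$.
\end{itemize}
The endpoints $a_i,d_i$ lie in $A$, while every internal vertex lies in $V(G_P)\cup\{b_i,c_i\}$, which is disjoint from $A$. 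So $Q_i$ is an $A$-path.

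Next I count its length. All intervals of $\J$ lie in $[0,1]$, while $P_C\cup P_X$ lies beyond $2$. So the vertices of $\lambda[L(I_j^a),R(I_j^a)]$ and $\lambda[L(I_j^b),R(I_j^b)]$ are exactly the points of $P_L\cup P_R$ inside $\mathcal{I}_j$, and there are $n_j$ of them in total. The subpath $\lambda[L_j,R_j]$ consists of exactly the $\overline{n_j}=8nN-n_j$ points of $C_j$, since no point of $P_X$ lies in $[2j,2j+1]$. Together with $a_i,b_i,c_i,d_i$, the path $Q_i$ has $n_j+(8nN-n_j)+4=8nN+4=\ell$ vertices. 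This matches $\ell$ under the length convention that the construction uses when setting $\ell=8nN+4$.

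Finally I prove vertex-disjointness of $Q_1,\ldots,Q_k$. The vertices of $V_M$ are distinct across different $i$ by the choice of indices. The $G_P$-parts of $Q_i$ lie in $I_{j_i}^a\cup I_{j_i}^b\cup I_{j_i}$. These sets are pairwise disjoint for distinct $i$: the $2$-intervals are pairwise disjoint, and the intervals $I_j=[2j,2j+1]$ are disjoint from each other and from $[0,1]$.

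The only delicate step is the length count. It requires confirming that no point of $P_C\cup P_X$ falls inside $I_j^a$ or $I_j^b$, and that no point of $P_L\cup P_R$ or $P_X$ falls inside $I_j$. Both follow from the placement of $[0,1]$ versus $[2,\infty)$ and of $P_X$ strictly between $R_j$ and $L_{j+1}$. Since the construction is deterministic given the disjoint family, the statement holds with probability $1$ regardless of the random choices of $n_r(\cdot)$.
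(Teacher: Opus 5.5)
Your proposal is correct and follows the paper's proof: you build the same paths $a_i\cdot\lambda[L(I_j^a),R(I_j^a)]\cdot b_i\cdot\lambda[L_j,R_j]\cdot c_i\cdot\lambda[L(I_j^b),R(I_j^b)]\cdot d_i$, and you check the connecting edges, the count $n_j+\overline{n_j}+4=\ell$ and disjointness more explicitly than the paper does. Your assumption that $I_j^a$ lies to the left of $I_j^b$ is never used in the argument and can be dropped.
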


\begin{proof}
    Without loss of generality, let us assume that the $k$ disjoint $2$-intervals are $\mathcal{I}_1,\mathcal{I}_2,\ldots,\mathcal{I}_k$. Recall that for two vertices $a$ and $b$ in the path graph $G_P$, $\lambda[a,b]$ denotes the path from $a$ to $b$ in $G_P$. Consider the following set of paths, defined for $1\leq i \leq k$,

 $$\lambda_i=a_i\cdot \lambda[L(I_i^a),R(I_i^a)]\cdot b_i \cdot \lambda[L_i,R_i]\cdot c_i\cdot \lambda[L(I_i^b),R(I_i^b)]\cdot d_i$$
    
    
    Observe that the paths in $\{\lambda_i|1\leq j\leq k\}$ are pairwise vertex-disjoint, each having $\ell$ vertices, with two endpoints at two different vertices in $A$ (See Figure \ref{fig:al4} for an illustration). By construction, they contain $\ell=8nN + 4$ many vertices. Thus, the claim holds.
\end{proof}


	\begin{figure}[h!]
		\centering
		\includegraphics[width=0.8\textwidth]{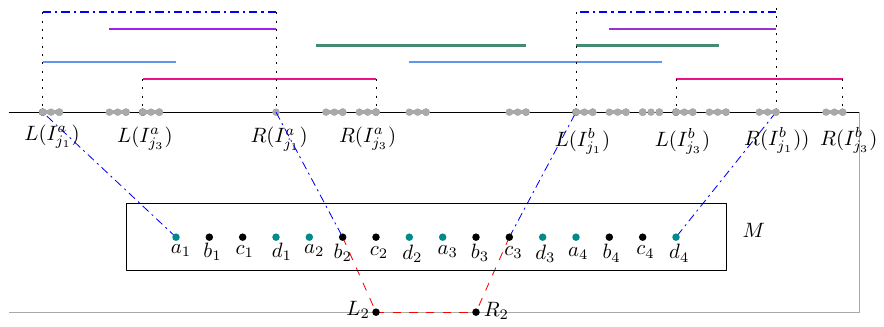}
		\caption{Construction of the paths $\lambda_i$ in the proof of Lemma \ref{lemma:easypart}.}
		\label{fig:al4}
	\end{figure}

Next, we show that if there exist $k$ many vertex-disjoint $(A, \ell)$-paths in $G$ where $\ell=8nN + 4$, then with \emph{high} probability there is a $k$ size independent set in $G_{\J}$.

Recall that the set of intervals $\{I_j|j\in[n]\}$ where $I_j=[L_j,R_j]$. Let $V_j=V(I_j)$. And, the path corresponding to $I_j$ is $\lambda[L_{j},R_{j}]$.  Let us define $\mathcal{C}=\{C_i|1\leq i\leq n\}$. Next, we have the following observation.

\begin{restatable}{observation}{dtponeblock}
\label{obs:dtp-one-block}
For any $A$-path $L$ of length exactly $\ell=8nN+4$ in $G$, there exists an integer $y\in [n]$ such that $V_y\subset V(L)$ and for any $i\neq y$, $V_i\cap V(L)=\emptyset$. Further $L$ contains the subpath $b_{x} \cdotp \lambda[L_{y},R_{y}] \cdotp c_{z}$ for some $x,z\in [k]$.
\label{obs:subpathblock}
\end{restatable}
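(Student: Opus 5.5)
The plan is to decompose an arbitrary $(A,\ell)$-path $L$ into maximal subpaths inside $G_P$, separated by vertices of $V_M$, and then count vertices. As in the proof of Lemma~\ref{lemma:easypart}, I measure length in vertices, so $|V(L)|=\ell=8nN+4$.

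\textbf{Structure of $L$.} Both endpoints of $L$ lie in $A=\{a_i,d_i\}$, and every internal vertex lies in $V(G_P)\cup\{b_i,c_i \mid i\in[k]\}$. So $L$ reads $m_0\cdot S_1\cdot m_1\cdot S_2\cdots S_t\cdot m_t$, where each $m_r\in V_M$, each $S_r$ is a nonempty subpath of the path graph $G_P$, and $t\le 4k-1$. Both endpoints of each $S_r$ are neighbours of $V_M$. These attachment vertices are the points $L(I_j^c),R(I_j^c)$, which lie in $[0,1]$, and the points $L_j,R_j$.

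\textbf{Ruling out $P_X$.} Between consecutive blocks lie $8nN+4=\ell$ points of $P_X$. The attachment vertices all lie in $[0,1]$ or in some $[L_j,R_j]$, so a segment containing a full $P_X$ block would have more than $\ell-2$ vertices, which is impossible. A segment cannot stop inside $P_X$ either, since no $P_X$ vertex is an attachment vertex. Hence every $S_r$ lies either inside the region $[0,1]$ or inside a single $[L_j,R_j]$.

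\textbf{Shape of segments in a block.} The only attachment vertices in $[L_j,R_j]$ are $L_j$ and $R_j$. So a segment there is all of $V_j=\lambda[L_j,R_j]$, or the single vertex $\{L_j\}$, or the single vertex $\{R_j\}$.

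\textbf{Counting.} Choose $N$ large relative to $k$. Segments in $[0,1]$ contribute at most $|P_L\cup P_R|\le 4nN$ vertices. The $V_M$ vertices contribute at most $4k$, and singleton segments at most $4k$. Without a full block the total is at most $4nN+8k<\ell$, so some $V_y\subseteq V(L)$. Since $n_j\le 4nN$, we have $|V_j|=8nN-n_j\ge 4nN$ for every $j$. Two full blocks, together with the two $A$-endpoints and the connectors each block forces, would give more than $\ell$ vertices; therefore exactly one block $V_y$ is fully contained in $L$. The endpoint $L_y$ is adjacent in $G$ only to its $G_P$-neighbour and to the $b_i$'s, and $R_y$ only to its $G_P$-neighbour and to the $c_i$'s. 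Consequently the segment $V_y$ is flanked as $b_x\cdot\lambda[L_y,R_y]\cdot c_z$.

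\textbf{Main obstacle.} The remaining claim is $V_i\cap V(L)=\emptyset$ for $i\ne y$, i.e.\ excluding singleton visits such as $b_{x'}\,L_i\,b_{x''}$ or $c_{z'}\,R_i\,c_{z''}$. The counting argument alone only bounds the number of such visits. My first attempt is an exact count. After removing $V_y$ and its flanking $b_x,c_z$, exactly $n_y+2$ vertices remain, including the two endpoints in $A$. They must reach $b_x$ and $c_z$ from $A$-vertices via the $[0,1]$ region. A singleton detour $b\,L_i\,b'$ consumes two extra $b$'s, and both $b$'s must re-enter $[0,1]$ at some $R(I^a_{j})$. I would then argue that the residual portions from $a$ to $b_x$ and from $c_z$ to $d$ are forced to be single $G_P$-segments with no room for detours. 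If this parity or structure argument fails, I would strengthen the observation's hypotheses, using the freedom to enlarge $|P_X|$ or $N$, or defer this part to the probabilistic uniqueness argument based on Lemma~\ref{lemma:generalized-isolation}.
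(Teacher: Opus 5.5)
\textbf{Verdict: genuine gap.} Everything up to the flanked block $b_x\cdot\lambda[L_y,R_y]\cdot c_z$ is sound. Your reason that there is only one full block, via $|V_j|=8nN-n_j\ge 4nN$, is better than the paper's. The paper argues that there are ``more than $\ell$ points between consecutive blocks'', which only controls movement along $G_P$ and ignores jumps through $b$- and $c$-vertices. Two small fixes to what you wrote:
\begin{itemize}
\item State explicitly that two full blocks need at least three distinct connectors. They may share one $b$ or one $c$, but not both, since that would close a cycle. The count is then at least $8nN+5>\ell$; with only two connectors you would get equality.
\item Your $P_X$ step also needs a buffer between $[0,1]$ and $L_1$. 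The construction as written only places $P_X$ after each $R_j$.
\end{itemize}

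The gap is the clause $V_i\cap V(L)=\emptyset$ for $i\neq y$. You leave it open, and none of your proposed closures works.
\begin{itemize}
\item \emph{Exact count.} It only tells you that the part of $L$ outside $V_y$ has $n_y+4$ vertices. Nothing forces that part to be two $G_P$-segments. Spare $b$'s and $c$'s permit hops such as $R(I^a_j)\,b'\,R(I^a_{j'})$ inside $[0,1]$, and detours $b'\,L_i\,b''$ or $c'\,R_i\,c''$. A chain $b^{(1)}L_{i_1}b^{(2)}L_{i_2}\cdots b_x$ shifts the length in steps of two. So for a single path, whether some such configuration has exactly $\ell$ vertices is decided by the random cluster sizes, not by structure.
\item \emph{Enlarging $|P_X|$.} This is irrelevant, since a singleton visit touches no point of $P_X$.
\item \emph{Separation Lemma.} With polynomial $N$ it only gives unique counts for $2$-intervals. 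Covering all multi-segment configurations would require $N=n^{\Omega(k)}$, which destroys the reduction.
\end{itemize}
So for an arbitrary single $A$-path this clause is not guaranteed by the construction, and the paper's proof does not establish it either.

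The missing idea is to argue for a whole packing, which is all that Observation~\ref{obs:pathstructure} and Lemma~\ref{lemma:hardpart} actually use.
\begin{itemize}
\item Let $P_1,\dots,P_k$ be vertex-disjoint $(A,\ell)$-paths. By what you proved, each $P_j$ contains $b_{x_j}\cdot\lambda[L_{y_j},R_{y_j}]\cdot c_{z_j}$.
\item These connectors are distinct across the paths, and there are exactly $k$ vertices $b_i$ and $k$ vertices $c_i$. Hence each $P_j$ contains exactly one $b$-vertex and one $c$-vertex.
\item Suppose $L_i\in V(P_j)$ with $i\neq y_j$. Then $L_i$ is an internal vertex, and neither of its two path-neighbours can lie in $V_i\setminus\{L_i\}$. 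The interior points of $C_i$ have degree two, which would force all of $V_i$ into $P_j$, giving a second full block.
\item Neither path-neighbour can lie in $P_X$ either, since that would force $\ell$ points of $P_X$ into $P_j$.
\item So both path-neighbours of $L_i$ are $b$-vertices, contradicting the second step. The case of $R_i$ is symmetric with $c$-vertices, and any interior point of $C_i$ forces all of $V_i$.
\end{itemize}
The same pigeonhole also rules out hops through extra $b$- or $c$-vertices inside $[0,1]$. That is what the claimed form $p\cdot\lambda[r,s]\cdot b_x\cdot\lambda[L_y,R_y]\cdot c_z\cdot\lambda[t,u]\cdot q$ silently needs.
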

	
\begin{proof}
As the total number of points/vertices in $[0,1]$ is at most $4nN<\ell$, note that every $(A,\ell)$-path should contain all points/vertices from at least one set of points from $\mathcal{C}$.
Observe that any path with a length of exactly $\ell$ contains exactly one set of points $C_i\in \mathcal{C}$  as between two consecutive set of points in $\mathcal{C}$ there are more than $\ell$ points.
As only neighbors of $C_i$ which is not in $H$ is $b_x$ and $c_z$, where $x,z\in [k]$, therefore, the path must include a subpath of the form $b_{x} \cdot \lambda[L_{y},R_{y}] \cdot c_{z}$ for some $y\in [n]$ and $x,z\in [k]$. 
\end{proof}

Let $\P=\{P_1,\ldots,P_k\}$ be a set of $k$ vertex-disjoint $A$-paths of length exactly $\ell$ in $G$. We show that with \emph{high} probability, there is a $k$ size independent set in $G_{\J}$.
	
\begin{restatable}{observation}{pathstructure}
Each path $P_i\in \P$ is of the form $p\cdotp  \lambda[r,s] \cdotp b_{x} \cdotp \lambda[ L_{y},R_{y}] \cdotp c_{z}\cdotp  \lambda[t,u]  \cdot q$ where $p,q\in A$, $r,s,t,u\in M$, $y\in[n]$ and $x,z\in[k]$.
\label{obs:pathstructure}
\end{restatable}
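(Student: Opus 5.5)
We start from Observation~\ref{obs:dtp-one-block}: every path $P_i\in\P$ contains a subpath $b_x\cdot\lambda[L_y,R_y]\cdot c_z$ for some $y\in[n]$ and $x,z\in[k]$. It remains to describe the two pieces of $P_i$ on either side of this subpath. (The statement writes $r,s,t,u\in M$; we read this as $r,s,t,u\in V(G_P)=P$, vertices of the path graph $H=G-V_M$.)

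\textbf{Step 1: the ends of $P_i$.} By definition, an $A$-path has both endpoints in $A=\{a_i,d_i\}$ and all internal vertices in $V(G)\setminus A$. Neither $b_x$ nor $c_z$ lies in $A$, so the subpath $b_x\cdots c_z$ is strictly internal. Hence $P_i$ extends on both sides, say $P_i=p\cdot Q_1\cdot b_x\cdot\lambda[L_y,R_y]\cdot c_z\cdot Q_2\cdot q$ with $p,q\in A$.

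\textbf{Step 2: each $Q_j$ is a nonempty subpath of $G_P$.} The endpoints $p,q$ have all their neighbours in $G_P$, since $a_i$ is adjacent only to left endpoints $L(I_j^a)$ and $d_i$ only to right endpoints $R(I_j^b)$. Also $b_x$ is not adjacent to any vertex of $A$. So $Q_1$ and $Q_2$ are nonempty.

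The key claim is that neither $Q_1$ nor $Q_2$ contains a vertex of $V_M$.

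\begin{itemize}
\item $Q_1$ and $Q_2$ contain no vertex of $A$, since $A$-vertices are forbidden internally.
\item Suppose $Q_1$ contained some $b_{x'}$ or $c_{z'}$. Every vertex of $V_M\setminus A$ has neighbours only in $\{R(I_j^a),L_j\}$ or $\{L(I_j^b),R_j\}$. To pass through such a vertex while staying in $G_P$ otherwise, the path must enter or leave it through some $L_j$ or $R_j$.
\item Say the path uses $L_{j'}$. It would then have to traverse the block $C_{j'}$, or else continue into the $P_X$ region next to it, which contains $8nN+4=\ell$ points. Either way the path would meet a second set $V_{j'}$ with $j'\neq y$, or would have length exceeding $\ell$. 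The first case contradicts Observation~\ref{obs:dtp-one-block}.
\end{itemize}

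A path segment lying in $G_P$ is a contiguous run of points, so $Q_1=\lambda[r,s]$ and $Q_2=\lambda[t,u]$ for suitable points $r,s,t,u\in P$.

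\textbf{Main obstacle.} The hardest part is the exclusion argument in Step 2. It has to handle carefully the case where the path touches an $L_{j'}$ or $R_{j'}$ vertex without traversing $C_{j'}$. That case rests on counting: the runs of $P_X$ points between blocks are too long to fit, because together with the $C_y$ block the length would exceed $\ell$.
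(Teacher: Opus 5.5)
The gap is in Step 2. You claim that a vertex $b_{x'}$ or $c_{z'}$ inside $Q_1$ or $Q_2$ must be entered or left through some $L_j$ or $R_j$. This is false. By construction $b_{x'}$ is adjacent to $R(I_j^a)$ for every $j\in[n]$, and $c_{z'}$ is adjacent to $L(I_j^b)$ for every $j\in[n]$, and all of these are points of $[0,1]$. So a side piece may look like $\lambda[r,R(I^a_{j_1})]\cdot b_{x'}\cdot\lambda[R(I^a_{j_2}),s]$ with $j_1\neq j_2$. The analogous shape with some $c_{z'}$ between two points $L(I^b_{j_1})$, $L(I^b_{j_2})$ is also possible. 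Such a piece uses no $L_j$, no $R_j$ and no point of $P_X$.

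A detour of this kind meets no $V_{j'}$, so Observation~\ref{obs:dtp-one-block} does not forbid it. It costs one extra vertex, which can be offset by shortening the other $[0,1]$-stretches, so your length/$P_X$ counting does not forbid it either. In general no property of a single $(A,\ell)$-path excludes it. So the ``main obstacle'' you name is not the real one. The $L_{j'}/R_{j'}$ case is immediate from Observation~\ref{obs:dtp-one-block}, since $L_{j'},R_{j'}\in V_{j'}$. That also covers a case missing from your dichotomy, where $L_{j'}$ is used as a single vertex between two $b$-vertices. The detours inside $[0,1]$, by contrast, are not handled at all.

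The missing idea is a count over the whole packing $\P$, rather than an argument about one path. The paper's proof is terse here: after counting the $2k$ vertices of $A$ and invoking Observation~\ref{obs:dtp-one-block}, it asserts contiguity ``by construction''. This count is what makes that assertion true.

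\begin{itemize}
\item By Observation~\ref{obs:dtp-one-block}, each of the $k$ pairwise vertex-disjoint paths of $\P$ contains some $b_x$ and some $c_z$ flanking its block $C_y$.
\item $G$ has only the $k$ vertices $b_1,\ldots,b_k$ and the $k$ vertices $c_1,\ldots,c_k$. So each path contains exactly one of each, and its only other vertices of $V_M$ are its two endpoints in $A$.
\item Hence $V(P_i)\cap V_M=\{p,q,b_x,c_z\}$. The pieces between $p$ and $b_x$ and between $c_z$ and $q$ lie entirely in the path graph $G_P$. Being connected, they are contiguous stretches $\lambda[r,s]$ and $\lambda[t,u]$.
\end{itemize}

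Your Step 1 and your nonemptiness argument then finish the proof.
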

	
\begin{proof}
Since the total number of points in $A$ is $2k$, each path must contain exactly two points from $A$.
It follows from Observation \ref{obs:subpathblock}  that any such path contains a subpath of the form $b_{x} \cdotp \lambda[L_{y},R_{y} ]\cdotp c_{z}$.
By construction, for any path $P$, that connects $p$ with $b_{x}$, $P\setminus\{p,b_x\}$ induces a continuous set of points on the path graph with endpoints in $M$ (see Figure~\ref{fig:al4}).
A similar argument can be made for $c_z$ and $q$ as well.
Thus, the claim holds.
\end{proof}

{An unordered tuple $I$ is a {\em $3$-interval} if it is a triplet of three pairwise-disjoint intervals in real line.}
Observation~\ref{obs:pathstructure} indicates that from the disjoint paths in $\P$, it is possible to construct a disjoint set of  $3$-intervals $\J=\{((x_i,y_i),(z_i,w_i), I_{j})|1\leq i\leq k\}$ where $x_i,y_i,z_i,w_i\in M$ and $I_j=(L_j,R_j)$.
Next we prove that with \emph{high} probability the interval $(x_i,y_i)=I_{y_i}^a$ and the interval $(z_i,w_i)=I_{y_i}^b$, which will give us the desired set of $k$ disjoint $2$-intervals.

\begin{restatable}{lemma}{hardpart}
\label{lemma:hardpart}
   If there are $k$ vertex-disjoint $(A,\ell)$-paths in $G$, then with probability at least $\frac{2}{3}$, there are $k$ disjoint $2$-intervals in ${\J}$.
\end{restatable}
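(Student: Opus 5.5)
The plan is to turn each solution path into a statement about sums of the random numbers $n_r(\cdot)$, and then use the Separation Lemma (Lemma~\ref{lemma:generalized-isolation}) to show that, with high probability, such a sum can only come from the endpoints of a single $2$-interval.

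\textbf{Step 1: a length equation for each path.} Take a path $P_i\in\P$ and write it in the form given by Observation~\ref{obs:pathstructure}:
$$p\cdot \lambda[r,s]\cdot b_x\cdot \lambda[L_y,R_y]\cdot c_z\cdot\lambda[t,u]\cdot q.$$
I would first determine the endpoints $r,s,t,u$ from the adjacencies in the construction.
\begin{itemize}
\item Without loss of generality $p=a_{i'}$. Then $r=L(I^a_{j_1})$ and $s=R(I^a_{j_2})$ for some $j_1,j_2$.
\item Since $q=d_{i''}$, we get $t=L(I^b_{j_3})$ and $u=R(I^b_{j_4})$.
\item The segments $\lambda[r,s]$ and $\lambda[t,u]$ lie inside $[0,1]$, because the points of $P_X$ prevent them from leaving. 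So they contain only points of $P_L\cup P_R$.
\end{itemize}
Counting vertices gives $4+|\lambda[r,s]|+|C_y|+|\lambda[t,u]|=\ell=8nN+4$. Since $|C_y|=8nN-n_y$, this becomes
$$|\lambda[r,s]|+|\lambda[t,u]|=n_y.$$

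\textbf{Step 2: rewriting the equation as a weight identity.} Let $X$ be the set of all $4n$ interval endpoints, with weight $w(e)=n_r(e)$ drawn uniformly from $[N]$. The blocks $P(e)$ sit inside $\varepsilon$-neighbourhoods of the endpoints and never straddle another endpoint. Hence the number of vertices on $\lambda[r,s]$ equals $w(S)$, where $S$ is the set of endpoints lying in $[r,s]$, and likewise for $[t,u]$. Also $n_y=w(T_y)$, where $T_y=\{L(I^a_y),R(I^a_y),L(I^b_y),R(I^b_y)\}$; this uses the fact that the points inside $I^a_y\cup I^b_y$ are exactly the blocks of the endpoints contained in those two intervals. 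So I would take $T_y$ to be the set of all endpoints contained in $I^a_y\cup I^b_y$.

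Now define the family
$$\mathcal{F}=\bigl\{\, \mathrm{End}[L(I^a_{j_1}),R(I^a_{j_2})]\cup \mathrm{End}[L(I^b_{j_3}),R(I^b_{j_4})] \;:\; j_1,\dots,j_4\in[n] \,\bigr\},$$
where $\mathrm{End}[\cdot,\cdot]$ is the set of endpoints in that range. The two segments are vertex-disjoint, so the union is disjoint and the path equation reads $w(F)=w(T_y)$ with $F\in\mathcal{F}$. Each $T_y$ is itself in $\mathcal{F}$ (take $j_1=\dots=j_4=y$), so $|\mathcal{F}|\le n^4$.

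\textbf{Step 3: applying the Separation Lemma.} Lemma~\ref{lemma:generalized-isolation} with $M=N$ says that all sets in $\mathcal{F}$ have distinct weights with probability at least $1-\binom{n^4}{2}/N$. Choosing $N\ge 3n^8$ makes this at least $2/3$, and $N$ is polynomial, so the reduction stays polynomial. On this event, $w(F)=w(T_y)$ forces $F=T_y$ as sets.

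\textbf{Step 4: reading off the intervals.} From $F=T_y$ I want to conclude $j_1=j_2=y$ and $j_3=j_4=y$. Each of the two ranges is a contiguous block in the sorted order of endpoints. If the two blocks of $T_y$ are non-adjacent, the decomposition of $T_y$ into two contiguous blocks is unique. The left end of each block then pins down the interval, because all endpoints are distinct. So $r=L(I^a_y)$, and similarly $s$, $t$, $u$ are the endpoints of $I^a_y$ or $I^b_y$. The labels $a$/$b$ then fix which block is which.

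This step is the main obstacle. The difficult case is when $I^a_y$ and $I^b_y$ are consecutive in endpoint order, or when a chosen range has the wrong left/right type, because then the split point of the merged block is not obviously determined. I would handle this with the type constraints: the first block must end at some $R(\cdot^a)$ and the second must begin at some $L(\cdot^b)$. If that is not enough, I would enlarge $\mathcal{F}$ so that it also records the split point (still polynomially many sets), or argue that a mis-split would make the two $\lambda$-segments overlap.

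\textbf{Step 5: assembling the independent set.} On the good event, every path $P_i$ traverses exactly $I^a_{y_i}$, $I_{y_i}$ and $I^b_{y_i}$. The paths are vertex-disjoint, so the indices $y_i$ are distinct and the $2$-intervals $\mathcal{I}_{y_i}$ are pairwise disjoint. This gives $k$ disjoint $2$-intervals in $\J$ with probability at least $2/3$.
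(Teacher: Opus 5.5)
\textbf{Main gap: Step~4.} This step cannot be closed the way you plan, because its target is false in general. The target is $j_1=j_2=y$, $j_3=j_4=y$, and with it the claim in Step~5 that $P_i$ traverses exactly $I^a_{y_i},I_{y_i},I^b_{y_i}$.

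Here is a counterexample. Let $\alpha_1<\dots<\alpha_8$ be eight consecutive endpoints, so no other endpoint lies in $[\alpha_1,\alpha_8]$. Take $I^a_y=[\alpha_1,\alpha_6]$ and $I^b_y=[\alpha_7,\alpha_8]$, and a $2$-interval $\mathcal{I}_j$ nested inside $I^a_y$ with $I^a_j=[\alpha_2,\alpha_3]$ and $I^b_j=[\alpha_4,\alpha_5]$. Consider the path $a_{i'}\cdot\lambda[\alpha_1,\alpha_3]\cdot b_x\cdot\lambda[L_y,R_y]\cdot c_z\cdot\lambda[\alpha_4,\alpha_8]\cdot d_{i''}$. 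It is an $(A,\ell)$-path for \emph{every} choice of the random numbers:
\begin{itemize}
\item $\alpha_3=R(I^a_j)$ is adjacent to $b_x$, and $\alpha_4=L(I^b_j)$ is adjacent to $c_z$.
\item No point of $P$ lies strictly between $\alpha_3$ and $\alpha_4$, nor between $\alpha_6$ and $\alpha_7$.
\item Hence the two segments cover exactly the $n_y$ points of $[\alpha_1,\alpha_8]$.
\end{itemize}
Here $F=T_y$, yet $j_2=j_3=j\neq y$.

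All three of your fallbacks fail on this example. The type constraints hold, and the two segments are disjoint. Recording the split point in $\mathcal{F}$ is useless, since the weight depends only on the endpoint set; the two splits have equal weight under every assignment, so no $N$ separates them. The paper's proof takes the same route and asserts that all $2$-intervals $((a,b),(c,d))$ spanned by endpoints have distinct point counts. This example refutes that assertion too, so following the paper would not close the gap.

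\textbf{Missing idea.} You never need to identify $r,s,t,u$. On the good event, $F_i=T_{y_i}$ for every $i$. So $P_i$ contains the whole block $P(e)$, in particular the vertex $e$, for every endpoint $e$ lying in $I^a_{y_i}\cup I^b_{y_i}$. The indices $y_i$ are distinct, since $C_{y_i}\subseteq V(P_i)$. Suppose $\mathcal{I}_{y_i}$ and $\mathcal{I}_{y_{i'}}$ intersected. Then an interval of one would contain an endpoint $e$ of an interval of the other. That gives $e\in T_{y_i}\cap T_{y_{i'}}$, so $e$ is a common vertex of $P_i$ and $P_{i'}$, a contradiction. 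This finishes the lemma.

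\textbf{Smaller gap: Step~1.} The phrases ``without loss of generality $p=a_{i'}$'' and ``since $q=d_{i''}$'' are not justified. Observation~\ref{obs:pathstructure} has already spent the orientation freedom by putting $b_x$ before $c_z$. So $p$ and $q$ may each be an $a$- or a $d$-vertex, and a segment may run right to left. In those cases some segment end is a right endpoint on its left side, or a left endpoint on its right side. Such an end contributes one point rather than its whole block, so the equation becomes $c+w(S)=w(T_y)$ with $1\le c\le 4$.

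The Separation Lemma as stated does not cover this offset, but its proof does. For $S\neq T$ you get $\Pr[w(S)+c=w(T)]\le 1/N$, and for $S=T$, $c\neq 0$ the equation is impossible. A union bound over the polynomially many pairs then leaves only $c=0$, which is exactly the $a$-to-$d$, left-to-right case you treated.
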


\begin{proof}
Let $a,b,c,d$ be any four points in $M$. Without loss of generality, assume that $a<b<c<d$. Consider the $2$-interval $((a,b),(c,d))$ defined by $a,b,c,d$. Let $\J_{\F}$ be the set of all such intervals, formally defined as $\J_{\F}=\{((a,b),(c,d))|a,b,c,d\in \textrm{ and } a<b<c<d\}$. Note that $|\J_{\F}|< n^4 $. 
It follows from  Lemma~\ref{lemma:generalized-isolation} that each $2$-interval in $\J_{\F}$ contains a unique number of points with probability at least $1-\frac{{{n^4}\choose{2}}}{N}$.
If we set $N=3{{n^4}\choose{2}}$, then with probability at least $\frac{2}{3}$, every 2-interval in $\J_{\F}$ contains a unique number of points.
Thus with probability at least $\frac{1}{3}$, for every $1\leq j\leq n$, no other 2-interval except $(I_j^a,I_j^b)$ contains $\ell-V(I_j)-4$ many points where $V(I)$ denotes the number of points in the interval $I$.
Therefore with probability at least $\frac{2}{3}$, for every $3$-interval $((x_i,y_i),(z_i,w_i), I_{y_i})$ in $\J$, $(x_i,y_i)=I_{y_i}^a$ and $(z_i,w_i)=I_{y_i}^b$.
\end{proof}

Observe that in the construction of the graph $G$ from $(\JJ, k)$, the vertex set is $V_M$ is such that $G - V_M$ is a path, $A \subseteq V_M$ and $|V_M| \leq 4k$.
Therefore, the following conclusive theorem arises from the combination of Proposition \ref{theorem:rETHIS}, Lemma \ref{lemma:easypart}, and Lemma \ref{lemma:hardpart}.

	
{\finalresult*}

%% file: CVD+A.tex
\section{FPT algorithm parameterized by ${\cvd} + |A|$}
\label{sec:cvd-with-A}
In this section, we design an \fpt algorithm for the {\ALP} problem parameterized by combining the two following parameters: the size of a cluster vertex deletion set and $|A|$. 
Our algorithm operates under the assumption that we are provided with a minimum size set $M$ as input, satisfying the conditions $A \subseteq M$ and $G-M$ forms a cluster graph. 
This assumption is justified by the fact that one can efficiently find the smallest size cluster vertex deletion set of $G-A$ in time $1.9102^k\cdot n^{\OO(1)}$ \cite{boral2016fast}.

\subparagraph*{Overview of the Algorithm:} Our algorithm starts by making an educated guess regarding the precisely ordered intersection of each path within an optimal solution ($\Pe$) with the modulator set $M$. 
This involves exploring a limited number of possibilities, specifically on the order of $f(|M|)\cdot n^{\OO(1)}$ choices.
Once we fix a choice, the problem reduces to finding subpaths (of any given path in $\Pe$) between modulator vertices satisfying certain length constraints.
To provide a formal description, let $P\in \Pe$ be a path of the form $m_1P_{1,2}m_2m_3m_4P_{4,5}m_5$ (our guess for $P\in \Pe$) where each $m_i\in M$ and each $P_{i,i+1}$ is contained in $ G-M$.
Subsequently, our algorithm proceeds to search for the subpaths $P_{1,2}$ and $P_{4,5}$ each contained in cliques of $G-M$ with endpoints adjacent to vertices $m_1,m_2$ and $m_4,m_5$ respectively.
A collection of constraints within our final \emph{integer linear program (ILP)} guarantees that the combined length of the paths $P_{1,2}$ and $P_{4,5}$ precisely matches $\ell-5$, satisfying the prescribed length requirement.
Before presenting the ILP formulation, we make informed decisions about the cliques that are well-suited and most appropriate for providing these subpaths.
Towards that, we partition the at most $|M|$ many subpaths (originating from all the paths in $\Pe$), with each partition containing subpaths exclusively from a single clique of $G-M$. Moreover, no two subpaths in separate partitions come from the same clique.
Once such a choice is fixed, we apply a color coding scheme on the cliques in $G-M$ where we color the cliques with the same number of colors as the number of sets in the mentioned partition of subpaths into sets.
With a {\em high} probability, each clique involved in the formation of $\Pe$ is assigned a distinct color. These assigned colors play a crucial role in determining the roles of the cliques in providing subpaths and, consequently, in constructing the final solution $\Pe$.
We show that among all the cliques colored with a single color, a largest size \emph{feasible} clique is an optimal choice for providing the necessary subpaths of $\Pe$.
A \emph{feasible} clique is a clique that is able to provide the necessary subpaths determined by its assigned color, barring the length requirements, and, is identified by its adjacency relation with $M$. Therefore, we keep precisely one feasible clique of the maximum size for each color and eliminate the others.
Thus the number of cliques in the reduced instance is bounded by a function $f(m)$.
Following these steps, our problem reduces to finding required subpaths (with length constraints) for which we design a set of ILP equations where the number of variables is a function of $m$. Below, we give a detailed description of our algorithm.\\

\vspace{3pt} 
\hrule 
\vspace{3pt} 
\noindent\textbf{Algorithm:}
\vspace{3pt} 
\hrule 
\vspace{2pt} 
\vspace{-10pt}
\subparagraph{\underline{Phase 1: The Guessing Phase}}
\begin{enumerate}
\item Find a cluster vertex deletion set $S$ of the minimum size in $G-A$. Then, set $M= A \cup S$.

\item Generate all $M'\subseteq M$. For a fixed $M'$, generate all its {\em ordered partitions} such that only the first and last vertices of every set of the partitions are from $A$. Let $\M=\{M'_1,\ldots,M'_{|\M|}\}$ be a fixed such partition of $M'$.

\item Without loss of generality, let $M'_i=(m_{g(i)},m_{g(i)+1},\ldots,m_{l(i)})$. For any two consecutive vertices $m_j$ and $m_{j+1}$ where $j\in[g(i),l(i)-1]$, we introduce a variable $P_{j,j+1}$. These variables serve as placeholders representing subpaths within the optimal solution that we are aiming to find. We use $\mathbb{P}$ to denote the collection of $P_{j, j+1}$.

\item We enumerate all partitions of $\mathbb{P}$. Let $\X_{\mathbb{P}}=\{\Pe_1,\Pe_2,\ldots,\Pe_x\}$ be a partition of $\mathbb{P}$ with $x=|\X_{\mathbb{P}}|$.

\item Additionally, we generate all \emph{valid} functions $h: \mathbb{P}\rightarrow \{0,1,>1\}$, i.e., we guess whether the length of each subpath is exactly 0, 1, or more than 1. The function $h$ is \emph{valid} if $h(P_{j,j+1})=0$, implies $m_j$ and $m_{j+1}$ are adjacent. The validity of $h$ concerning the other two values (1 and $>1$) is inherently assured by the presence of a feasible clique.


\item We create a function $T:\X_{\mathbb{P}}\rightarrow2^{2^{M}}$ as follows. For each part $\Pe_i \in \X_{\mathbb{P}}$, create the set 
$T(\Pe_i)=\{\{m_j,m_{j+1}\}: P_{j, j+1} \in \mathcal{P}_i \text{ and }h(P_{j,j+1})=1\}\cup \{\{m_j\}\cup\{m_{j+1}\}: P_{j, j+1} \in \mathcal{P}_i \text{ and } h(P_{j,j+1})> 1\}$.




\end{enumerate}
At the conclusion of Step 6, we have generated all the tuples denoted as $\tau=(M',\M,\mathbb{P},\X_{\mathbb{P}},h,T)$. For each specific $\tau$, we proceed to Phase 2 in order to bound the number of cliques and subsequently generate a set of ILPs.

\subparagraph{\underline{Phase 2: Bounding the number of Cliques Phase}}
\begin{enumerate}
\item  We color all the cliques in $G-M$ with $x$ many colors uniformly at random. From a set of cliques colored with color $i$, we choose a largest \emph{feasible} clique $Q_i$. A clique with color $i$ is \emph{feasible} if and only if it has $|T(\Pe_i)|$ distinct vertices, each being a neighbor to a different set in $T(\Pe_i)$. Also, we denote the above coloring function by $\mathcal{C}_{\tau}$.

\item Following the Algorithm, we construct the following set of ILPs.

\vspace{-5pt}
\begin{mdframed}[backgroundcolor=green!10,topline=false,bottomline=false,leftline=false,rightline=false]
 \begin{center}
 \vspace{-12pt}
   \begin{equation*}
\begin{array}{ll@{}rl}
{}{ILP(\tau,\mathcal{C}_{\tau}):}{}{}
  & \displaystyle\sum\limits_{j=g(i)}^{l(i)-1} x_{j,j+1} = \ell-|M'_i|, &~\forall M'_i\in \M \\
 & \displaystyle\sum\limits_{P_{j,j+1}\in \Pe_i}^{} x_{j,j+1} \leq |Q_i|, &\forall \Pe_i \in \X_\mathbb{P}\\
                 &                                                x_{j,j+1}=1, &\text{iff }~ h(P_{j,j+1})=1
\end{array}
\end{equation*}
 \end{center}
 \end{mdframed}

\end{enumerate}

\hrule 
\vspace{3pt} 


%

\vspace{-10pt}
\subparagraph*{Correctness of The Guessing Phase (Steps 1 to 6):} Let $\Pe=\{P_1,\ldots, P_p\}$ be an optimal solution of size $p$ where any $\ell$-path in $\Pe$ by definition has both its endpoints in $A$.
In the above ILP, note that the variable $x_{j, j+1}$ represents the path $P_{j, j+1}$.
The $M'$ generated in the Step 2 is $V(\Pe)\cap M$. Each $M'_i$ is the {\em ordered intersection} of a path $P_i$ with $M'$ , i.e., the sequence of vertices of $V(P_i)\cap M$ appearing in the path is given by $M_i'$ (Step 2).
In Step 3, we create the variables (corresponding to the subpaths of $\Pe$) for each pair of consecutive vertices from $M'_i$ for every $M'_i\in\M$ (Step 3). Any such subpath with a non-zero length is contained in exactly one of the cliques in $G-M$. The subpaths of $\Pe$ that come from single cliques together are denoted by the partition $\X_{\mathbb{P}}$, i.e., the subpaths (in $\Pe$) in a part of the partition are exactly the subpaths that are contained in a single clique of $G-M$ (Step 4).
We further divide these subpaths into three groups ($h^{-1}(0)$, $h^{-1}(1)$, $h^{-1}(>1)$) based on whether their lengths are exactly 0, 1, or more than 1 (Step 5).
If $h(P_{j,j+1})=0$, then the corresponding subpath has length {\em zero} implying $m_j$ and $m_{j+1}$ are adjacent in $\Pe$.  If $h(P_{j,j+1})=1$, then the corresponding subpath has length exactly one and the lone vertex in the subpath is adjacent to both $m_j$ and $m_{j+1}$ in $\Pe$. 
When $h(P_{j,j+1})>1$, the correcponding subpath has length {\em more than one} and has two vertices, one is adjacent to $m_j$ while another is adjacent to $m_{j+1}$. The set $T(\Pe_i)$ basically stores the adjacency relations (required) between the endpoints of non-zero length subpaths of $\Pe$ and $M$. The correctness of the first $6$ steps follows directly because of the fact that we exhaust all possible choices at each step. 

\vspace{-10pt}
\subparagraph*{Correctness of Phase 2:} 
We apply the {\em color-coding} scheme in the first step of second phase of  our algorithm. Each clique $Q_i \in \{Q_1,\ldots,Q_x\}$ that contains vertices from $\Pe$ gets a different color with \emph{high} probability. Moreover, each $Q_i$ colored with color $i$ exactly contains the subpaths denoted by $\Pe_i$. We compute this exact probability later in the runtime analysis of our algorithm. Notice the role of the cliques in $G-M$ is to provide subpaths of certain lengths between the vertices from $M$. And, a {\em feasible} clique of color $i$ is able to provide all the subpaths in $\Pe_i$ between the vertices of $M$, barring the length requirements.
Thus, given a feasible clique $Q'_i$ of maximum size, we can reconstruct an equivalent optimal solution $\Pe'$ in which all its paths within $\Pe_i$ are entirely contained within $Q'_i$, all the while sticking to the specified length requirements. This reconstruction can be systematically applied to guarantee the existence of an optimal solution where all its subpaths are derived from a collection of feasible cliques with the largest size available from each color class. Let $\tau$ be a correctly guessed tuple, $\mathcal{C}_{\tau}$ be a correct coloring scheme (coloring each of the $x$ cliques involved in the solution distinctly), and $Q'_i$ be a feasible clique of the largest possible size colored with color $i$ respecting the guessed tuple for each color $i$. Then, there exists an optimal solution that is entirely contained in  the subgraph $G[\bigcup_{i=1}^{x}Q'_i \cup M]$.
To obtain the desired solution for a {\yes} instance, we narrow our attention to the subgraph and formulate the specified ILP denoted as ILP($\tau,\mathcal{C}_{\tau}$). The primary objective of the ILP equations is to guarantee that every path we are seeking has an exact length of $\ell$.
The first set of constraints  enforces the specified length requirements for the subpaths, ensuring that each subpath adheres to its designated length. 
And the second set of constraints ensures that the combined total of all vertices to be utilized from a clique (across all subpaths) in a solution to the ILP does not surpass the total number of vertices within the largest feasible clique.

\vspace{-10pt}
\subparagraph*{Runtime Analysis:}
 The total number of ordered partitions generated in Step 2 is  $\OO(2^m\cdot m^m)$. In Step $3$, $|M'_i|$ can be of $\OO(m)$. Hence for a fixed $\M$, the number of permutations enumerated is of $\OO(m!)\cdot m$. Notice the number of $P_{j,j+1}~ (|\Pb|)$ is bounded by $m$. Therefore in the next step, $\X_{\Pb}$ can be partitioned in $m^m$ ways. In Step 5, each $P_{j,j+1}\in \Pb$ takes one of the three values. Thus there can be at most $3^m$ assignments for a fixed  $\Pb$. Hence total number of tuples generated at the end of Step 6 is bounded by  $\OO(2^m\cdot m^m\cdot m!\cdot m\cdot m^m \cdot 3^m)\equiv 2^{\OO(m\log m)}$.
 Since $x\leq m$, the probability that we get a coloring that  colors all the $x$ cliques properly   and distinctly is at least $\frac{1}{m!}$. Once we have a good coloring instance, we formulate the ILP($\tau,\mathcal{C}_{\tau}$) to solve the problem. Since both $|\M|$ and $|\X_{\Pb}|$ are bounded by $\OO(m)$, the ILP can be solved in time $m^{\OO(m)}$. This immediately implies a randomized \fpt algorithm running in time $2^{\OO(m\log m)}$. Notice the randomization step (Phase 2) can be  derandomized using $(m,x)$-universal family \cite{CyganFKLMPPS15}. And we have the following theorem.

{\cvdplusA*}
 

%% file: CVD+l-Aritra.tex
\section{FPT algorithm when parameterized by ${\cvd} + \ell$}
\label{sec:cvd+l}

In this section, we design an {\fpt} algorithm for {\alpp} when parameterized by the combined parameter ${\cvd} +\ell$.
Let the input graph be $G$ with two subsets $A, M \subseteq V(G)$ such that $G - M$ is a cluster graph, $|A| = a$ and $|M| = m = {\cvd}(G)$.
The considered parameter is $m + \ell$.
%
%
We denote the set of cliques in $G-M$ by $\Qq$ and the vertices in the cliques by $V_{\Qq}=\cup_{Q\in \Qq} V(Q)$.
Let $\I=(G, M, A, k, \ell)$ be a {\yes} instance of {\alpp} and let $\Pe$ be any arbitrary solution  for $\I$.
We denote the set paths in $\Pe$ that contain at least one vertex from $M$ by ${\PM}$ and the set of paths in $\Pe$ that are completely inside a clique by ${\PQ}$.
Note that $\PM\cap\PQ=\emptyset$ and ${\Pe} = {\PM} \cup {\PQ}$.

\begin{restatable}{observation}{cvdlmaxvertex}
\label{obs:cvd+l-maxvertex}
The total number of vertices present in the paths $\PM$ is at most $\ell \cdot m$, i.e. $|\cup_{P\in \PM}V(P)|\leq \ell \cdot m$.
\end{restatable}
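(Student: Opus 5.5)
The plan is a direct counting argument: bound the number of paths in $\PM$ by $m$, then use that every path in the solution has the same fixed number of vertices.

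First, I fix the length convention. In the construction of Section~\ref{sec:hardness-result}, ``length $\ell$'' is measured as the number of vertices on the path. This matches how the paths $\lambda_i$ in Lemma~\ref{lemma:easypart} were counted. So every path $P \in \Pe$ satisfies $|V(P)| = \ell$. If length were instead counted in edges, each path would have $\ell+1$ vertices and the bound would become $(\ell+1)\cdot m$. I will state the observation with the vertex convention used in the paper.

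Second, I bound $|\PM|$. By definition, each path $P \in \PM$ contains at least one vertex of $M$. The paths in $\Pe$ are pairwise vertex-disjoint, so I can injectively assign to each $P \in \PM$ one vertex of $V(P)\cap M$. Hence $|\PM| \leq |M| = m$.

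Finally, by disjointness, $|\bigcup_{P\in\PM} V(P)| = \sum_{P\in \PM}|V(P)| = \ell\cdot|\PM| \le \ell\cdot m$, which is the claim.

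There is no real obstacle here. The only point needing care is the vertex-versus-edge length convention, and I resolve it as described in the first step.
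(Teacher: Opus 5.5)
Your argument is correct and is the same reasoning the paper leaves implicit (it states the observation without proof): vertex-disjointness gives an injection from $\PM$ into $M$, so $|\PM|\le m$, and each path has exactly $\ell$ vertices. Your vertex-count reading of ``length'' is also the one this section uses, since Reduction Rule~\ref{rule:cvd+l-path} forms a solution path from two vertices of $A$ together with $\ell-2$ further vertices.
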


Next, we present a marking procedure followed by a few reduction rules to bound the size of each clique.

\begin{mdframed}[backgroundcolor=cyan!5,topline=false,bottomline=false,leftline=false,rightline=false]

\begin{center}
 \vspace{-4pt}
{\bf Marking Procedure.} 
\begin{description}
\item[1.] For each vertex $u\in M$, mark $\ell m+ 1$ many of its neighbors from both $A\cap V(Q)$ and $V(Q)\setminus A$ for each clique $Q\in \Qq$. If any clique does not contain that many neighbors of $u$, we mark all the neighbors of $u$ in that clique. 
\item[2.] For each pair of vertices $u,v$ in $M$, mark $\ell m+ 1$ many common neighbors of $u$ and $v$ outside $A$, in every clique of $\Qq$. 
\item[3.] Additionally, mark $\ell m+ 1$ many vertices from both $A\cap V(Q)$ and $V(Q)\setminus A$ for each clique $Q\in \Qq$.
\end{description}
 \end{center}
 \end{mdframed}
 
In the marking procedure the upper bound on the number of marked vertices for each clique $Q$ from $A$ (in $A\cap V(Q)$) is $f_1(\ell,m)=(m+1)(\ell m+1)$ and the number of marked vertices outside $A$ (in $V(Q)\setminus A$) is $f_2(\ell,m)=(m^2+m+1)(\ell m+1)$.

\vspace{10pt}
\subparagraph{Exchange Operation:} Consider any two arbitrary paths  $P_1,P_2\in\Pe$ and $<a_1,a_2,a_3>$ and $<b_1,b_2,b_3>$ be any two subsequences of vertices in $P_1$ and $P_2$ respectively. Let $a_2$ be a neighbour of $b_1$ and $b_3$ and $b_2$ be a neighbour of $a_1$ and $a_3$. We define the operation \emph{exchange} with respect to $P_1$, $P_2$, $a_2$, and $b_2$ as follows. We create the path $P_1'$ by replacing the vertex $a_2$ with $b_2$, and we create the path $P_2'$ by replacing the vertex $b_2$ with $a_2$. Observe that $\Pe\setminus\{P_1,P_2\}\cup\{P_1',P_2'\}$ also forms a solution.  

\begin{lemma}
    There exists a solution $\Pe$ for $(G,M,A,k,\ell)$ such that all the vertices in $\PM$ are either from $M$ or are marked.\label{lemma:onlymarked}
\end{lemma}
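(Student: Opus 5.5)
We argue by an extremal choice. Among all solutions with $k$ paths, fix $\Pe$ minimizing the number of unmarked vertices of $V_{\Qq}$ used by $\PM$. Suppose some $v \in V(P)\setminus M$ with $P \in \PM$ is unmarked, and let $Q$ be the clique containing $v$. We will replace $v$ by a marked vertex $v'$ of $Q$ that plays the same role on $P$. Such a $v'$ must be either unused by $\Pe$, or used by a path that can absorb $v$ in exchange, as in the Exchange Operation. The new solution has strictly fewer unmarked vertices in $\PM$, which contradicts minimality.

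\textbf{Case analysis on the role of $v$.} Let $u,w$ be the neighbours of $v$ on $P$ (one may be missing if $v$ is an endpoint, and then $v\in A$). Each of $u,w$ lies in $M$ or in $Q$, since vertices of different cliques are nonadjacent. Any vertex of $Q$ is adjacent to neighbours lying in $Q$, so the only constraints on $v'$ are these three.

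\begin{itemize}
\item $v'$ must be adjacent to the $M$-neighbours among $u,w$.
\item $v'$ must lie in $A$ if $v$ is an endpoint.
\item $v'$ must lie outside $A$ if $v$ is internal.
\end{itemize}

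This leads to four situations.

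\begin{itemize}
\item $u,w\in M$, so $v$ is an internal common neighbour: Marking Step 2 supplies the candidates.
\item Exactly one of $u,w$ is in $M$: Step 1 (non-$A$ part) supplies them.
\item $v$ is an endpoint in $A$ with its neighbour in $M$: Step 1 ($A$ part) supplies them.
\item Neither neighbour is in $M$: Step 3 supplies them.
\end{itemize}

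Because $v$ is unmarked, the corresponding step did not exhaust the eligible vertices. Hence it marked exactly $\ell m+1$ eligible vertices of $Q$.

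\textbf{Finding a free candidate.} By Observation~\ref{obs:cvd+l-maxvertex}, at most $\ell m$ vertices lie on paths of $\PM$. So among the $\ell m+1$ marked candidates, some $v'$ is not on any path of $\PM$. Either $v'$ is unused, or it lies on a path $P'\in\PQ$ entirely inside $Q$.

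\begin{itemize}
\item If $v'$ is unused, replace $v$ by $v'$ on $P$. Length, disjointness and the $A$-path property are preserved.
\item If $v'\in V(P')$ with $P'\subseteq Q$, swap $v$ and $v'$. $P$ gets $v'$ by the eligibility above. $P'$ gets $v$, which is fine because $P'$ lies inside the clique $Q$, so $v$ is adjacent to the neighbours of $v'$ on $P'$. Also, $v$ and $v'$ have the same $A$-status, since both are endpoints in $A$ or both are internal outside $A$. This is exactly the Exchange Operation. $P'$ stays in $\PQ$, since it still lies inside $Q$ and contains no $M$-vertex.
\end{itemize}

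In either case the number of unmarked vertices on $\PM$ decreases, which gives the contradiction.

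\textbf{Main obstacle.} The delicate point is the bookkeeping of $A$-membership in the swap. $P'$ is an $A$-path inside $Q$, and $v$ could be in $A$ while $v'$ is an internal vertex of $P'$, which would break $P'$. The case split handles this by demanding $v'\in A$ exactly when $v\in A$. Internal vertices of $P$ are never in $A$ by definition, so this matches the marking, which is separated into $A\cap V(Q)$ and $V(Q)\setminus A$. Step 2 only marks outside $A$, and this is consistent because a vertex with two neighbours on the path is internal.

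One further degenerate case arises when both path-neighbours of $v$ are $M$-vertices and $v\in A$. This is impossible, since then $v$ would be internal.
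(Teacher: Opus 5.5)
Your proof is correct and takes the paper's route: the $\ell m$ bound of Observation~\ref{obs:cvd+l-maxvertex} yields a marked candidate in $Q$ that lies off $\PM$, and you either substitute it directly or exchange it with the $\PQ$-path that contains it. The minimal-counterexample framing and the explicit endpoint case only make rigorous what the paper does by iterating replacements and by saying the one-neighbour case is ``argued similarly''; in that case you use the $A$-parts of Steps 1 and 3 so the swapped vertices have the same $A$-status.
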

\begin{proof}
    Suppose there is a path $P$ in $\PM$ that contains an unmarked vertex $w$. There are at most two neighbors of $w$ in $P$. We assume here that there are exactly two neighbors of $w$. The case when $w$ has only one neighbor in $P$ can be argued similarly.     
    Let the neighbors of $w$ in $P$ be $w_1$ and $w_2$. We have the following three exhaustive cases.

    \begin{description}
	\item[$w_1,w_2\in V(Q)$]: Recall that we have marked an additional $\ell m+ 1$ many vertices from  outside $A$ in each clique (vertices that $w$ may be replaced with) and from Observation~\ref{obs:cvd+l-maxvertex}, we know at most $m\ell$ many of them are contained in $\PM$.
	Thus, there is at least one marked vertex, say, $w'$  in $V(Q)$,  that is not contained in any path of $\PM$. If $w'$ is also not contained in any path of $\PQ$, we simply replace $w$ by $w'$ in $P$. If it is in a path $P'\in \PQ$, we do an \emph{exchange} operation with respect to $P$, $P'$, $w$ and $w'$ and reconstruct a new solution.

    \item[$w_1\in V(Q)$ and $w_2\in M$]: Recall that we have marked $\ell m+ 1$ many vertices   from $N(w_2)\cap V(Q)\setminus A$. From Observation~\ref{obs:cvd+l-maxvertex}, at most $m\ell$ many of them are contained in $\PM$. Hence, there is at least one marked vertex in $V(Q)\setminus A$ that is not contained in any path from $\PM$. Similar to the arguments outlined in the previous case, we replace the vertex $w$ by $w'$ when $w'$ is not contained in any path of $\PQ$, or perform an \emph{exchange} operation with respect to $P$, $P'$, $w$ and $w'$ and reconstruct a new solution when $w'$ is contained in some $P' \in \PQ$. 
	
    \item[$w_1,w_2\in M$]: Using arguments similar to the previous case, we can again replace an unmarked vertex in $\PM$ with a marked vertex and reconstruct a new solution for this case as well.
\end{description}
    
After exhaustively replacing unmarked vertices of $\PM$ (that are not in $M$), we derive a solution $\Pe$ in which paths from $\PM$ do not include unmarked vertices from the cliques.
\end{proof}

Henceforth, we seek for a solution $\Pe$ for $(G,M,A,k,\ell)$ such that all the vertices in any path of $\PM$ are either from $M$ or marked. 
Next, we have the following reduction rule.

\begin{restatable}{redrule}{rulecvdlpath}
\label{rule:cvd+l-path}
If there exists a clique $Q$ containing a pair of unmarked vertices $u,v\in A$ and a set $X$ of $(\ell -2)$ unmarked vertices outside $A$, then delete $u,v$ along  with $X$ and return the reduced instance $(G-\{X\cup \{u,v\}\},M,A\setminus \{u,v\},k-1,\ell)$.
\end{restatable}



We prove the safeness of the reduction rule below.

\begin{restatable}{lemma}{cvdlpath}
\label{lemma:cvd+l-path}
$(G,M,A,k,\ell)$ is a \yes instance if and only if $(G-\{X\cup \{u,v\}\},M,A\setminus \{u,v\},k-1,\ell)$ is a \yes instance.
\end{restatable}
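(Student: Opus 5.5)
The forward direction is the substantive one; the backward direction is immediate. For the backward direction, take a solution of $k-1$ vertex-disjoint $(A\setminus\{u,v\},\ell)$-paths in $G' = G-(X\cup\{u,v\})$. Since $Q$ is a clique, we can add the path $u\cdot x_1\cdots x_{\ell-2}\cdot v$, where $X=\{x_1,\ldots,x_{\ell-2}\}$ in any order. This path has $\ell$ vertices, both endpoints in $A$, and internal vertices outside $A$, so it is an $(A,\ell)$-path. It is disjoint from the other $k-1$ paths because those live in $G'$. The old paths remain $A$-paths in $G$: their endpoints lie in $A\setminus\{u,v\}\subseteq A$, and their internal vertices avoid $A\setminus\{u,v\}$ and also avoid $u,v$, which are not in $G'$.

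For the forward direction, start with a solution $\Pe$ of $(G,M,A,k,\ell)$. By Lemma~\ref{lemma:onlymarked} we may assume that every vertex of every path in $\PM$ lies in $M$ or is marked. Since $u$, $v$ and all vertices of $X$ are unmarked vertices of $Q$, none of them lies on a path of $\PM$. Hence the only paths of $\Pe$ meeting $Z := X\cup\{u,v\}$ are in $\PQ$, and each such path lies entirely inside $Q$, because a path of $\PQ$ stays inside a single clique. The aim is to transform $\Pe$ into a solution that contains the path $u X v$ and whose other paths avoid $Z$. Deleting that path then gives $k-1$ paths in $G'$. These remain $(A\setminus\{u,v\},\ell)$-paths because none of them uses $u$ or $v$.

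The exchange argument inside $Q$ is the main step. Let $\mathcal{R}\subseteq\PQ$ be the paths meeting $Z$; all of them lie in $Q$. Each is an $A$-path inside a clique, so it consists of exactly two $A$-vertices of $Q$ and $\ell-2$ non-$A$ vertices of $Q$. Count the resources: the paths in $\mathcal{R}$ together use $2|\mathcal{R}|$ vertices of $A\cap V(Q)$ and $(\ell-2)|\mathcal{R}|$ vertices of $V(Q)\setminus A$. Because $Q$ is a clique, any two $A$-vertices of $Q$ together with any $\ell-2$ non-$A$ vertices of $Q$, in any order, form an $(A,\ell)$-path. Let $U$ be the union of the vertices of $\mathcal{R}$ and $Z$. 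We need to repartition $U$ into $|\mathcal{R}|+1$ disjoint paths, one of which is exactly $uXv$, with the others being the paths we keep.

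There are two cases.
\begin{itemize}
\item If $|\mathcal{R}|\ge 1$, drop one path of $\mathcal{R}$. Use $uXv$ as one path. The remaining $|\mathcal{R}|-1$ paths are built from the leftover $A$-vertices and non-$A$ vertices of $U$ outside $Z$.
\item If $|\mathcal{R}|=0$, simply add $uXv$ and drop an arbitrary path of $\Pe$. This is fine because $k\ge 1$ is implicit, or more simply because we only need $k-1$ paths in $G'$ anyway.
\end{itemize}

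The cleanest formulation handles both cases at once: take $\Pe\setminus\mathcal{R}$, which avoids $Z$, and add $|\mathcal{R}|-1$ new paths when $|\mathcal{R}|\ge1$. In total $U\setminus Z$ contains at least $2|\mathcal{R}|-2$ $A$-vertices and at least $(\ell-2)|\mathcal{R}|-(\ell-2)$ non-$A$ vertices. This holds because $Z$ contributes exactly $2$ $A$-vertices and $\ell-2$ non-$A$ vertices to $U$, and $U$ contains at least $2|\mathcal{R}|$ $A$-vertices and $(\ell-2)|\mathcal{R}|$ non-$A$ vertices. Those vertices suffice to form $|\mathcal{R}|-1$ disjoint paths inside the clique $Q$, avoiding $Z$. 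Combined with $\Pe\setminus\mathcal{R}$, this gives at least $(k-|\mathcal{R}|)+(|\mathcal{R}|-1)=k-1$ disjoint $(A\setminus\{u,v\},\ell)$-paths in $G'$, which completes the proof.
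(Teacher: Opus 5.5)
Your proof is correct and follows essentially the same route as the paper. You use Lemma~\ref{lemma:onlymarked} to confine every path meeting $X\cup\{u,v\}$ to $\PQ$ inside $Q$, and then use that $Q$ is a clique to rearrange these paths so that at most one is lost. Your explicit count, rebuilding $|\mathcal{R}|-1$ paths from $V(\mathcal{R})\setminus(X\cup\{u,v\})$, is a cleaner and checkable version of the paper's ``repeated exchange'' step, and it also covers the paper's separate case $|\Pe_X|\le 1$.

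One slip: the intermediate remark about repartitioning $U$ into $|\mathcal{R}|+1$ paths is not possible in general. Your final formulation never relies on it, so the argument stands.
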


\begin{proof}
    If $(G-\{X\cup \{u,v\}\},M,A\setminus \{u,v\},k-1,\ell)$ is a \yes instance with a solution $\Pe^R$, then $(G,M,A,k,\ell)$ is also a \yes instance as $\Pe^R$ along with the path formed by $X\cup \{u,v\}$ forms a solution to the instance.

    Conversely, let $(G,M,A,k,\ell)$ be a \yes instance with a solution $\Pe$. Now, we will obtain a solution $\Pe'$ for $G-\{X\cup \{u,v\}\}$ of size at least $k-1$. We denote the paths in $\Pe$ that intersect with $X\cup \{u,v\}$ by $\Pe_X$ (with slight abuse of notation). 
    If $|\Pe_X|\leq 1$ then $\Pe'=\Pe\setminus\Pe_X$ is the desired solution. From now on, we assume that $|\Pe_X|>1$.  
    Observe that as the vertices in $X\cup\{u,v\}$ are unmarked, $\Pe_X\cap\PM=\emptyset$ (from Lemma~\ref{lemma:onlymarked}). 
    We can reconstruct a new solution $\Pe_1$ from $\Pe$ where exactly one path $P$ in $\Pe_1$ intersects $X\cup\{u,v\}$, and the rest of the paths in $\Pe_1$ do not intersect with $X\cup\{u,v\}$, by repeatedly utilizing the \emph{exchange} operation among the paths in $\Pe_X$. Observe that $\Pe'=\Pe_1\setminus\{P\}$ is the desired solution.
    Thus, the claim holds. 
\end{proof}

Note that the upperbound on the number of marked vertices from $A\cap V(Q)$ is $f_1(\ell,m)=(m+1)(\ell m+1)$ and the number of marked vertices $V(Q)\setminus A$ is $f_2(\ell,m)=(m^2+m+1)(\ell m+1)$. And, after exhaustive application of Reduction Rule \ref{rule:cvd+l-path}, in any clique $Q$, either there are at most $\ell-3$ unmarked vertices in $V(Q)\setminus A$ or at most one unmarked vertex in $A\cap V(Q)$.

\begin{description}
    
    \item[Case{\it (i)}:] There is at most one unmarked vertex in $A\cap V(Q)$. 
    
    \item[Case{\it (ii)}:] There are at most $\ell-3$ unmarked vertices in $V(Q)\setminus A$.
\end{description}

Based on the aforementioned cases, we introduce two reduction rules—one for each case—that help us limit the overall number of unmarked vertices in $Q$, thereby bounding the size of each clique in $G-M$. First we consider the Case (\emph{i}) when the number of unmarked vertices form $A\cap V(Q)$ is bounded by one and bound the number of the unmarked vertices in $V(Q)\setminus A$ with the following reduction rule.


\begin{restatable}{redrule}{rulecvdlnonA}
\label{rule:cvd+l-non-A}
If there exists a clique $Q$ containing at most one unmarked vertex from $A$ and at least $(f_1(\ell,m)+1)\cdot \frac{\ell}{2}+1$ unmarked vertices outside $A$, then delete one unmarked vertex $u\in V(Q)\setminus A$ and return the reduced instance $(G-\{u\},M,A,k,\ell)$.
\end{restatable}

Let $G'=G-\{u\}$ be the new graph following an application of Reduction Rule~\ref{rule:cvd+l-non-A}. We prove the safeness of the reduction rule in the following lemma.

\begin{restatable}{lemma}{cvdlnonA}    
\label{lemma:cvd+l-non-A}
 $(G,M,A,k,\ell)$ is a \yes instance if and only if $(G',M,A,k,\ell)$ is a \yes instance.
\end{restatable}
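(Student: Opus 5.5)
The backward direction is immediate: $G'$ is an induced subgraph of $G$, so any $k$ vertex-disjoint $(A,\ell)$-paths in $G'$ are also such paths in $G$. For the forward direction, I would start from a solution $\Pe$ of $(G,M,A,k,\ell)$ chosen, via Lemma~\ref{lemma:onlymarked}, so that every path of $\PM$ uses only vertices of $M$ and marked vertices. Since $u$ is unmarked, $u$ is not on any path of $\PM$. If $u$ lies on no path of $\Pe$, we are done. Otherwise $u$ lies on a path $P\in\PQ$, so $V(P)\subseteq V(Q)$. Then $P$ has two endpoints in $A\cap V(Q)$ and $\ell-2$ internal vertices in $V(Q)\setminus A$; here I read $\ell$ as the number of vertices, which is the convention of Reduction Rule~\ref{rule:cvd+l-path}. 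The goal is to modify $\Pe$ inside $Q$ so that $u$ is avoided.

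The key step is counting. Every path of $\PQ$ contained in $Q$ uses two vertices of $A\cap V(Q)$. At most one of these is unmarked, and at most $f_1(\ell,m)$ are marked, so at most $(f_1(\ell,m)+1)/2$ paths of $\PQ$ lie inside $Q$. Together these use at most $(f_1(\ell,m)+1)\cdot\frac{\ell-2}{2}$ vertices of $V(Q)\setminus A$. Paths of $\PM$ use no unmarked vertices. So the number of unmarked vertices of $V(Q)\setminus A$ covered by $\Pe$ is strictly less than $(f_1(\ell,m)+1)\cdot\frac{\ell}{2}+1$. Because $Q$ contains at least that many unmarked vertices outside $A$, there is an unmarked vertex $u'\in V(Q)\setminus A$, $u'\neq u$, that lies on no path of $\Pe$.

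Finally, $u$ is an internal vertex of $P$ and both of its neighbours on $P$ lie in the clique $Q$, hence are adjacent to $u'$. Replacing $u$ by $u'$ in $P$ therefore gives a path with the same endpoints, the same length, and internal vertices still outside $A$. The resulting family is a solution of size $k$ that avoids $u$, so it is a solution for $(G',M,A,k,\ell)$.

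The main obstacle is the counting bound, specifically getting the slack right so that a free unmarked vertex distinct from $u$ is guaranteed. This is where the threshold $(f_1(\ell,m)+1)\cdot\frac{\ell}{2}+1$ is needed. It also relies on Lemma~\ref{lemma:onlymarked} to rule out $\PM$ consuming unmarked vertices. Here we can fix one solution satisfying Lemma~\ref{lemma:onlymarked} before doing the replacement, so it is not an issue that the marking is defined before $u$ is deleted.
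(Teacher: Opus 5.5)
Your proof is correct and follows the paper's argument: you use Lemma~\ref{lemma:onlymarked} to confine $u$ to a path of $\PQ$ inside $Q$, bound the number of such paths by $(f_1(\ell,m)+1)/2$ via the vertices of $A\cap V(Q)$, extract a free unmarked $u'\in V(Q)\setminus A$, and swap it for $u$. Your count of $(f_1(\ell,m)+1)\cdot\frac{\ell-2}{2}$ used non-$A$ vertices is slightly sharper than the paper's $(f_1(\ell,m)+1)\cdot\frac{\ell}{2}$, but either bound suffices for the rule's threshold.
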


\begin{proof}
    If $(G',M,A,k,\ell)$ is a {\yes} instance, then $(G,M,A,k,\ell)$ is a \yes instance since $G'$ is a subgraph of $G$.
 Conversely, suppose $(G,M,A,k,\ell)$ is a \yes instance, and ${\Pe}$ is a solution.
 If $u$ does not belong to any path in $\Pe$, then $\Pe$ is a solution to $(G',M,A,k,\ell)$ as well. Otherwise, let $P\in\PQ$ contain $u$. This is true since any unmarked vertex can only be used in a path in $\PQ$. But any such path uses exactly 2 vertices from $V(Q)\cap A$. Hence we can upper bound the number of unmarked vertices outside $A$ that are contained in $\PQ$ and hence $\Pe$ by $(f_1(\ell,m)+1)\cdot \frac{\ell}{2}$. Hence, there is at least one unmarked vertex $u'\neq u$ in $V(Q)\setminus A$ which is not used by any path in $\Pe$. We replace $u$ with $u'$ in $P$ to get a desired solution to $(G',M,A,k,\ell)$.
\end{proof}

For the Case (\emph{ii}) when the number of unmarked vertices form $ V(Q)\setminus A$ is bounded by $\ell - 3$ and we bound the number of the unmarked vertices in $V(Q)\cap A$ with the following reduction rule.

\begin{restatable}{redrule}{cvdlA}
\label{rule:cvd+l-A}
If there exists a clique $Q$ containing at most $\ell -3$ unmarked vertices from $V(Q)\setminus A$ and at least $(f_2(\ell, m)+(\ell-3))\cdot \frac{1}{\ell-2}+1$ many unmarked vertices in $A$, then delete an unmarked vertex $u\in A\cap Q$ and return the reduced instance $(G-\{u\},M,A\setminus \{u\},k,\ell)$.
\end{restatable}

\begin{proof}
If $(G-\{u\},M,A\setminus \{u\},k,\ell)$ is a {\yes} instance, then $(G,M,A,k,\ell)$ is trivially a \yes instance since $G'$ is a subgraph of $G$.
 Conversely, suppose $(G,M,A,k,\ell)$ is a \yes instance, and ${\Pe}$ is a solution.
 If $u$ does not belong to any path in $\Pe$, then $\Pe$ is a solution to $(G',M,A,k,\ell)$ as well. Otherwise, let $P\in\PQ$ contain $u$. This is true since any unmarked vertex can only be used in a path in $\PQ$. But any such path uses exactly $\ell-2$ vertices from $V(Q)\cap A$. Hence we can upper bound the number of unmarked vertices from $A$ that are contained in $\PQ$ and hence $\Pe$ by $(f_2(\ell, m)+(\ell-3))\cdot \frac{1}{\ell-2}$. Hence, there is at least one unmarked vertex $u'\neq u$ in $V(Q)\cap A$ which is not used by any path in $\Pe$. We replace $u$ with $u'$ in $P$ to get a desired solution to $(G',M,A,k,\ell)$. 
\end{proof}

After exhaustive application of Reduction Rules~\ref{rule:cvd+l-non-A} and~\ref{rule:cvd+l-A}, the upper bound on the number of vertices of different types in each clique is as follows:
\begin{itemize}
    \item Marked vertices in $A$: $f_1(\ell,m)=(m+1)(\ell m+1)$
    \item Marked vertices in $V(Q)\setminus A$:  $f_2(\ell,m)=(m^2+m+1)(\ell m+1)$
    \item Unmarked vertices in $A$: $(f_2(\ell, m)+(\ell-3))\cdot \frac{1}{\ell-2}+2$
    \item Unmarked vertices in $V(Q)\setminus A$: $(f_1(\ell,m)\cdot \ell+1$
\end{itemize}
Hence the total number of vertices in each clique is bounded by  $\OO(\ell^2 m^2+\ell m^3)$.

\subparagraph{Equivalent cliques:} Now we aim to bound the number of cliques by introducing the concept of \emph{equivalent} cliques. Two cliques $Q_i$ and $Q_j$, are equivalent (belong to the same {\em equivalent class}) if and only if the number of vertices from the cliques that are in $A$, and that are outside $A$ with an exact neighborhood of $M'\subseteq M$ is same for each $M' \in 2^{M}$. Two cliques $Q_i, Q_j$ in an equivalence class are essentially {\em indistinguishable} from each other, i.e.,  there is a bijective mapping $g_{ij}: V(Q_i)\mapsto V(Q_j)$, so that $N(u)\cap M=N(g(u)) \cap M$, for all $u\in V(Q_i)$. This fact is crucial in the construction of our next reduction rule.
 Observe that the number of equivalence classes is at most $\OO(\ell^2 m^2+\ell m^3)^{2^m}$=$f(\ell, m)$.
The following reduction rule bounds the number of cliques in each equivalent class.

\begin{restatable}{redrule}{cvdlcliquebound}
\label{rule:cvd+l-clique-bound}
If there exists an equivalent class $\Ce$ with at least $\ell m+1$ cliques, then delete one of the cliques $Q_i\in \Ce$ and return the reduced instance $(G-Q_i,M,A\setminus (A\cap V(Q_i)),k-x_i,\ell)$ where, $x_i= \min\Bigl\{\frac{|A\cap V(Q_i)|}{2}, \frac{|V(Q_i)\setminus A|}{\ell - 2}\Bigr\}$. 
\end{restatable}

\begin{lemma}
\label{lem:cvd-plus-l-eqv}
   $(G,M,A,k,\ell)$ is a \yes instance if and only if $(G-Q_i,M,A\setminus (A\cap V(Q_i)),k-x_i,\ell)$ is a \yes instance.
\end{lemma}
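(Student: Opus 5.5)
We prove the two directions separately. Throughout, $x_i$ is read as $\lfloor\min\{|A\cap V(Q_i)|/2,\ |V(Q_i)\setminus A|/(\ell-2)\}\rfloor$, the largest number of vertex-disjoint $(A,\ell)$-paths that fit inside the clique $Q_i$ alone. Indeed, inside a clique any ordering of vertices is a path. So a path with two $A$-vertices at its ends and $\ell-2$ non-$A$ interior vertices exists whenever enough vertices of each kind are available, and this gives exactly $x_i$ disjoint paths.

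\textbf{Reverse direction.} Suppose $(G-Q_i,M,A\setminus(A\cap V(Q_i)),k-x_i,\ell)$ has a solution $\Pe^R$. Add the $x_i$ paths packed inside $Q_i$ as described above. These use only vertices of $Q_i$, so they are disjoint from $\Pe^R$, and we obtain $k$ vertex-disjoint $(A,\ell)$-paths in $G$. We must also check that the paths of $\Pe^R$ remain $A$-paths in $G$ once the vertices of $A\cap V(Q_i)$ are put back into $A$. This holds because $\Pe^R$ avoids $Q_i$ entirely, so no path of $\Pe^R$ gains a new interior $A$-vertex.

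\textbf{Forward direction.} Let $\Pe$ be a solution for $(G,M,A,k,\ell)$. By Lemma~\ref{lemma:onlymarked} we may assume $\PM$ uses only vertices of $M$ and marked vertices.
\begin{enumerate}
\item By Observation~\ref{obs:cvd+l-maxvertex}, the paths of $\PM$ together use at most $\ell m$ vertices. So they meet at most $\ell m$ cliques.
\item The class $\Ce$ has at least $\ell m+1$ cliques, so some clique $Q_j\in\Ce$ is disjoint from $V(\PM)$. Only paths of $\PQ$ can touch $Q_j$.
\item Use the bijection $g_{ij}$ between $Q_i$ and $Q_j$, which preserves membership in $A$ and neighbourhoods in $M$. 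Swap the roles of $Q_i$ and $Q_j$: every path of $\Pe$ through $Q_i$ is rerouted through the image vertices in $Q_j$, and vice versa. This is well defined because both are cliques with identical adjacency to $M$, so every edge used is preserved.
\item After the swap, $Q_i$ is touched only by paths lying entirely inside $Q_i$. There are at most $x_i$ such paths, since each uses two vertices of $A\cap V(Q_i)$ and $\ell-2$ of $V(Q_i)\setminus A$.
\item Discard these paths. The remaining at least $k-x_i$ paths avoid $Q_i$ and form a solution of the reduced instance.
\end{enumerate}

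\textbf{Main obstacle.} The delicate step is justifying the swap in step 3 cleanly, given that $\PM$ may pass through $Q_i$. The mapped path must stay simple, avoid conflicts with other paths, and remain an $A$-path. Simplicity holds because $g_{ij}$ is a bijection and the two cliques are vertex-disjoint. $A$-membership is preserved by the definition of the equivalence classes. Conflicts are impossible because $Q_j$ was chosen free of $\PM$, and the $\PQ$-paths inside $Q_j$ are moved to $Q_i$ in exchange.

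A subtlety is that a swapped vertex might no longer be marked, which would break the normal form of Lemma~\ref{lemma:onlymarked}. We do not need that normal form afterwards, since its only role here is the counting in step 1, which happens before the swap.
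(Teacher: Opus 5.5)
Your proof is correct and follows the paper's argument. In the forward direction, pigeonhole over the at most $\ell m$ vertices of $\PM$ yields a clique $Q_j$ in the same class that $\PM$ does not touch. Swapping $Q_i$ and $Q_j$ via $g_{ij}$ then leaves only the at most $x_i$ clique-internal paths meeting $Q_i$, and in the reverse direction you pack $x_i$ paths inside $Q_i$. Reading $x_i$ with a floor is the right interpretation. The appeal to Lemma~\ref{lemma:onlymarked} can simply be dropped, since Observation~\ref{obs:cvd+l-maxvertex} holds for every solution.
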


\begin{proof}
    In the forward direction, let $(G,M,A,k,\ell)$ be a~\yes instance.
    Recall that the number of the vertices contained in paths of $\Pe_M$ for any optimal solution $\Pe$ is bounded by $\ell s$. Thus, there are at most $\ell m$ many cliques in total and also from any equivalence class that has vertices in paths from $\Pe_M$. Let $Q_j$ be one such clique in the equivalence class $\Ce$ that does not contain any vertex in the paths from $\Pe_M$. From the definition of an equivalence class, it is evident that the two cliques $Q_i, Q_j$ in the equivalence class $\Ce$ are {\em indistinguishable} from each other, i.e.,  there is a bijective mapping $g_{ij}: V(Q_i)\mapsto V(Q_j)$, so that $N(u)\cap M=N(g(u)) \cap M$, for all $u\in V(Q_i)$. Let $X_i=V(\Pe)\cap V(Q_i)$ and $X_j=V(\Pe)\cap V(Q_j)$, i.e, the set of vertices from $Q_i$ and $Q_j$ that are used in paths from $\Pe$, respectively. 
    We construct an alternate solution, $\Pe'$, where we replace $X_i$ with $g_{ij}(X_i)$ and $X_j$ with $g_{ij}^{-1}(X_j)$ in $\Pe$. Since $X_j\cap M=\emptyset$,  we have $g_{ij}^{-1}(X_j)\cap M= \emptyset$. Therefore in $\Pe'$, there is no path that contain vertices from both $M$ and $V(Q_i)$. In other words, vertices in $Q_i$ can only be contained in paths from $\Pe'\setminus \Pe'_M$ (paths that are completely contained inside the clique). And, the number of such paths is bounded by $x_i=\min\Bigl\{\frac{|A\cap V(Q_i)|}{2}, \frac{|V(Q_i)\setminus A|}{\ell - 2}\Bigr\}$. Hence $(G-Q_i,M,A\setminus (A\cap V(Q_i)),k-x_i,\ell)$ is a \yes instance. 
    
    In the reverse direction, let $(G-Q_i,M,A\setminus (A\cap V(Q_i)),k-x_i,\ell)$ be a \yes instance with a solution $\mathcal{P}$. But there are $x_i$ many paths (say $\mathcal{P}_i$) that are completely contained in $Q_i$.Hence, $\mathcal{P}\cup\mathcal{P}_i$ is a set of $k$ vertex-disjoint $(A,\ell)$-paths contained in $G$, making $(G,M,A,k,\ell)$ a \yes instance.
\end{proof}

\noindent
After exhaustively applying all the aforementioned reduction rules, the following bounds hold.
\begin{itemize}
    \item The number of vertices in each clique is bounded by  $\OO(\ell^2 m^2+\ell m^3)$.
    \item The number of equivalence classes is at most $\OO(\ell^2 m^2+\ell m^3)^{2^m}$.
    \item The number of cliques in each equivalence class is at most $\ell m+1$.
\end{itemize}
    
Consequently, the size of the reduced instance is upper-bounded by a computable function of $\ell$ and $m$, thus directly implying the following theorem.

{\cvdlShortALPPresult*}

%% file: vertex-cover.tex
\section{Polynomial kernel when parameterized by the vertex cover number}
\label{sec:ALPP-vc}

In this section, we design a quadratic vertex kernel for {\ALP} parameterized by the size of a minimum vertex cover of the graph.
The input instance contains an undirected graph $G = (V, E)$, two sets $A, M \subseteq V(G)$ such that $|M| \leq 2{\vc}(G)$, and integers $k$ and $\ell$.
For the ease of notation, we assume $|M| = s$.
%
%
The kernelization algorithm is as follows.
Let $I = V(G - M)$, $A_I = I \cap A$ and $A_M = M \cap A$.
Observe that the vertices of $I \setminus A$ can only be the internal vertices of any $A$-path.
Hence, any $A$-path of length $\ell$ can be of the following three different types.
The {\em type-1} is when it starts and ends at two distinct vertices of $A_M$.
The {\em type-2} is when it starts in $A_M$ but ends in $A_I$.
The {\em type-3} is when it starts and ends in $A_I$.

Observe that any pendant vertex $u \in I \setminus A$, cannot be present in any $A$-path. 
Hence we can safely delete all such vertices.
Note that every $A$-path of $G$ can have exactly two vertices of $A$, in addition to having at most $s$ vertices from $M \setminus A$ and at most $s-1$ additional vertices from $I \setminus A$.
Therefore, any $A$-path of $G$ has at most $2s + 1$ vertices.
Since our problem involves packing $A$-paths with exactly $\ell$ vertices each, we have that $\ell \leq 2s + 1$.
This gives us the following observation.

\begin{observation}
\label{obs:path-length-bound}
Any $A$-path has at most $2s + 1$ vertices. 
\end{observation}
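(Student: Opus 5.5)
The plan is to bound separately the vertices of an $A$-path that lie in $M$ and those that lie in $I = V(G)\setminus M$. The key structural fact is that $M$ is a vertex cover of $G$, so $I$ is an independent set. An $A$-path $P$ must therefore alternate: between any two consecutive vertices of $P$ that lie in $I$ there is at least one vertex of $M$.

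First, I would count the vertices of $A$. By definition $P$ has exactly two vertices of $A$, namely its two endpoints. Every internal vertex lies in $V(G)\setminus A$. Thus $P$ consists of two $A$-vertices and some internal vertices, each of which lies in either $M\setminus A$ or $I\setminus A$.

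Second, I would bound the $M$-vertices. Since $P$ is a path, its vertices are distinct, so $|V(P)\cap M|\le |M| = s$. This bound includes any endpoints that happen to lie in $A_M$.

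Third, I would bound the $I$-vertices. Write $P = v_1 v_2 \cdots v_t$. Because $I$ is independent, no two consecutive $v_i, v_{i+1}$ both belong to $I$. Hence the $I$-vertices of $P$ are separated by $M$-vertices, and the number of $I$-vertices is at most one more than the number of $M$-vertices on $P$. If $P$ contains $r \le s$ vertices of $M$, the maximal runs of non-$M$ vertices have length at most one. There are at most $r+1$ such runs, so there are at most $r+1$ vertices of $I$.

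Combining these, $|V(P)| \le r + (r+1) \le 2s+1$.

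This count already accounts for the endpoints, since each endpoint lies either in $M$ (counted in $r$) or in $I$ (counted among the at most $r+1$ $I$-vertices). So there is no need to treat the two $A$-vertices separately. The paper's informal breakdown (two $A$-vertices, at most $s$ from $M\setminus A$, at most $s-1$ from $I\setminus A$) is a finer version of the same count. The alternation argument gives the stated bound directly, with no case analysis on the path types.

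The only point needing care is to make sure $M$ really is a vertex cover in the setup. The section assumes $|M|\le 2\,\vc(G)$, and I would read this as saying that $M$ is a vertex cover computed by the standard $2$-approximation. This is what makes $I$ independent and the alternation argument valid.
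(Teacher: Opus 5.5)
Your proof is correct and is essentially the paper's argument: both rely on $M$ being a vertex cover, so that $I$ is independent and consecutive $I$-vertices on the path are separated by $M$-vertices. You split the path into $M$-vertices and $I$-vertices and get $r+(r+1)\le 2s+1$; the paper instead counts the two $A$-endpoints, at most $s$ vertices of $M\setminus A$, and at most $s-1$ internal vertices of $I\setminus A$, which is the same alternation count with different bookkeeping.
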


Our kernelization algorithm has two phases.
The first phase is to design reduction rules that provide an upper bound on the number of vertices in $A_I$ and the second phase is to design reduction rules that provide an upper bound on the number of vertices in $I \setminus A$.
If $\ell \leq 4$, then {\ALP} can be solved in polynomial-time (Theorem 3.3 of \cite{BelmonteHKKKKLO22}).
Therefore, we can assume that $\ell \geq 5$.

\subsection{Bounding the size of $A_I$}

We start by proving the following statement for packing $A$-paths of length exactly $\ell$ for an instance $(G, M, A, k, \ell)$.
From now onward, $\ell \geq 5$.

Let $H_1$ be an auxiliary bipartite graph with bipartition $(L_1 \uplus R_1)$ such that $L_1 = M \setminus A$, $R_1 = A_I$, and $E(H_1)=\{uv \mid uv \in E(G), u \in L_1\text{, and } v \in R_1\}$. We apply the following reduction rule when $|A_I| > 2|M \setminus A|$.

\begin{redrule}
\label{rule:compressing-A_I-part}
If $|A_I| > 2|M \setminus A|$, then we construct the auxiliary bipartite graph $H_1$ as described above.
We apply Proposition \ref{lemma:new-expansion-lemma} (New $q$-Expansion Lemma) with $q = 2$ and get $M \subseteq E(H_1)$, $\widehat{L_1} \subseteq L_1, \widehat{R_1} \subseteq R_1$ and a $2$-expansion $\MM$ from $\widehat{L_1}$ to $\widehat{R_1}$.
Consider $V(\MM)$ the vertices of $\widehat{L_1}$ and $\widehat{R_1}$ that are saturated by $\MM$.
Pick one arbitrary vertex $v \in \widehat{R_1}$ that is not in $V(\MM)$.
Delete $v$ from $G$ and return the reduced instance is $(G' = G - v, M, A' = A \setminus \{v\}, \ell, k)$.
\end{redrule}



Before proving the safeness of the above reduction rule, we introduce some additional terminologies.
An $(A, \ell)$-path $P$ {\em respects} $\MM$ if for every edge $uv$ of $P$, $uv \in \MM$ if and only if $u \in \widehat{L_1}$ and $v \in \widehat{R_1}$.
An $(A, \ell)$-path $P$ {\em violates} $\MM$ it does not respect $\MM$.
Equivalently, if an $(A, \ell)$-path $P$ {\em violates} $\MM$, then there exists an edge $uv$ of $P$ such that $u \in \widehat{L_1}, v \in \widehat{R_1}$ but $uv \notin \MM$.
If $P$ is a type-2 or type-3 $(A, \ell)$-path and $P$ contains $v \in \widehat{R_1}$ and $u \in \widehat{L_1}$ such that $uv \notin \MM$, then we say that $u$ and $v$ are {\em $\MM$-breaching} vertices of $\widehat{L_1}$ and $\widehat{R_1}$, respectively.
If an $(A, \ell)$-path $P$ contains an edge $uv$ such that $v \in \widehat{R_1}, u \in \widehat{L_1}$ and $uv \in \MM$, then we say that $v$ is {\em $\MM$-obedient} vertex of $P$.
%

Consider an arbitrary feasible solution $\ccP$ of $G$.
If for every $P \in \ccP$ of type-2 or type-3, it holds that $P$ respects $\MM$, then $\ccP$ is said to be a solution that {\em $\MM$-respecting}.
The following two observations on type-3 and type-2 $(A,\ell)$-paths hold true.

\begin{lemma}
\label{lemma:M-breaching-L1-R1-usage}
Suppose that $|A_I| > 2|M \setminus A|$, and we apply Proposition \ref{lemma:new-expansion-lemma} with $q=2$ on $H_1$ and obtain $\widehat{L_1} \subseteq L_1$ and $ \widehat{R_1} \subseteq R_1$ such that
\begin{itemize}
	\item there is a 2-expansion $\MM$ of $\widehat{L_1}$ to $\widehat{R_1}$,
	\item  $\widehat{R_1}$ has no neighbor outside $\widehat{L_1}$ in $H_1$, and
	\item $|R_1 \setminus \widehat{R_1}| \leq 2|L_1 \setminus \widehat{L_1}|$.
\end{itemize} 
Furthermore, assume that $\ccP$ be a collection of $k$ vertex-disjoint $(A,\ell)$-paths of $G$ each violating $\MM$.
Then, for every $u \in \widehat{R_1}$, if $u$ is $\MM$-breaching vertex such that $u \in P$ for some $P \in \ccP$ and $v$ is the vertex in $P$ that appears after $u$, then $v$ is an $\MM$-breaching vertex of $\widehat{L_1}$.
Moreover, if $vx \in \MM$, then $x$ is $\MM$-breaching or $x$ is not present in any path of $\ccP$.
\end{lemma}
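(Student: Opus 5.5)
We argue directly from the structure of $H_1$, the bipartite graph between $L_1 = M\setminus A$ and $R_1 = A_I$. A vertex $u \in \widehat{R_1}$ lies in $A_I$, so it lies in the independent set $I$ and is an endpoint of any $A$-path through it. All its neighbours in $G$ therefore lie in $M$. Since $u$ is the endpoint of $P$, the vertex $v$ following $u$ on $P$ is an internal vertex of $P$. Internal vertices of an $A$-path avoid $A$, so $v \in M\setminus A = L_1$ and $uv \in E(H_1)$. The expansion lemma (second bullet) gives $N_{H_1}(\widehat{R_1}) \subseteq \widehat{L_1}$, hence $v \in \widehat{L_1}$. So $P$ uses an edge $uv$ with $u\in \widehat{R_1}$ and $v \in \widehat{L_1}$.

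Next we check that $v$ is $\MM$-breaching. The natural reading of ``$u$ is $\MM$-breaching'' is that the edge of $P$ at $u$ is not in $\MM$. Since $u$ has exactly one neighbour on $P$, namely $v$, this forces $uv \notin \MM$. By definition, $v\in\widehat{L_1}$ is then an $\MM$-breaching vertex of $\widehat{L_1}$, witnessed by $u$ on the same path $P$, which is of type-2 or type-3 because it contains $u \in A_I$. The only care needed is to fix this reading of the definition: $u$ is breaching exactly when its unique incident path edge lies outside $\MM$.

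For the second part, let $vx \in \MM$, so $x \in \widehat{R_1}\subseteq A_I$. Suppose $x$ lies on some path $P'\in\ccP$; we show $x$ is breaching. As before, $x$ is an endpoint of $P'$ and its unique $P'$-neighbour $y$ lies in $\widehat{L_1}$. If $y=v$, then $v$ lies on $P'$, and since the paths are vertex-disjoint and $v\in P$, we get $P'=P$. But $v$ is an internal vertex of $P$ with exactly two path-neighbours, one of which is $u$. Its other neighbour is then $x\in A$, so $P = u\,v\,x$ has only three vertices, contradicting $\ell \ge 5$. Hence $y \neq v$.

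It remains to show $xy\notin\MM$. In a $2$-expansion each saturated vertex of $\widehat{R_1}$ is incident to exactly one edge of $\MM$: there are $2|\widehat{L_1}|$ edges and exactly $2|\widehat{L_1}|$ saturated vertices of $\widehat{R_1}$. Since $vx\in\MM$ and $y\ne v$, we get $xy\notin\MM$, so $x$ is $\MM$-breaching.

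The main obstacle is this last step. It relies on two facts: $\MM$-edges are a matching on the $\widehat{R_1}$ side, which follows from counting in the definition of $q$-expansion, and the length bound $\ell\ge 5$ excludes the degenerate path $u v x$. The hypothesis that every path violates $\MM$ is not needed beyond these facts.
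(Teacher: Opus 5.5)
Your proof is correct and follows the same route as the paper's. The path-neighbour $v$ of $u$ lands in $\widehat{L_1}$ because $N_{H_1}(\widehat{R_1})\subseteq\widehat{L_1}$, and for $vx\in\MM$ the path-neighbour of $x$ is a different vertex of $\widehat{L_1}$ joined to $x$ by a non-$\MM$ edge. You also make explicit what the paper only asserts: that $x$'s path-neighbour is not $v$ (the paper's bare claim $vx\notin P$), via $\ell\ge 5$ ruling out $P=uvx$, where $x\neq u$ since $uv\notin\MM$; and that every saturated vertex of $\widehat{R_1}$ carries exactly one $\MM$-edge, which is what the second part actually needs.
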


\begin{proof}
Suppose that $|A_I| > 2|M \setminus A|$, and we apply Proposition \ref{lemma:new-expansion-lemma} with $q=2$ on $H_1$ and obtain $\widehat{L_1} \subseteq L_1, \widehat{R_1} \subseteq R_1$ and a 2-expansion $\MM$ of $\widehat{L_1}$ to $\widehat{R_1}$.
Furthermore, assume that $\ccP$ be a collection of pairwise-disjoint $(A,\ell)$-paths of $G$ each violating $\MM$.

For the first item, consider a vertex $u \in \widehat{R_1}$ that is $\MM$-breaching and $u \in P$ for some $P \in \ccP$.
Furthermore, we assume that the vertex appearing after $u$ in $P$ is $v$.
As $u \in \widehat{R_1}$ is $\MM$-breaching and $v$ is a neighbor of $u$, then $v \in \widehat{L_1}$ (due to the reason that $N_H(\widehat{R_1}) \subseteq \widehat{L_1}$.
As $u$ is $\MM$-breaching, it must be that $uv \notin \MM$.
Therefore, $v$ is also $\MM$-breaching vertex from $\widehat{L_1}$.

For the second item, note that there is a 2-expansion from $\widehat{L_1}$ to $\widehat{R_1}$ and let $vx \in \MM$.
Then, observe that $v \in P$ but $vx \notin P$.
Hence, either $x$ must be in some $P'$ such that $P' \in \ccP$ or $x$ is not in any path of $\ccP$.
In the latter case, we are done.
In the former case, the vertex appearing next to $x$ in $P'$ is some $w \in \widehat{L_1}$ as $N_H(x) \subseteq \widehat{L_1}$.
Then, $xw \notin \MM$, implying that $x$ is $\MM$-breaching, completing the proof of the lemma.
\end{proof}

Suppose in a feasible solution $\ccP$, there are $\widehat k$ vertex-disjoint $(A, \ell)$-paths that are of type-2 or type-3 with one endpoint in $\widehat R_1$.
Then, the following lemma ensures that $\ccP$ can be converted into a feasible solution $\widehat \ccP$ such that $|\widehat \ccP| = |\ccP|$ and $\widehat{\ccP}$ is an $\MM$-respecting solution.

\begin{figure}[t]
\centering
	\includegraphics[scale=0.24]{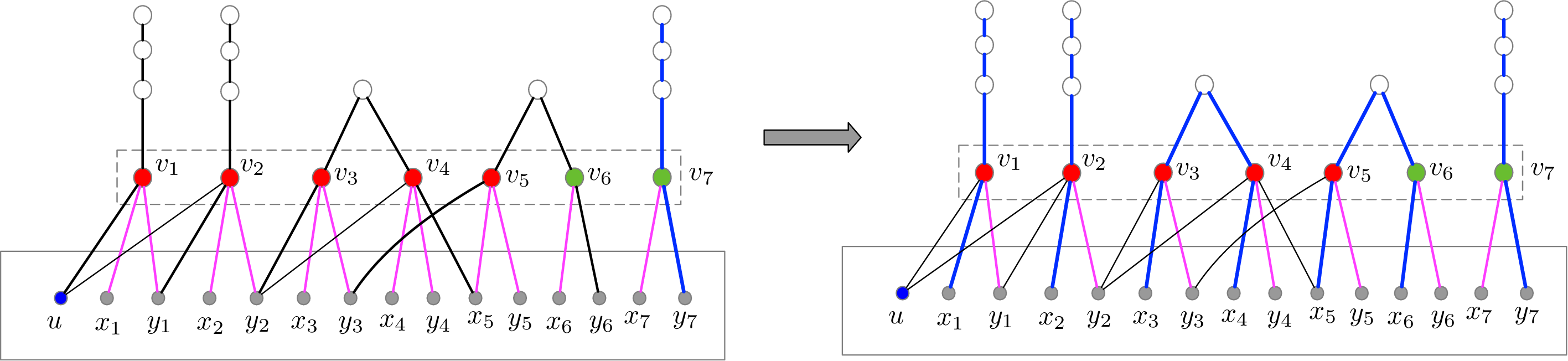}
	\caption{An example for $\ell = 5$ and $\widehat L_1 = \{v_j: j \leq 7\}$ and for every $j \in [6]$, $v_j x_j, v_j y_j \in \MM$. In the left part of the figure, paths highlighted with blue color respect $\MM$ but the paths with black color violate $\MM$. Note that $u, y_1, y_2, y_3, y_5$ are $\MM$-breaching vertices and $y_6$ and $y_7$ are $\MM$-respecting vertices. All paths that violate $\MM$ in the left figure are converted into paths that respect $\MM$ in the right part are converted to paths that respect $\MM$.}
\label{fig:expansion-part-1}
\end{figure}

\begin{lemma}
\label{lemma:vc-1-expansion-respect}
There exists a feasible solution $\ccP^*$ to $(G, M, A, \ell, k)$ that is $\MM$-respecting.
\end{lemma}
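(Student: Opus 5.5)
We prove \Cref{lemma:vc-1-expansion-respect} by an exchange argument driven by a potential function. Among all feasible solutions of size $k$ (which exist whenever the instance is a \yes instance), pick $\ccP^*$ that minimizes the number of $\MM$-breaching vertices of $\widehat{R_1}$, i.e.\ the number of edges $uv$ used by paths of $\ccP^*$ with $u\in\widehat{L_1}$, $v\in\widehat{R_1}$ and $uv\notin\MM$. Suppose some edge $vu$ of a path $P\in\ccP^*$ has $v\in\widehat{L_1}$, $u\in\widehat{R_1}$ and $vu\notin\MM$. We aim to reroute so that the number of such violating edges strictly decreases, while the lengths of all paths and their disjointness are preserved.

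\textbf{Structural facts.} Since $u\in\widehat{R_1}\subseteq A_I\subseteq I$ and $N_H(\widehat{R_1})\subseteq\widehat{L_1}$, every neighbour of $u$ in $G$ lies in $\widehat{L_1}$ or in $A_M$. Moreover $u\in A$, so $u$ is an endpoint of its path. Hence $u$ has exactly one neighbour on $P$, namely $v$. This is the content of \Cref{lemma:M-breaching-L1-R1-usage}.

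The vertex $v\in\widehat{L_1}\subseteq M\setminus A$ is an internal vertex of $P$. Let $x,y$ be its two $\MM$-partners in $\widehat{R_1}$. The edge $vu$ is the only edge of $P$ between $v$ and $\widehat{R_1}$ that $P$ could use toward an endpoint. Indeed, $P$ has one endpoint $u$, and its other neighbour of $v$ on $P$ is some vertex $w\neq u$. If $w\in\widehat{R_1}$, then $w$ is the other endpoint of $P$, and the path is $u\,v\,w$ of length $3<5$. So $w\notin\widehat{R_1}$, and at most one of $x,y$ can coincide with $u$. As $vu\notin\MM$, neither does, so $x\neq u$.

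\textbf{Rerouting.} We consider two cases for $x$.
\begin{itemize}
\item If $x$ lies on no path of $\ccP^*$, replace $u$ by $x$ in $P$. The new path has the same length, is still an $A$-path because $x\in A$ and its only new edge is $vx$, and disjointness is kept. The number of violating edges drops by one and no new violating edge appears.
\item Otherwise $x$ is an endpoint of some path $P'\ne P$ (it cannot lie on $P$), with neighbour $v'\in\widehat{L_1}\cup A_M$ on $P'$. Swap the two endpoints: $P$ now ends at $x$ via $vx\in\MM$, and $P'$ now ends at $u$ via $v'u$, which must be an edge of $G$.
\end{itemize}

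\textbf{Main obstacle.} The second case is where the argument can break, because $v'u$ need not be an edge. To handle it, we first try the other partner $y$ in place of $x$. If both $x$ and $y$ are used and neither swap is available, we instead argue by counting. Consider the set $B$ of $\widehat{L_1}$-vertices that are incident to a violating edge. Every path has at most two endpoints in $\widehat{R_1}$, each attached to a distinct vertex of $\widehat{L_1}$. The $2|\widehat{L_1}|$ vertices saturated by $\MM$ are distinct and are each assigned to their own centre. A Hall-type or alternating-path argument along the chain $v\to x\to v'\to\cdots$ from \Cref{lemma:M-breaching-L1-R1-usage} should then either reach an unused $\MM$-partner, in which case we shift endpoints along the whole chain, or close a cycle. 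In the cycle case we reassign every endpoint on the cycle to its $\MM$-partner simultaneously.

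Each shift preserves path lengths and vertex sets up to exchanging endpoints inside $\widehat{R_1}$. The chain terminates because it strictly consumes new vertices of the finite set $\widehat{L_1}$. Each shift also strictly reduces the potential, which contradicts the minimality of $\ccP^*$.
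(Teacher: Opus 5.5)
Your proposal has a genuine gap in your second case, which is the step that carries the lemma. You note yourself that the endpoint swap there can fail. The fallback you offer is a ``Hall-type or alternating-path argument'' that ``should'' reach an unused $\MM$-partner or close a cycle, but it is asserted, not proved. Likewise, the claim that the chain ``strictly consumes new vertices of $\widehat{L_1}$'' has no justification.

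Three facts are missing.
(i) Since $\ell\ge 5$ and the paths are disjoint, each vertex of $\widehat{L_1}$ is internal to at most one path. Along that path it is adjacent to at most one endpoint in $\widehat{R_1}$, since two such neighbours would give a $3$-vertex path.
(ii) Hence, if the $\MM$-partner $x$ of $v$ is used, it is an endpoint attached to some $v'\in\widehat{L_1}$ with $v'\neq v$. Because $v$ is the unique $\MM$-neighbour of $x$, we get $v'x\notin\MM$, so $x$ is itself breaching; this is what \Cref{lemma:M-breaching-L1-R1-usage} gives you. So the path through $x$ should receive the $\MM$-partner of $v'$, never $u$, and the question of whether $v'u$ is an edge is irrelevant.
(iii) The stars of $\MM$ are vertex-disjoint, so every vertex of $\widehat{R_1}$ has at most one $\MM$-neighbour, and distinct centres have distinct partners.

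Facts (i) and (iii) are exactly what make your chain $v\to x\to v'\to\cdots$ injective. They also force the chain to close, if at all, only by returning to $u$. Without them, neither termination nor the well-definedness of the shift is established. Separately, your claim that $x$ ``cannot lie on $P$'' is false: in a type-3 path, $x$ may be the other endpoint of $P$, attached to some $w\neq v$.

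With (i)--(iii) in hand, you need neither the potential function nor the chain, and this is the paper's route. Replace \emph{every} breaching endpoint simultaneously by an $\MM$-partner of its neighbour in $\widehat{L_1}$.
By (i) the centres involved are distinct, so by (iii) the new endpoints are distinct.
A new endpoint (a partner of $v$) cannot coincide with a kept, $\MM$-respecting endpoint, because that endpoint's path-neighbour would then be $v$, contradicting (i).
Any old breaching occurrence of a new endpoint is being replaced as well.
Only endpoints lying in $\widehat{R_1}\subseteq A$ change, so lengths, internal vertices and the $A$-path property are preserved.

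Your cycle-case move, reassigning all endpoints on the cycle at once, is precisely this operation. Applying it to all breaching endpoints at once, rather than along one chain, closes the proof.
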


\begin{proof}
Suppose that $\ccP$ is a collection of $k$ vertex-disjoint $(A, \ell)$-paths for a {\yes} instance $(G, M, A, \ell, k)$.
Let $P_1,\ldots,P_r \in \ccP$ be the paths that are of type-2 or type-3 but violate $\MM$.
We replace $P_1,\ldots,P_r$ with $P_1^*,\ldots,P_r^*$ respectively in $\ccP$ as follows such that for every $i \in [r]$, $P_i^*$ respects $\MM$.
We keep the other paths of $\ccP$ unchanged.
For the ease of notation, let $\cQ = \ccP \setminus \{P_1,\ldots,P_r\}$.

As we assume that for every $i \in [r]$, $P_i$ is $\MM$-breaching, therefore, for every $i \in [r]$, $P_i$ has a vertex $u_i \in \widehat{R_1}$ that is $\MM$-breaching.
As $\widehat{R_1} \subseteq A_I$, there can be at most 2 vertices of $P_i$ that are in $\widehat{R_1}$.
Hence, there are at most 2 vertices of $P_i$ in $\widehat{R_1}$ that are $\MM$-breaching.
In the first step, for every $i \in [r]$, we construct $P_i'$ by deleting deleting the vertices of $\widehat{R_1}$ from $P_i$ that are $\MM$-breaching.
It is not very hard to observe that $P_1',\ldots,P_i'$ are vertex-disjoint.

Consider $v_i$ the vertex appearing after $u_i$ in $P_i$ and $u_i$ is $\MM$-breaching.
Then, due to Lemma \ref{lemma:M-breaching-L1-R1-usage}, $v_i$ is an $\MM$-breaching vertex of $P_i$.
As we delete $u_i$ from $P_i$ to construct $P_i'$, the vertex $u_i$ cannot be present in any path from $P_1',\ldots,P_r'$.
Also, $u_i$ cannot be present in any path of $\cQ$.

Now, we consider $v_i \in \widehat{L_1}$ that is $\MM$-breaching.
As there is a 2-expansion of $\widehat{L_1}$ onto $\widehat{R_1}$, there are $x_i, y_i \in \widehat{R_1}$ such that $x_i v_i, y_i v_i \in \MM$.
Then, due to Lemma \ref{lemma:M-breaching-L1-R1-usage}, $x_i$ is $\MM$-breaching or $x_i$ is not part of any $P_1,\ldots,P_r$.
If $x_i$ is $\MM$-breaching, then $x_i$ is part of some $P_j$ ($j \in [r]$) and $x_i$ is deleted from $P_j$ to construct $P_j'$.
Hence, $x_i \notin P_i'$ for any $i \in [r]$.
If $x_i$ is not part of any $P_j$ ($j \in [r]$), then $x_i$ is not part of any $P_j'$ ($j \in [r]$).

Hence, our construction of $P_1',\ldots,P_r'$ ensures that $P_i'$ has a vertex $v_i \in \widehat{L_1}$ that is $\MM$-breaching in $P_i$.
Moreover, if $v_i x_i, v_i y_i \in \MM$, then $x_i, y_i \notin P_j'$ for any $j \in [r]$.
Observe that $x_i, y_i$ cannot be part of any $(A, \ell)$-path in $\cQ$, because every $(A,\ell)$-path of $\cQ$ is $\MM$-respecting.
Hence, we add $x_i$ into $P_i'$ and complete the construction of $P_i^*$.
Clearly, $P_i^*$ respects $\MM$.
We refer to Figure \ref{fig:expansion-part-1} for an illustration.
It is not hard to observe that $P_i^*$ is disjoint with any path of $\cQ$.

Note that if $v_i \in \widehat{L_1}$ is an $\MM$-breaching vertex of $P_i$ and $v_j \in \widehat{L_1}$ is an $\MM$-breaching vertex of $P_j$, then they are distinct vertices.
Due to the existence of 2-expansion of $\widehat{L_1}$ onto $\widehat{R_1}$, it must be that $v_j x_j, v_j y_j \in \MM$.
Moreover, $x_i, y_i, x_j, y_j$ are 4 distinct vertices.
Hence, $P_1^*,\ldots,P_r^*$ are pairwise vertex-disjoint.
As $x_i$'s are in $\widehat{R_1}$ and $\widehat{R_1} \subseteq A_I$, it follows that $P_i^*$ is an $(A,\ell)$-path of $G$.
This provides us a set $\cQ \cup \{P_1^*,\ldots,P_r^*\}$ of $k$ pairwise vertex-disjoint $(A, \ell)$-paths of $G$ that is $\MM$-respecting.
\end{proof}

As a remark, we would like to illustrate that in the example of Figure \ref{fig:expansion-part-1}, we have to replace all the paths with black color by paths having edges $v_j w_j^1$ for $j \leq 7$.

\begin{lemma}
\label{lemma:vc-compressing-A_I}
Let $(G', M, A', k, \ell)$ be the reduced instance after applying Reduction Rule \ref{rule:compressing-A_I-part}.
Then, $(G, M, A, k, \ell)$ is a yes-instance if and only if $(G', M, A', k, \ell)$ is a yes-instance.
\end{lemma}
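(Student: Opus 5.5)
The backward direction is immediate, and the forward direction goes through \Cref{lemma:vc-1-expansion-respect}. For the backward direction, $G' = G - v$ is a subgraph of $G$ and $A' = A \setminus \{v\}$. Any $(A',\ell)$-path $P$ of $G'$ has its two endpoints in $A' \subseteq A$ and its internal vertices in $V(G')\setminus A' = V(G)\setminus A$, because $v \notin V(G')$. So $P$ is also an $(A,\ell)$-path of $G$. A packing of $k$ vertex-disjoint $(A',\ell)$-paths in $G'$ is therefore a packing in $G$.

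For the forward direction, suppose $(G,M,A,k,\ell)$ is a yes-instance. By \Cref{lemma:vc-1-expansion-respect}, there is an $\MM$-respecting solution $\ccP^*$. It suffices to show that $v$ lies on no path of $\ccP^*$; then $\ccP^*$ is a collection of $k$ vertex-disjoint paths in $G - v$. Each path avoids $v$, so its endpoints lie in $A \setminus\{v\} = A'$ and its internal vertices lie outside $A'$, since they lie outside $A$. Hence $\ccP^*$ is a solution of $(G',M,A',k,\ell)$.

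To show $v \notin V(\ccP^*)$, suppose $v \in V(P)$ for some $P \in \ccP^*$. Since $v \in \widehat{R_1} \subseteq A_I$, the vertex $v$ is an endpoint of $P$. It lies in $I$, so $P$ is of type-2 or type-3. Because $\ell \ge 5 > 1$, $v$ has a neighbour $u$ on $P$. As $v \in I$ and $I$ is independent, $u \in M$. Since $u$ is an internal vertex of an $A$-path, $u \notin A$, so $u \in L_1$ and $uv \in E(H_1)$. By the second property of \Cref{lemma:new-expansion-lemma}, $N_{H_1}(\widehat{R_1}) \subseteq \widehat{L_1}$, so $u \in \widehat{L_1}$. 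Since $P$ respects $\MM$ and $uv$ is an edge of $P$ with $u \in \widehat{L_1}$ and $v \in \widehat{R_1}$, we get $uv \in \MM$. This contradicts the choice of $v$ as a vertex of $\widehat{R_1}$ unsaturated by $\MM$. Hence $v$ is unused.

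The main obstacle is justifying that the unsaturated vertex $v$ exists and that \Cref{lemma:vc-1-expansion-respect} applies as stated. For existence, the precondition $|A_I| > 2|M\setminus A|$ combined with $|R_1\setminus\widehat{R_1}| \le 2|L_1\setminus \widehat{L_1}|$ gives $|\widehat{R_1}| > 2|\widehat{L_1}|$. The expansion saturates exactly $2|\widehat{L_1}|$ vertices of $\widehat{R_1}$, so an unsaturated one exists. For the lemma, I would check that its construction, which deletes the $\MM$-breaching endpoint of a violating path and appends a fresh $\MM$-partner $x_i$ of the breaching neighbour $v_i$, applies to every violating path. In particular it must also handle paths whose breaching edge involves an endpoint in $\widehat{R_1}$ adjacent to $\widehat{L_1}$, which is exactly the situation used above.
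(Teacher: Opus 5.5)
Your proposal is correct and follows the same route as the paper. The backward direction holds because $G'$ is a subgraph of $G$. For the forward direction, you take the $\MM$-respecting solution from Lemma~\ref{lemma:vc-1-expansion-respect} and observe that all $H_1$-neighbours of the unsaturated vertex $v$ lie in $\widehat{L_1}$ while none of those edges is in $\MM$, so $v$ is unused. Your version is more careful than the paper's on three points it leaves implicit: you rule out $v$ lying on a type-1 path (which an $\MM$-respecting solution does not constrain), you check that $v$'s path-neighbour lies in $M\setminus A$ so the edge is in $H_1$, and you verify that an unsaturated $v$ exists.
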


\begin{proof}
The backward direction ($\Leftarrow$) is trivial, because $G'$ is a subgraph of $G$.

For the forward direction ($\Rightarrow$), assume that $(G, M, A, k, \ell)$ is a yes-instance.
Then, it follows from Lemma \ref{lemma:vc-1-expansion-respect} that there exists a set $\ccP$ of $k$ vertex-disjoint $(A,\ell)$-paths that is $\MM$-respecting.
We consider the vertex $v \in \widehat{R_1}$ that is deleted from $G$ to get $G'$.
As $v \notin V(\MM)$, for every $u \in \widehat{L_1}$ such that $uv \notin \MM$.
Since $\ccP$ is $\MM$-respecting, no path of $P$ can contain any edge $uv$ such that $u \in \widehat{L_1}$.
As $N_H(v) \subseteq \widehat{L_1}$, $v$ cannot be present in any path of $\ccP$.
Hence, $\ccP$ is a feasible solution to $(G', M, A, k,\ell)$, implyin ghat $(G', M, A, k, \ell)$ is a yes-instance.
 \end{proof}

An exhaustive application of Reduction Rule \ref{rule:compressing-A_I-part} ensures that $|A_I| \leq 2|M \setminus A| \leq 2s$.

\subsection{Bounding the size of $I \setminus A$ and putting things together}

Now, we design our next important reduction rule to achieve an upper bound on the number of vertices in $I \setminus A$.
Since the graph has no pendant vertices in $I \setminus A$, it holds true that every $u \in I \setminus A$ has at least two neighbors in $S$.
Hence, if a vertex $u \in I \setminus A$ is present in an $(A, \ell)$-path, then $u$ is an internal vertex of an $(A, \ell)$-path.
We construct an auxiliary bipartite graph $H_2 = (L_2 \uplus R_2, E)$ and subsequently provide an additional reduction rule that helps us to reduce the number of vertices in $I \setminus A$.
We use
\begin{itemize}
	\item $L_2$ to denote the collection of all unordered pairs of vertices from $M$, i.e. $L_2 = {{M}\choose{2}}$,
	\item $R_2 = I \setminus A$, and
	\item A pair $\{x, y\} \in L_2$ is adjacent to a vertex $u \in R_2$ in $H_2$ if $\{x, y\} \subseteq N_G(u)$.
\end{itemize}

This completes the construction of $H_2$.
Using the bipartite graph $H_2$, we would like to argue that for every pair $\{x, y\}$ of vertices from $M$, it is sufficient to keep a few of them into $I \setminus A$. 
Our next reduction rule ensures this.

\begin{redrule}
\label{rule:vc-compressing-I-minus-A}
If $|R_2| > 2{{|M|}\choose{2}}$, then construct $H_2$ as described above and apply Proposition \ref{lemma:new-expansion-lemma} with $q = 2$ to obtain $\widehat{L_2} \subseteq L_2, \widehat{R_2} \subseteq R_2$ and a $q$-expansion ${\cR}$ from $\widehat{L_2}$ to $\widehat{R_2}$.
If $u \in \widehat{R_2}$ but $u \notin V({\cR})$, then delete $u$ from $G$.
The new instance is $(G - \{u\}, M, A, k, \ell)$.
\end{redrule}

Before we prove the safeness of the reduction rule, we define a few additional terminologies, and prove some structural lemmata that are central to the safeness of this reduction rule.
Consider an $(A, \ell)$-path $P$ that contains $u \in R_2 (= I \setminus A)$.
Observe that  $u$ is an internal vertex of an $A$-path and must use two additional vertices $x, y \in M$ such that $\{x, y\} \in L_2$ and $xu, uy \in E(P)$.
Then, it implies that $u$ is adjacent to the pair $\{x, y\}$ in $H_2$.
In this part, for convenience, we make abuse of notation and denote an edge $vw$ of $H_2$ as $(v, w)$.
For $X \subseteq L_2$ and $Y \subseteq R_2$, we use $E(X, Y)$ to denote the edges of $H_2$ with one endpoint in $X$ and other endpoint in $Y$. 
We say that an $A$-path $P$ {\em occupies} an edge $(\{x, y\}, u)$ of $H_2$ if $xu, uy \in E(P)$.
Equivalently, we say that an edge $(\{x, y\}, u)$ of $H_2$ is {\em occupied by $P$} if $xu, uy \in E(P)$.

Now, we consider $\widehat{L_2} \subseteq L_2$ and $\widehat{R_2} \subseteq R_2$ obtained from applying Reduction Rule \ref{rule:vc-compressing-I-minus-A}, and focus on the collection of edges in $E(\widehat{L_2}, \widehat{R_2})$, and the edges in the $2$-expansion $\cR$ from $\widehat{L_2}$ to $\widehat{R_2}$.
In order to argue the safeness of the above reduction rule, we define a few additional terminologies, and prove some crucial characteristics of a feasible solution.
Let $P$ be an $A$-path. 
If every edge $(\{x, y\}, u)$ of $E(\widehat{L_2}, \widehat{R_2})$ that is occupied by $P$ is an edge in $\cR$, then $P$ is an {\em $\cR$-respecting} $A$-path.
Otherwise, $P$ is an {\em $\cR$-violating} $A$-path.
The following observation follows from the definition of $\cR$-respecting and $\cR$-violating $A$-path.

\begin{observation}
\label{obs:violate-R-property}
If an $A$-path $P$ is $\cR$-violating, then there exists an edge $(\{x, y\}, u)$ of $E(\widehat{L_2}, \widehat{R_2})$ that is occupied by $P$ but not in $\cR$.
\end{observation}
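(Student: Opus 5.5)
The statement is essentially a restatement of the definition, so the plan is to negate the definition of an $\cR$-respecting $A$-path and unpack the quantifier.

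By definition, $P$ is $\cR$-respecting precisely when every edge $(\{x,y\},u)\in E(\widehat{L_2},\widehat{R_2})$ that is occupied by $P$ belongs to $\cR$. The statement that $P$ is $\cR$-violating is, by definition, the statement that this universal condition fails. The negation of ``for every occupied edge $e\in E(\widehat{L_2},\widehat{R_2})$, $e\in \cR$'' is ``there exists an edge $e\in E(\widehat{L_2},\widehat{R_2})$ that is occupied by $P$ with $e\notin\cR$''. Such an edge has the form $(\{x,y\},u)$ with $\{x,y\}\in\widehat{L_2}$ and $u\in\widehat{R_2}$, and occupancy means $xu,uy\in E(P)$. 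This is exactly the conclusion of Observation~\ref{obs:violate-R-property}.

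It is worth noting, though it is not logically needed, that such a $u$ lies in $I\setminus A$. Because $G$ has no pendant vertices in $I\setminus A$, $u$ is an internal vertex of $P$. Both of its neighbours on $P$ lie in $M$, since $I$ is independent. Hence the occupied edge is well defined and corresponds to a genuine edge of $H_2$.

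No step here is a real obstacle. The only point needing care is that ``occupied'' is evaluated only against edges of $E(\widehat{L_2},\widehat{R_2})$, not against arbitrary edges of $H_2$. The witnessing edge therefore automatically lies in $E(\widehat{L_2},\widehat{R_2})\setminus\cR$, which is the form needed later when we reroute $P$ through the $2$-expansion, analogously to Lemma~\ref{lemma:vc-1-expansion-respect}.
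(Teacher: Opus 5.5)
Your proof is correct and matches the paper: the paper gives no separate proof, only the remark that the observation follows directly from the definitions of $\cR$-respecting and $\cR$-violating paths, which is the quantifier negation you carry out. Your aside about $u$ being internal is not needed. In any case it already follows from occupancy, which requires $xu,uy\in E(P)$, or from $u\notin A$, rather than from the absence of pendant vertices.
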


The following lemma holds true the proof of which follows from the structural properties of $H_2$ and the characteristics of $A$-paths.

\begin{lemma}
\label{lemma:vc-pair-path-property}
Let $P$ be an $A$-path in $G$.
If $P$ occupies two distinct edges $e_1, e_2 \in E(H_2)$, then $e_1$ and $e_2$ do not share any endpoint in $H_2$.
Similarly, if $P_1$ and $P_2$ are vertex-disjoint $A$-paths such that $e_1$ is occupied by $P_1$ and $e_2$ is occupied by $P_2$, then $e_1$ and $e_2$ do not share any endpoint in $H_2$.
\end{lemma}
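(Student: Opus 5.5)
We argue directly from the definition of $H_2$ and the fact that an $A$-path is a simple path. An edge of $H_2$ is a pair $(\{x,y\},u)$ with $u \in R_2 = I \setminus A$ and $x,y \in M$. Saying that $P$ occupies it means that $x,u,y$ appear consecutively on $P$, with $u$ in the middle. So every occupied edge corresponds to a subpath $x\,u\,y$ of $P$ whose middle vertex is in $I\setminus A$ and whose two ends are in $M$. For two occupied edges, we only need to exclude a shared $R_2$-endpoint and a shared $L_2$-endpoint.

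For the first claim, let $P$ occupy distinct edges $e_1=(\{x_1,y_1\},u_1)$ and $e_2=(\{x_2,y_2\},u_2)$.

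\emph{Shared $R_2$-endpoint.} Suppose $u_1=u_2=u$. Since $P$ is a simple path, $u$ has at most two neighbours on $P$. Both $\{x_1,y_1\}$ and $\{x_2,y_2\}$ must then equal the set of the two $P$-neighbours of $u$. Hence $e_1=e_2$, a contradiction.

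\emph{Shared $L_2$-endpoint.} Suppose $\{x_1,y_1\}=\{x_2,y_2\}=\{x,y\}$ with $u_1\neq u_2$. Then both $u_1$ and $u_2$ are $P$-neighbours of $x$ and of $y$. Each of $x,y$ has at most two neighbours on $P$, so the $P$-neighbours of $x$ are exactly $u_1,u_2$, and likewise for $y$. Consequently $x\,u_1\,y\,u_2\,x$ would be a closed walk inside $P$. Put differently, the subpath of $P$ from $u_1$ to $u_2$ must pass through both $x$ and $y$, while $x$ and $y$ are each adjacent on $P$ to both $u_1$ and $u_2$. This forces the cycle $x u_1 y u_2$ to be contained in $P$, which is impossible for a path.

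For the second claim, let $P_1$ occupy $e_1$ and $P_2$ occupy $e_2$, with $P_1$ and $P_2$ vertex-disjoint. The vertices $x_1,y_1,u_1$ lie on $P_1$ and $x_2,y_2,u_2$ lie on $P_2$. A shared endpoint would mean $u_1=u_2$ or $\{x_1,y_1\}=\{x_2,y_2\}$, and either way some vertex lies on both paths, contradicting disjointness.

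The only step needing any care is the shared $L_2$-endpoint case within a single path, and the degree-at-most-two argument settles it.
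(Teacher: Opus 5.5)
Your proof is correct and follows the same route as the paper. For a single path, a shared endpoint would mean $P$ revisits a vertex; the paper phrases this as ``$P$ would be a walk but not a path'', and you make it precise via the degree-at-most-two argument and the 4-cycle $x u_1 y u_2 x$. For two paths, a shared endpoint directly contradicts their vertex-disjointness.
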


\begin{proof}
Let $e_1, e_2 \in E(H_2)$ be two distinct edges that are occupied by an $A$-path $P$ in $G$.
Let $e_1 = (\{x_1,y_1\}, u_1)$ and $e_2 = (\{x_2,y_2\}, u_2)$.
Since $x_1 u_1, u_1 y_1, x_2 u_2, y_2 u_2$ are the edges of $P$ in $G$, it follows that that $\{x_1,y_1\} \neq \{x_2,y_2\}$ as otherwise $P$ would be a walk but not a path.
Similarly, if $u_1 = u_2$, then also $P$ is a walk but not a path, leading to a contradiction.
Hence, $e_1$ and $e_2$ do not share any endpoint in $H_2$.

When $e_1 \in E(H_2)$ is occupied by an $A$-path $P_1$ and $e_2 \in E(H_2)$ is occupied by an $A$-path $P_2$ such that $P_1$ and $P_2$ are vertex-disjoint, we consider the endpoints of $e_1$ and $e_2$.
Let $e_1 = (\{x_1, y_1\}, u_1)$ and $e_2 = (\{x_2, y_2\}, u_2)$.
Then, $x_1 u_1$ and $u_1 y_1$ are edges of $P_1$ and $x_2 u_2$, and $u_2 y_2$ are the edges of $P_2$.
Since $P_1$ and $P_2$ are vertex-disjoint, $x_1, u_1, y_1$ do not appear in $P_2$ and $x_2, u_2, y_2$ do not appear in $P_1$.
Hence, $e_1$ and $e_2$ must not share any endpoint in $H_2$.
\end{proof}

Let $\ccP$ be a collection of $k$ vertex-disjoint $(A, \ell)$-paths in $G$.
If every $(A, \ell)$-path of ${\ccP}$ is $\cR$-respecting, then we are done.
Hence, consider the paths $Q_1,\ldots, Q_r \in {\ccP}$ that are $\cR$-violating.
Then, for every $i \in [r]$, there is an edge of $E(\widehat{L_2}, \widehat{R_2})$ occupied by $Q_i$ but not appearing in $\cR$.
If $(\{x, y\}, u) \in E(\widehat{L_2}, \widehat{R_2}) \setminus \cR$ is occupied by $Q_i$ for some $i \in [r]$, then $\{x, y\}$ and $u$ are {\em badly occupied} by ${\ccP}$.
In addition, consider a pair $\{x, y\} \in \widehat{L_2}$.
Due to Proposition \ref{lemma:new-expansion-lemma} with $q = 2$, there is a 2-expansion of $\widehat{L_2}$ onto $\widehat{R_2}$.
Hence, for every $\{x, y\} \in \widehat{L_2}$, there are 2 vertices $w, z \in \widehat{R_2}$ such that $(\{x, y\}, w), (\{x, y\}, z) \in \cR$.
We say that $w$ and $z$ are {\em $\cR$-saturated} with $\{x, y\} \in \widehat{L_2}$ and $\{x, y\}$ is the {\em $\cR$-neighbor} of $w$ and $z$.
It is clear that for any vertex of $\widehat{R_2}$ saturated by $\cR$, its $\cR$-neighbor is unique and appears in $\widehat{L_2}$.
When the set $\widehat{L_2}$ is clear from the context, we use $\{x, y\}$. 

\begin{figure}[t]
\centering
	\includegraphics[scale=0.3]{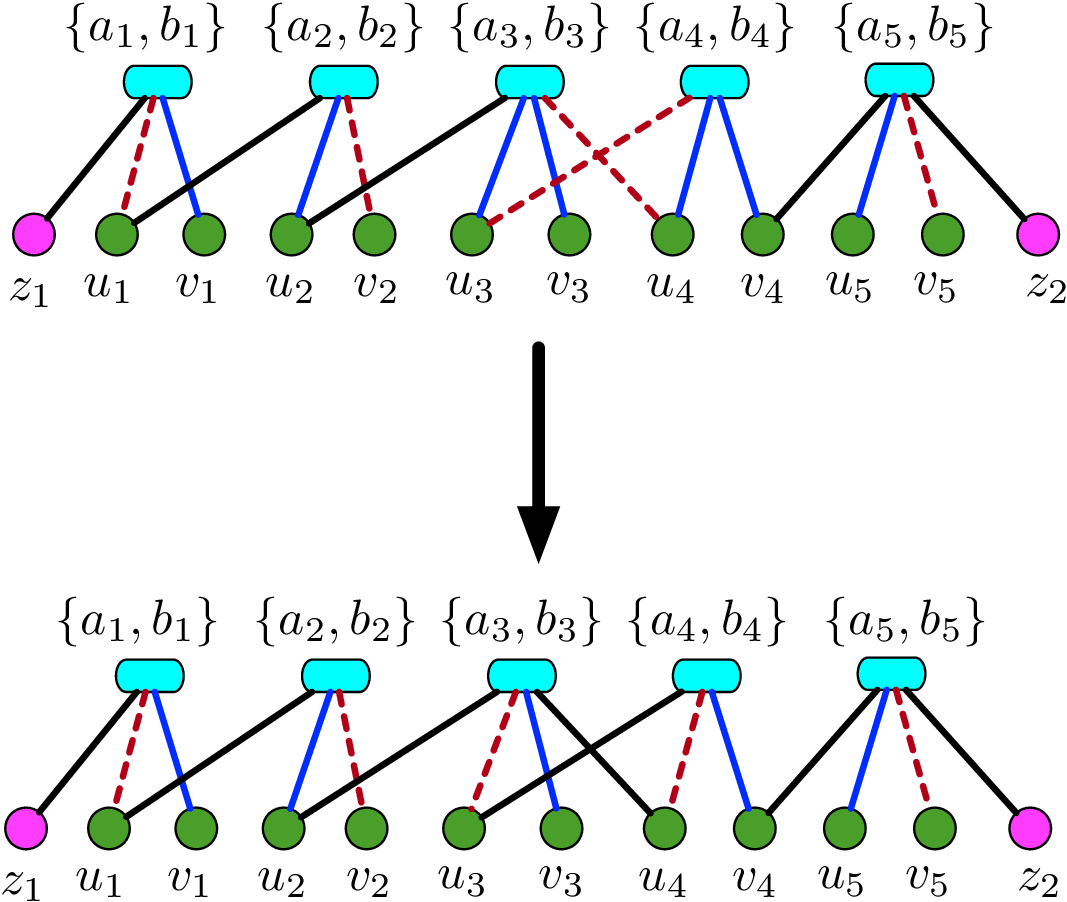}
	\caption{An illustration of Lemma \ref{lemma:vc-respect-2-expansion} proof sketch. Let $\widehat{L_2} = \cup_{i=1}^5 \{\{a_i, b_i\}\}$, $\widehat{R_2} = \big(\cup_{i=1}^5 \{u_i, v_i\}\big) \cup \{z_1, z_2\}$, and $\widehat{R} = \cup_{i=1}^5 \{(\{a_i, b_i\}, u_i), (\{a_i, b_i\}, v_i)\}$.
	Let $\ccP = \{P_1', P_2'\}$ such that $P_1'$ occupies $(\{a_1, b_1\}, u_1)$, $(\{a_3, b_3\}, u_4)$, and $P_2'$ occupies $(\{a_2, b_2\}, v_2), (\{a_4, b_4\}, u_3), (\{a_5, b_5\}, v_5)$.
	After the modification steps, $P_1'$ is converted to $S_1$ and $P_2'$ gets converted to $S_2$.
	Subsequently, $S_1$ occupies $(\{a_1, b_1\}, u_1)$, $(\{a_3, b_3\}, u_3)$ and $S_2$ occupies $(\{a_2, b_2\}, v_2), (\{a_4, b_4\}, u_3), (\{a_5, b_5\}, v_5)$.
	The dotted edges with maroon color represents the edges occupied by the $(A, \ell)$-paths.}
\label{fig:rule-2-path-coversion}
\end{figure}

\begin{lemma}
\label{lemma:char-one-R-respecting}
Let $\ccP$ be a set of $k$ pairwise-vertex-disjoint $(A, \ell)$-paths to $(G, M, A, \ell, k)$ and $\ccP' \subseteq \ccP$ be the set of all $(A,\ell)$-paths that are $\cR$-violating.
Then, for every pair $\{x, y\} \in \widehat{L_2}$ badly occupied by $\ccP'$, the $\cR$-saturated vertices with $\{x, y\}$ cannot be present in any path of $\ccP \setminus \ccP'$.
\end{lemma}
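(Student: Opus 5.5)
The approach is by contradiction, combining three earlier facts: $I = V(G-M)$ is independent (since $M$ is a vertex cover), $N_{H_2}(\widehat{R_2}) \subseteq \widehat{L_2}$ by Proposition \ref{lemma:new-expansion-lemma}, and the disjointness property of Lemma \ref{lemma:vc-pair-path-property}.

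Fix a pair $\{x, y\} \in \widehat{L_2}$ that is badly occupied by $\ccP'$. By definition, some path $Q_i \in \ccP'$ occupies an edge $(\{x, y\}, u) \in E(\widehat{L_2}, \widehat{R_2}) \setminus \cR$. Let $w \in \widehat{R_2}$ be one of the (two) vertices that are $\cR$-saturated with $\{x, y\}$, so $(\{x, y\}, w) \in \cR$. Suppose, for contradiction, that $w$ lies on some path $Q \in \ccP \setminus \ccP'$.

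First, I pin down the edge of $H_2$ that $Q$ occupies at $w$. Since $w \in R_2 = I \setminus A$, the vertex $w$ is not an endpoint of the $A$-path $Q$, so $w$ is an internal vertex of $Q$ and has two neighbours $x', y'$ on $Q$. Because $I$ is independent, $x', y' \in M$. Hence $\{x', y'\} \subseteq N_G(w)$, so $(\{x', y'\}, w)$ is an edge of $H_2$, and $Q$ occupies it. By Proposition \ref{lemma:new-expansion-lemma} we have $N_{H_2}(\widehat{R_2}) \subseteq \widehat{L_2}$, so $\{x', y'\} \in \widehat{L_2}$ and this edge lies in $E(\widehat{L_2}, \widehat{R_2})$.

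Next, I use that $Q$ is $\cR$-respecting. Since $Q \notin \ccP'$, every edge of $E(\widehat{L_2}, \widehat{R_2})$ occupied by $Q$ belongs to $\cR$, so $(\{x', y'\}, w) \in \cR$. In a $2$-expansion each vertex of $\widehat{R_2}$ is incident to at most one edge of $\cR$, so the $\cR$-neighbour of $w$ is unique, and therefore $\{x', y'\} = \{x, y\}$. Thus $Q$ occupies $(\{x, y\}, w)$ while $Q_i$ occupies $(\{x, y\}, u)$.

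Finally, $Q \ne Q_i$ because $Q_i \in \ccP'$ and $Q \notin \ccP'$, so $Q$ and $Q_i$ are vertex-disjoint paths of $\ccP$. By the second part of Lemma \ref{lemma:vc-pair-path-property}, the edges they occupy cannot share an endpoint in $H_2$, yet both contain $\{x, y\}$. This contradiction shows that no $\cR$-saturated vertex of $\{x, y\}$ lies on a path of $\ccP \setminus \ccP'$.

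The only delicate step is the first one: justifying that $w$ is internal to $Q$ with both path-neighbours in $M$, and that the resulting pair lies in $\widehat{L_2}$. This rests on $w \notin A$, on $M$ being a vertex cover, and on the closure property $N_{H_2}(\widehat{R_2}) \subseteq \widehat{L_2}$; the rest is direct bookkeeping.
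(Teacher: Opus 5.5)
Your proof is correct and is essentially the paper's argument with the two key steps in the opposite order. The paper uses vertex-disjointness of $Q$ from the violating path to get $\{x',y'\}\neq\{x,y\}$, and then uses uniqueness of the $\cR$-neighbour of $w$ to conclude that $Q$ is $\cR$-violating; you use that $Q$ is $\cR$-respecting together with uniqueness to force $\{x',y'\}=\{x,y\}$, and then contradict disjointness. You also explicitly justify why $w$ is an internal vertex of $Q$ with both path-neighbours in $M$, which the paper only asserts.
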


\begin{proof}
Suppose that the premise of the statement holds true and $\ccP' = \{P_1',\ldots, P_r'\}$.
Consider any pair $\{x, y\} \in \widehat{L_2}$ that is badly occupied by $\ccP'$.
Then, $\{x, y\}$ appears in $P_i'$ such that $P_i' \in \ccP'$.
We consider the two vertices $w, z \in \widehat{R_2}$ that are $\cR$-saturated with $\{x, y\}$.
Suppose for the sake of contradiction that $w$ appears in a path $Q \in \ccP \setminus \ccP'$.
Since $N_{H_2}(w) \subseteq \widehat{L_2}$, there exists a pair $\{x', y'\} \in \widehat{L_2}$ such that the path $Q$ occupies the pair $(\{x', y'\}, w)$.
But $\{x, y\}$ is a badly occupied $P_i' \in \ccP'$, and $Q \in \ccP \setminus \ccP'$.
Since $P_i \in \ccP'$, and $P_i'$ is vertex-disjoint with $Q$, observe that $\{x, y\} \neq \{x', y'\}$.
Then, $(\{x', y'\}, w) \in E(\widehat{L_2}, \widehat{R_2}) \setminus \cR$ and $(\{x', y'\}, w)$ is occupied by $Q$ such that $Q \in \ccP \setminus \ccP'$.
It implies that $Q$ is $\cR$-violating.
This contradicts our choice that $Q \notin \ccP'$.
Hence, $w$ cannot be present in any path from $\ccP \setminus \ccP'$.
By similar arguments, we can prove that $z$ cannot be present in any path from $\ccP \setminus \ccP'$.
\end{proof}

Now, we prove the next lemma that is crucial to the safeness of the reduction rule.


\begin{lemma}
\label{lemma:vc-respect-2-expansion}
Let $(G, M, A, \ell, k)$ be a yes-instance.
Then, there exists a feasible solution $\ccP$ such that every $(A, \ell)$-path of $\ccP$ is $\cR$-respecting.
\end{lemma}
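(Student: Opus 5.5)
We take a solution $\ccP$ with as few $\cR$-violating paths as possible and repair the violating ones by local swaps of middle vertices, in the spirit of Lemma~\ref{lemma:vc-1-expansion-respect}. Among all feasible solutions, choose $\ccP$ minimizing the number of occupied edges of $E(\widehat{L_2},\widehat{R_2})\setminus\cR$. Let $\ccP'=\{P_1',\ldots,P_r'\}$ be its $\cR$-violating paths. We show that if this number is positive, we can produce another solution with strictly fewer badly occupied edges, a contradiction. Every swap replaces one middle vertex $u\in\widehat{R_2}$ of a subpath $x\,u\,y$ by another vertex of $I\setminus A$ adjacent to both $x$ and $y$. Hence path lengths, endpoints (all in $A$), and the property of being an $A$-path (no new vertex of $A$ is introduced, since $R_2=I\setminus A$) are all preserved.

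\textbf{Replacement step.} Let $B\subseteq\widehat{L_2}$ be the pairs badly occupied by $\ccP'$. By Lemma~\ref{lemma:vc-pair-path-property}, each pair $\{x,y\}\in B$ is occupied by exactly one edge $(\{x,y\},u_{xy})$ of the solution, and distinct occupied edges share no endpoint. Since $\cR$ is a $2$-expansion, each $\{x,y\}$ has two private $\cR$-saturated vertices $w_{xy},z_{xy}$, and these sets are pairwise disjoint over different pairs.

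By Lemma~\ref{lemma:char-one-R-respecting}, $w_{xy},z_{xy}$ lie in no path of $\ccP\setminus\ccP'$. Suppose $w_{xy}$ is used by a path of $\ccP'$. Because $N_{H_2}(w_{xy})\subseteq\widehat{L_2}$, it is the middle of an occupied edge $(\{x',y'\},w_{xy})$ with $\{x',y'\}\neq\{x,y\}$. That edge is not in $\cR$, so $\{x',y'\}\in B$ and $w_{xy}$ is itself a badly occupied middle vertex.

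Now do a simultaneous reassignment. For every $\{x,y\}\in B$, replace $u_{xy}$ in its path by $w_{xy}$ (or by $z_{xy}$ if needed). Every vertex we insert is either previously unused or was the middle of a badly occupied edge that is removed in the same round, since every badly occupied middle is released. So after the round, the occupied middles of pairs in $B$ are exactly the chosen $\cR$-partners, which are distinct by disjointness of the expansion stars. Edges occupied with pairs outside $B$ are unchanged, and their middles are not $\cR$-saturated with any pair in $B$, by the argument above applied to $\ccP\setminus\ccP'$ and to the violating paths.

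\textbf{Verification.} We must check three things.
\begin{itemize}
\item Vertex-disjointness holds, since the new middles are distinct and were either free or freed.
\item Each modified path is still a path: its $M$-vertices are unchanged, and a new middle cannot already lie on the same path, because it was either free or a freed bad middle.
\item No new violation is created.
\end{itemize}
After the round, every occupied edge at a pair in $B$ lies in $\cR$. Hence the number of badly occupied edges becomes $0$, contradicting minimality if it was positive.

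\textbf{Main obstacle.} The delicate point is a freed vertex $u_{xy}$ that was itself $\cR$-saturated with some other pair $\{x'',y''\}\notin B$ occupied respectingly. This causes no conflict, because that pair keeps its own middle. Some care is also needed when a path $P$ uses both $\{x,y\}$ and, elsewhere, the vertex $w_{xy}$. In that case the vertex is freed and reused within the same path, which must be checked not to create a repeated vertex. Since each middle is released and reinserted exactly once, bijectively, this holds.
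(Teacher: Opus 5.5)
Your proposal is correct and follows essentially the paper's argument: strip every badly occupied middle vertex of $\widehat{R_2}$ from the violating paths, and for each badly occupied pair $\{x,y\}\in\widehat{L_2}$ insert one of its private $\cR$-partners, which is either unused or itself a released bad middle. Your explicit justification of that last fact, via Lemma~\ref{lemma:vc-pair-path-property} and the uniqueness of each saturated vertex's $\cR$-neighbor, is more careful than the paper's version, and the minimality framing is harmless but unnecessary, since a single simultaneous round already removes all violations.
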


\begin{proof}
Let $(G, M, A, \ell, k)$ be a yes-instance and $\ccP^*$ be a set of $k$ vertex-disjoint $(A, \ell)$-paths and for the sake of contradiction, assume that there is a nonempty subset $\ccP' = \{P_1',\ldots,P_{r}'\} \subseteq \ccP^*$ such that every $P_i' \in \ccP'$ is $\cR$-violating.

In the first step, for every $j \in [r]$, we initialize $S_j = P_j'$.
Then, we remove all the badly occupied vertices of $\widehat{R_2}$ from the paths $S_1,\ldots,S_r$.
This removal step makes every $S_j$ into a collections of subpaths.
Consider $u \in \widehat{R_2}$ badly occupied by $\ccP'$.
Then, there is a badly occupied pair $\{x, y\} \in \widehat{L_2}$ such that $(\{x, y\}, u)$ is occupied by one of the paths $P_1',\ldots,P_r'$.
Since $u$ is removed from $S_j$, and $(\{x, y\}, u)$ is occupied by $P_j'$, it follows that after the removal step, $x$ and $y$ are in different subpaths of $S_j$ that {\em need to be connected} by some vertex from $I \setminus A$.
Due to a 2-expansion from $\widehat{L_2}$ onto $\widehat{R_2}$, for every badly occupied pair $\{x, y\} \in \widehat{L_2}$, there are two vertices $w, z \in \widehat{R_2}$ such that $(\{x, y\}, w), (\{x, y\}, z) \in \cR$.
Due to Lemma \ref{lemma:char-one-R-respecting}, neither $w$ nor $z$ appears in any path of $\ccP^* \setminus \ccP'$.
Hence, $w$ or $z$ can appear only in paths that are in $\ccP'$.
If $w$ and $z$ appeared in any path of $\ccP'$, then our removal steps also removed $w$ and $z$.
Consider the corresponding pairs $\{x, y\} \in \widehat{L_2}$ that are badly occupied by $P_i'$ such that
\begin{itemize}
	\item $x, y \in S_i$, and
	\item $x$ and $y$ appear in different different components (that are paths) of $S_i$.
\end{itemize}
Our removal step has ensured that if $w, z \in \widehat{R_2}$ that are $\cR$-saturated with $\{x, y\}$, then $w$ and $z$ are also removed from the paths of $\ccP'$, even if they had existed in any path of $\ccP'$.
Moreover, due to Lemma \ref{lemma:char-one-R-respecting}, $w$ and $z$ cannot appear in any path of $\ccP^* \setminus \ccP'$.
 
So, in this second step, we consider every such pair $\{x, y\} \in \widehat{L_2}$ (one by one) badly occupied by $\ccP'$ and the vertices $x$ and $y$ appear in $S_i$.
We consider the $2$-expansion of $\widehat{L_2}$ onto $\widehat{R_2}$ and the two vertices $w, z \in \widehat{R_2}$ such that $(\{x, y\}, w), (\{x, y\}, z) \in \cR$.
As $w$ and $z$ are not appearing in any other path of $\ccP^* \setminus \ccP'$, hence $w$ can be safely added to $S_j$.
In this step, add $w$ to $S_j$.
We repeat this procedure for every such pair that is badly occupied by any path of $\ccP'$.
A short illustrative example of this conversion procedure is highlighted in Figure \ref{fig:rule-2-path-coversion}.

We claim that this procedure constructs a collection of paths $S_1,\ldots,S_{r}$ each of which is $\cR$-respecting.
Consider any edge $(\{\widehat{x}, \widehat{y}\}, \widehat{w})$ of $E(\widehat{L_2}, \widehat{R_2})$ that is occupied by some path from $S_1,\ldots,S_{r}$.
In the first step of the construction, every edge of $E(\widehat{L_2}, \widehat{R_2})$ badly occupied by $S_1,\ldots,S_r$ had been removed.
Subsequently, for the pair $\{\widehat{x}, \widehat{y}\}$, the vertex $\widehat{w}$ was added to $S_i$ in such a way that $\widehat{w}$ is $\cR$-saturated with $\{\widehat{x}, \widehat{y}\}$.
Hence, $(\{\widehat{x}, \widehat{y}\}, \widehat{w})$ cannot be an edge of $E(\widehat{L_2}, \widehat{R_2})$ that is badly occupied by any path from $S_1,\ldots,S_{r}$.
Hence, every $S_i$ is $\cR$-respecting.

It is not very hard to observe that for every $i \in [r]$, $|S_i| = |P_i'|$.
We set $\ccP = (\ccP^* \setminus \ccP') \cup \{S_1,\ldots,S_r\}$.
Since every path of $\ccP^* \setminus \ccP'$ is $\cR$-respecting and every path of $\{S_1,\ldots,S_r\}$ is $\cR$-respecting, therefore every path of $(\ccP^* \setminus \ccP') \cup \{S_1,\ldots,S_r\}$ is $\cR$-respecting.

Finally, we argue that the paths in $\ccP$ are pairwise vertex-disjoint.
Observe that the paths of $\ccP^* \setminus \ccP'$ are pairwise vertex-disjoint.
Consider $S_i$ and $S_j$ for $i \neq j$.
We look at the construction in the two steps.
In the first step, we only remove vertices.
Removal of the vertices does not violate the disjointness property.
Hence, our first step of construction of $S_1,\ldots,S_r$ ensures the disjointness property.
Next, due to the second step of our construction, for every pair $\{x, y\} \in \widehat{L_2}$ such that $x$ and $y$ are in different connected components of $S_i$, we add $w$ such that $w$ is $\cR$-saturated by $\{x, y\}$.
Since the edges of $\cR$ form a matching due to the properties provided by Proposition \ref{lemma:new-expansion-lemma}, $S_i$ and $S_j$ are pairwise vertex-disjoint.
Finally, we consider a path $Q \in \ccP^* \setminus \ccP'$ and a path $S_i$.
Note that the vertices of $S_i$ that intersect $M$ remain unchanged.
If a pair $\{x, y\} \in \widehat{L_2}$ is occupied by $S_i$, then due to 2-expansion of $\widehat{L_2}$ to $\widehat{R_2}$, there are $w, z \in \widehat{R_2}$ that are $\cR$-saturated by $\{x, y\}$.
Due to Lemma \ref{lemma:char-one-R-respecting}, $w, z \in Q$.
By construction, $w$ (or $z$) appear in $S_i$.
Hence, $S_i$ and $Q$ remain pairwise vertex-disjoint.

This completes the proof that $\ccP$ is a collection of $k$ vertex-disjoint $(A, \ell)$-paths each of which is $\cR$-respecting.
\end{proof}

\begin{lemma}
\label{lemma:bounding-I-minus-A}
Let $(G, M, A, k, \ell)$ be the input instance and $(G', M, A, k, \ell)$ be the instance obtained after applying Reduction Rule \ref{rule:vc-compressing-I-minus-A}.
Then, $(G, M, A, k, \ell)$ be yes-instance if and only if $(G', M, A, k, \ell)$ is a yes-instance.
\end{lemma}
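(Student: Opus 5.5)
The proof mirrors Lemma \ref{lemma:vc-compressing-A_I}, with Lemma \ref{lemma:vc-respect-2-expansion} supplying the exchange argument.

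The backward direction ($\Leftarrow$) is immediate. The graph $G' = G - \{u\}$ is a subgraph of $G$, and $A$ is unchanged, so any collection of $k$ vertex-disjoint $(A,\ell)$-paths in $G'$ is also such a collection in $G$.

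For the forward direction ($\Rightarrow$), assume $(G, M, A, k, \ell)$ is a yes-instance.
\begin{enumerate}
\item Invoke Lemma \ref{lemma:vc-respect-2-expansion} to obtain a set $\ccP$ of $k$ vertex-disjoint $(A,\ell)$-paths in $G$, every one of which is $\cR$-respecting.
\item Argue that the deleted vertex $u \in \widehat{R_2} \setminus V(\cR)$ lies on no path of $\ccP$. Suppose some $P \in \ccP$ contains $u$. Since $u \in I \setminus A$ and pendant vertices of $I \setminus A$ were removed earlier, $u$ is an internal vertex of $P$. Because $I$ is independent, both neighbours $x, y$ of $u$ on $P$ lie in $M$. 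Hence $P$ occupies the edge $(\{x,y\}, u)$ of $H_2$.
\item Place this edge in $E(\widehat{L_2}, \widehat{R_2})$. Proposition \ref{lemma:new-expansion-lemma} gives $N_{H_2}(\widehat{R_2}) \subseteq \widehat{L_2}$, so $\{x,y\} \in \widehat{L_2}$.
\item Use that $P$ is $\cR$-respecting: the occupied edge $(\{x,y\},u)$ must then belong to $\cR$. This would make $u$ saturated by $\cR$, contradicting $u \notin V(\cR)$.
\item Conclude that $u \notin V(P)$ for every $P \in \ccP$. Thus every path of $\ccP$ lives in $G - \{u\}$, and $(G', M, A, k, \ell)$ is a yes-instance.
\end{enumerate}

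The only delicate point is step 2, checking that $u$'s two path-neighbours are both in $M$. This needs three facts together: $u$ is not an endpoint, because $u \notin A$; $u$ has two neighbours on $P$, because pendant vertices were deleted; and $I = V(G) \setminus M$ is independent, because $M$ is a vertex cover. Once the occupied edge is identified, the contradiction follows directly from the definitions of $\cR$-respecting and of saturation, so the substantive work is entirely delegated to Lemma \ref{lemma:vc-respect-2-expansion}.
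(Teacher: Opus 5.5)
Your proof is correct and follows the paper's route: you invoke Lemma~\ref{lemma:vc-respect-2-expansion} to obtain an $\cR$-respecting solution and note that it avoids the deleted vertex. You also spell out what the paper leaves implicit, namely that a path through $u$ would occupy an edge of $E(\widehat{L_2},\widehat{R_2})$ that must lie in $\cR$, contradicting $u\notin V(\cR)$; the pendant-vertex deletion is not needed for that step, since $u\notin A$ already makes $u$ internal and hence gives it two path-neighbours.
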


\begin{proof}
The backward direction ($\Leftarrow$) is obvious to see.
Since $G'$ is an induced subgraph of $G$, any set of $k$ pairwise vertex-disjoint $(A, \ell)$-paths in $G'$ is also a collection of pairwise vertex-disjoint $(A, \ell)$-paths in $G$.

We give a proof of the forward direction ($\Rightarrow$).
Since $(G, M, A, k, \ell)$ is a yes-instance, due to Lemma \ref{lemma:vc-respect-2-expansion}, there exists a set $\ccP$ of $k$ vertex-disjoint $A$-paths each of length $\ell$ and each is $\cR$-respecting.
Since every $(A, \ell)$-path appearing in $\ccP$ is a path of $G'$, it follows that $\ccP$ is also a feasible solution, hence a pairwise vertex-disjoint $(A, \ell)$-paths in $(G', M, A, k, \ell)$.
This completes the proof of the forward direction.
 \end{proof}

Using the above mentioned lemmata, we are ready to prove our result of this section that we restate below.

{\vcResult*}

\begin{proof}
Let $(G, M, A, k, \ell)$ be an instance of {\alpp} when $S$ is a vertex cover of $G$ such that $|M| \leq 2{\vc}(G)$.
Let $I = V(G) \setminus M$ and we can assume without loss of generality that $I \setminus A$ has no pendant vertex.
We apply the Reduction Rules \ref{rule:compressing-A_I-part} and \ref{rule:vc-compressing-I-minus-A} exhaustively.
The correctness of Reduction Rule \ref{rule:compressing-A_I-part} follows from Lemma \ref{lemma:vc-compressing-A_I}, and the correctness of Reduction Rule \ref{rule:vc-compressing-I-minus-A} follows from Lemma \ref{lemma:bounding-I-minus-A}.
Once these two reduction rules are not applicable we ensure that $|A \cap I| \leq 2|M \setminus A| \leq 2|M|$ and $|I \setminus A| \leq 2{{|M|}\choose{2}}$.
It means that an irreducible instance $(G, M, A, k, \ell)$ satisfies the property that $G - M$ has at most $2{{|M|}\choose{2}} + 2|M|$ vertices.
As $|M| \leq 2{\vc}(G)$, it follows that {\ALP} parameterized by ${\vc}$, the vertex cover number admits a kernel with $\OO({\vc}^2)$ vertices.
 \end{proof}

%% file: conclusion.tex

\section{Conclusion and Future Research}
\label{sec:conc} 
Our results have extended the works of Belmonte et al. \cite{BelmonteHKKKKLO22} by addressing the parameterized complexity status of  {\ALP} (\alpp) across numerous structural parameters. It was known from Belmonte et al. \cite{BelmonteHKKKKLO22} that \alpp is {\woc} when parameterized by ${\pw} + |A|$. We prove an intractability result for a much larger parameter of $\dtp(G)+|A|$.
Also, the parameterized complexity of \alpp when parameterized by the combined parameter of cliquewidth and $\ell$ was an open question \cite{BelmonteHKKKKLO22}.
While that problem still remains open, we have been successful in making slight progress by obtaining an \fpt algorithm for the problem when parameterized by the combined parameter of ${\cvd}(G)$ and $\ell$. 
Another direction to explore would be to determine the fixed-parameter tractability status of the problem when parameterized by $\cvd(G)$ only. 
It would be interesting to explore if this \fpt result can be generalized to the combined parameter of cograph vertex deletion set size and $\ell$ since cographs are graphs of cliquewidth at most two. 
We believe that the positive results presented in this paper are not optimal and some of those results can be improved with more involved structural analysis.
Therefore, improving the efficiency of our positive results are exciting research direction for future works.

%% file: AL-path.bbl
\begin{thebibliography}{10}

\bibitem{Babu0R22}
Jasine Babu, R.~Krithika, and Deepak Rajendraprasad.
\newblock Packing arc-disjoint 4-cycles in oriented graphs.
\newblock In Anuj Dawar and Venkatesan Guruswami, editors, {\em 42nd {IARCS}
  Annual Conference on Foundations of Software Technology and Theoretical
  Computer Science, {FSTTCS} 2022, December 18-20, 2022, {IIT} Madras, Chennai,
  India}, volume 250 of {\em LIPIcs}, pages 5:1--5:16. Schloss Dagstuhl -
  Leibniz-Zentrum f{\"{u}}r Informatik, 2022.

\bibitem{BandopadhyayBMS24}
Susobhan Bandopadhyay, Aritra Banik, Diptapriyo Majumdar, and Abhishek Sahu.
\newblock Tractability of packing vertex-disjoint a-paths under length
  constraints.
\newblock In Rastislav Kr{\'{a}}lovic and Anton{\'{\i}}n Kucera, editors, {\em
  49th International Symposium on Mathematical Foundations of Computer Science,
  {MFCS} 2024, August 26-30, 2024, Bratislava, Slovakia}, volume 306 of {\em
  LIPIcs}, pages 16:1--16:18. Schloss Dagstuhl - Leibniz-Zentrum f{\"{u}}r
  Informatik, 2024.

\bibitem{BelmonteHKKKKLO22}
R{\'{e}}my Belmonte, Tesshu Hanaka, Masaaki Kanzaki, Masashi Kiyomi, Yasuaki
  Kobayashi, Yusuke Kobayashi, Michael Lampis, Hirotaka Ono, and Yota Otachi.
\newblock {Parameterized Complexity of (A, l)-Path Packing}.
\newblock {\em Algorithmica}, 84(4):871--895, 2022.

\bibitem{boral2016fast}
A.~Boral, M.~Cygan, T.~Kociumaka, and M.~Pilipczuk.
\newblock A fast branching algorithm for cluster vertex deletion.
\newblock {\em Theory of Computing Systems}, 58(2):357--376, 2016.

\bibitem{BruhnU22}
Henning Bruhn and Arthur Ulmer.
\newblock Packing a-paths of length zero modulo four.
\newblock {\em Eur. J. Comb.}, 99:103422, 2022.

\bibitem{ChalermsookCKLM20}
Parinya Chalermsook, Marek Cygan, Guy Kortsarz, Bundit Laekhanukit, Pasin
  Manurangsi, Danupon Nanongkai, and Luca Trevisan.
\newblock From gap-exponential time hypothesis to fixed parameter tractable
  inapproximability: Clique, dominating set, and more.
\newblock {\em {SIAM} J. Comput.}, 49(4):772--810, 2020.

\bibitem{Chudnovsky2008AnAF}
M.~Chudnovsky, William~H. Cunningham, and James~F. Geelen.
\newblock An algorithm for packing non-zero a-paths in group-labelled graphs.
\newblock {\em Combinatorica}, 28:145--161, 2008.

\bibitem{ChudnovskyGGGLS06}
Maria Chudnovsky, Jim Geelen, Bert Gerards, Luis~A. Goddyn, Michael Lohman, and
  Paul~D. Seymour.
\newblock Packing non-zero a-paths in group-labelled graphs.
\newblock {\em Comb.}, 26(5):521--532, 2006.

\bibitem{CyganFKLMPPS15}
Marek Cygan, Fedor~V. Fomin, Lukasz Kowalik, Daniel Lokshtanov, D{\'{a}}niel
  Marx, Marcin Pilipczuk, Michal Pilipczuk, and Saket Saurabh.
\newblock {\em Parameterized Algorithms}.
\newblock Springer, 2015.

\bibitem{DellHMTW14}
Holger Dell, Thore Husfeldt, D{\'{a}}niel Marx, Nina Taslaman, and Martin
  Wahlen.
\newblock Exponential time complexity of the permanent and the tutte
  polynomial.
\newblock {\em {ACM} Trans. Algorithms}, 10(4):21:1--21:32, 2014.

\bibitem{Diestel-Book}
Reinhard Diestel.
\newblock {\em Graph Theory, 4th Edition}, volume 173 of {\em Graduate texts in
  mathematics}.
\newblock Springer, 2012.

\bibitem{DowneyF13}
Rodney~G. Downey and Michael~R. Fellows.
\newblock {\em Fundamentals of Parameterized Complexity}.
\newblock Texts in Computer Science. Springer, 2013.

\bibitem{FellowsHRV09}
Michael~R. Fellows, Danny Hermelin, Frances~A. Rosamond, and St{\'{e}}phane
  Vialette.
\newblock On the parameterized complexity of multiple-interval graph problems.
\newblock {\em Theor. Comput. Sci.}, 410(1):53--61, 2009.

\bibitem{FominLLSTZ19}
Fedor~V. Fomin, Tien{-}Nam Le, Daniel Lokshtanov, Saket Saurabh, St{\'{e}}phan
  Thomass{\'{e}}, and Meirav Zehavi.
\newblock Subquadratic kernels for implicit 3-hitting set and 3-set packing
  problems.
\newblock {\em {ACM} Trans. Algorithms}, 15(1):13:1--13:44, 2019.

\bibitem{MengerTheoremLength}
Petr~A. Golovach and Dimitrios~M. Thilikos.
\newblock Paths of bounded length and their cuts: Parameterized complexity and
  algorithms.
\newblock {\em Discret. Optim.}, 8:72--86, 2011.

\bibitem{HiraiP14}
Hiroshi Hirai and Gyula Pap.
\newblock Tree metrics and edge-disjoint s-paths.
\newblock {\em Math. Program.}, 147(1-2):81--123, 2014.

\bibitem{JacobMR23}
Ashwin Jacob, Diptapriyo Majumdar, and Venkatesh Raman.
\newblock Expansion lemma - variations and applications to polynomial-time
  preprocessing.
\newblock {\em Algorithms}, 16(3):144, 2023.

\bibitem{Mulmuley1987}
Ketan Mulmuley, Umesh~V. Vazirani, and Vijay~V. Vazirani.
\newblock Matching is as easy as matrix inversion.
\newblock In {\em Proceedings of the Nineteenth Annual ACM Symposium on Theory
  of Computing}, pages 345--354. Association for Computing Machinery, 1987.

\bibitem{Pap08}
Gyula Pap.
\newblock Packing non-returning a-paths algorithmically.
\newblock {\em Discret. Math.}, 308(8):1472--1488, 2008.

\end{thebibliography}
